\definecolor{ao(english)}{rgb}{0.0, 0.5, 0.0}
\RenewDocumentCommand{\title}{om}{%
   \IfNoValueTF{#1}
     {\gdef\shorttitle{Resonance asymptotics with degenerate crossings}}%
     {\gdef\shorttitle{#1}}%
   \gdef\@title{#2}%
}
\DeclareFontFamily{U}{mathb}{\hyphenchar\font45}
\DeclareFontShape{U}{mathb}{m}{n}{ <-6> matha5 <6-7> matha6 <7-8>
mathb7 <8-9> mathb8 <9-10> mathb9 <10-12> mathb10 <12-> mathb12 }{}
\DeclareSymbolFont{mathb}{U}{mathb}{m}{n}
\DeclareMathAccent{\abxring}{0}{mathb}{"38}
\DeclareFontFamily{U}{mathb}{\hyphenchar\font45}
\DeclareFontShape{U}{mathb}{m}{n}{ <-6> matha5 <6-7> matha6 <7-8>
mathb7 <8-9> mathb8 <9-10> mathb9 <10-12> mathb10 <12-> mathb12 }{}
\DeclareSymbolFont{mathb}{U}{mathb}{m}{n}
\newcommand{\N}{\mathbb{N}}
\newcommand{\Z}{\mathbb{Z}}
\newcommand{\R}{\mathbb{R}}
\newcommand{\C}{\mathbb{C}}
\newcommand{\W}{{\mathcal W}}
\newcommand{\cA}{{\mathcal A}}
\newcommand{\cR}{\mathcal{R}}
\newcommand{\p}{\partial}
\newcommand{\dl}{\delta}
\newcommand{\pphi}{\varphi}
\newcommand{\til}{\widetilde}
\newcommand{\re}{{\rm Re}\hskip 1pt }
\newcommand{\im}{{\rm Im}\hskip 1pt }
\newcommand{\ord}{{\mathcal O}}
\newcommand{\ope}[1]{\operatorname{#1}}
\newcommand{\mc}[1]{\mathcal{#1}}
\newcommand{\qtext}[1]{\quad\text{#1 }\ }
\newcommand{\be}{\begin{equation}}
\newcommand{\ee}{\end{equation}}
\newcommand{\ben}{\begin{equation*}}
\newcommand{\een}{\end{equation*}}
\newcommand{\inc}{{
\flat}}
\newcommand{\out}{{
\sharp}}
\newcommand{\dir}{{\bullet}}
\newcommand{\edg}{{\bm{e}}}
\newcommand{\boundellipse}[3]
{(#1) ellipse (#2 and #3)
}
\theoremstyle{theorem} 
\newtheorem{theorem}{Theorem}
\newtheorem{lemma}{Lemma}[section]
\newtheorem*{example}{Example}
\newtheorem{proposition}[lemma]{Proposition}
\newtheorem{remark}[lemma]{Remark}
\newtheorem{corollary}[lemma]{Corollary}
\newtheorem{As}{Assumption}
\theoremstyle{definition} 
\newtheorem{definition}[lemma]{Definition}
\numberwithin{equation}{section}
  {\setlength{\baselineskip}{1.5\baselineskip}

\title{Semiclassical resonance asymptotics for systems with degenerate 
crossings of classical trajectories 
}
\author{Marouane Assal}
\address{Marouane Assal, Univ. Bordeaux, CNRS, Bordeaux INP, IMB, UMR 5251, F-33400 Talence, France. e-mail: marouane.assal@math.u-bordeaux.fr} 

\author{Setsuro Fujiie}
\address{Setsuro Fujiie, Department of Mathematical Sciences, 
Ritsumeikan University, 111 Noji-Higashi, Kusatsu, 525-8577,  Japan. 
e-mail: fujiie@fc.ritsumei.ac.jp}
\author{Kenta Higuchi}
\address{Kenta Higuchi, Graduate School of Science and Engineering, Ehime University/ Bunkyocho 3, Matsuyama, Ehime, 790-8577, Japan.
 e-mail: higuchi.kenta.vf@ehime-u.ac.jp}

\begin{document}

\maketitle

\begin{abstract}
This paper is concerned with the asymptotics of resonances in the semiclassical limit $h\to 0^+$ for $2\times 2$ matrix Schr\"odinger operators in one dimension. We study the case where the two underlying classical Hamiltonian trajectories cross tangentially in the phase space.
In the setting that one of the classical trajectories is a simple closed curve whereas the other one is non-trapping, we show that the imaginary part of the resonances is of order $h^{(m_0+3)/(m_0+1)}$, where $m_0$ is the maximal contact order of the crossings.
This principal order comes from the subprincipal terms of the transfer matrix at crossing points which describe the propagation of microlocal solutions from one trajectory to the other. In addition, we compute explicitly the leading coefficient of the resonance widths in terms of the probability amplitudes associated with all the \textit{generalized classical trajectories} escaping to infinity from the closed trajectory.

\end{abstract}

\section{Introduction}
In this work, we are interested in the asymptotic distribution of quantum resonances in the semiclassical limit for matrix Schr\"odinger operators, especially in the effect of  crossings of the underlying classical trajectories. 
We study the one-dimensional model of Schr\"odinger systems arising in the context of the Born-Oppenheimer approximation of diatomic molecular Hamiltonians:
\begin{equation}\label{System0} P = P(h):= 
\begin{pmatrix}
 P_1 & h U\\
h U^* & P_2
\end{pmatrix}.
\end{equation} 
The diagonal part consists of scalar Schr\"odinger operators 
\ben
P_j=P_j(h):=(h D_x)^2  + V_j(x) \quad (j=1,2), \;\;\;\; D_x:= -i \frac d{dx},
\een
with smooth real-valued potentials $V_1$ and $V_2$ on $\mathbb R$, and the anti-diagonal part describing the interaction between $P_1$ and $P_2$ consists of ($h$ times) a first order semiclassical differential operator $U$ and its adjoint $U^*$ with smooth real-valued coefficients on $\mathbb R$. 
We consider the situation where, at an energy-level $E_0$, the classical trajectories of $p_1$ are trapping while the ones of $p_2$ are non-trapping. 
Here $p_j(x,\xi):= \xi^2+ V_j(x)$ is the classical Hamiltonian associated with $P_j$ $(j=1,2)$, and its classical trajectories are the integral curves of the corresponding Hamiltonian vector field $H_{p_j}=\p_\xi p_j\p_x-\p_x p_j\p_\xi$. In our one-dimensional framework, the classical trajectories of $p_j$ coincide with the characteristic set $\Gamma_j(E_0):= p_j^{-1}(E_0)$.  
The interaction between $P_1$ and $P_2$ may shift the real eigenvalues near $E_0$ of $P_1$ to the lower half complex plane as resonances of $P$ (Fermi's golden rule). 
The semiclassical asymptotics of the imaginary parts (widths) of the resonances has been studied in this setting in \cite{Ma,Na,Ba,As} and \cite{FMW1,FMW2,FMW3}. 
Under a simple well condition on $P_1$, the first-mentioned series of works studied the exponential decay rate in crossing free cases while the second one gave polynomial asymptotics in the presence of crossings.

The aim of this paper is to generalize \cite{FMW3}. In that paper, a simplest model where the classical trajectories of $p_1$ and $p_2$ cross transversally in the phase space $T^*\mathbb R= \R_x\times \R_\xi$ is considered. The asymptotic of the resonance widths is explicitly computed at the principal level. In particular, its polynomial power in $h$ is 2, which is determined only by the transversality of the crossings. 
In the present paper, we relax the transversality condition on crossings, and give a precise asymptotics of the resonance widths for degenerate crossings. In this case, we prove that the polynomial power of the principal asymptotics of the resonance width is $(m+3)/(m+1)$ (Theorem~\ref{MAINTH}), where $m$ stands for the contact order \eqref{CCOO} of the crossings (including the transversal case $m=1$). The number of crossing points is allowed to be finitely many. In that case, $m$ should be the maximum contact order among them.

The power $(m+3)/(m+1)$ is determined by the microlocal transfer matrix at each crossing point. The transfer matrix is a $2\times2$ matrix describing the microlocal outgoing data in terms of the incoming data, and we show in Theorem~\ref{thm:ConnectionAllowed} that it is approximated by $I_2+h^{1/(m+1)}T_{\ope{sub}}$ with an anti-diagonal matrix $T_{\ope{sub}}$. In the transversal case,  this asymptotic formula was computed in \cite{FMW3} for matrix Schr\"odinger operators and generalized in \cite{AsFu} for general pseudodifferential systems by applying the normal form reduction due to Helffer-Sj\"ostrand \cite{HeSj3} (see also \cite{Sj1,Cdv3}). 
In our degenerate case, such a reduction is not valid and we develop a new approach which brings the study of microlocal solutions to the system near a given crossing point $\rho=(\rho_x,\rho_\xi)\in \Gamma_1\cap \Gamma_2$ to that of exact local solutions near $\rho_x$ constructed by the method established in \cite{FMW1}.
 Microlocally near each classical trajectory, incoming to or outgoing from a crossing point, we compute the asymptotic behavior of these solutions. The principal term of the anti-diagonal entries is expressed as the inner product of the WKB solutions to $P_1$ and $P_2$, and its asymptotic expansion is computed by the stationary phase method. The critical points of this integral are the crossing points, and the degeneracy of the critical point coincides with the contact order of the crossing point. From this, we obtain the exponent $1/(m+1)$ of the anti-diagonal entries. 
Here we only consider the generic case, where the energy level is different from the crossing level. 
Otherwise, a crossing point is also a turning point, and the WKB approximation is not valid there. In such a case the analysis would be different.

A Bohr-Sommerfeld type ``microlocal quantization condition" for resonances is given using the above microlocal transfer matrices. Resonances will be approximated by energies for which the ``monodromy matrix" has one as eigenvalue (we call them pseudo-resonances). 
This is an extension of the simple well problem to our system, where a non-trapping trajectory $\Gamma_2$ exists in addition to the closed curve $\Gamma_1$. The usual Bohr-Sommerfeld quantization condition for the scalar operator $P_1$ is the condition that the multiplication factor $-\exp(2i\cA(E)/h)$ after a continuation along $\Gamma_1$ of a microlocal solution equals one, where $2\cA(E)$ is the action of $\Gamma_1$. 
We justify this approximation by pseudo-resonances in the following way. The non-existence of resonance outside an $\ord(h^\infty)$-neighborhood of the pseudo-resonances is proved by applying the argument in \cite{BFRZbook}.  The one-to-one correspondence between the pseudo-resonances and the resonances is shown using the Kato-Rellich theorem. This microlocal method enables us to consider not only a simple model but also general cases with finitely many crossings.

To obtain the principal term of the widths of pseudo-resonances, one needs to compute the asymptotic expansion of the microlocal transfer matrix up to  $\ord(h^{2/(m+1)})$. Instead, we employ the following additional argument. 
The width of a resonance can be expressed as $h$ times the square of the amplitude of the corresponding suitably normalized resonant state near infinity.  The asymptotic behavior of the resonant state coincide with that of the microlocal WKB asymptotics on the outgoing trajectories tending to infinity. We will see in Lemma~\ref{lem:PAC} that, for each outgoing trajectory tending to infinity,  this microlocal behavior is given as the sum of ``probability amplitude" over all ``paths" from $\Gamma_1$ to the trajectory. Here, paths are defined by crossing points and classical trajectories as in the graph theory (see Section~\ref{sec:prfMain}). The probability amplitude associated with a path is defined in Definition~\ref{def:PA} in terms of the actions of the classical trajectories and entries of the microlocal transfer matrices at the crossing points. 

The paper is organized as follows. The next section is devoted to the precise assumptions and the statement of our main result. In section \ref{SEC3}, we establish a microlocal connection formula at a crossing point in the phase space (Theorems \ref{thm:T-exists} and \ref{thm:ConnectionAllowed}).
In the last section, we define the monodromy matrix and pseudo-resonances by applying the transfer matrix given in the previous section, and show the asymptotic distribution of resonances (Theorem~\ref{MAINTH}) in relation with that of pseudo-resonances.

\section{Assumptions and main result}\label{sec:results}  
\subsection{Precise assumptions} Let $p_j(x,\xi):= \xi^2+ V_j(x)$, $(x,\xi)\in T^*\mathbb R$, be the classical Hamiltonian associated with $P_j$, and let $H_{p_j}:=2\xi\p_x-V_j'(x)\p_\xi$ be the corresponding Hamiltonian vector-field, $j=1,2$. For an energy $E\in \mathbb R$, denote by $\Gamma_j(E)$ the characteristic set associated with $P_j$ given by 
\ben
\Gamma_j(E):=p_j^{-1}(E)=\{(x,\xi)\in T^*\R;\,p_j(x,\xi)=E\} \quad (j=1,2),
\een
and set 
\ben
\Gamma(E):= \Gamma_1(E) \cup \Gamma_2(E).
\een
Introduce the interaction operator 
\ben
U=U(x,hD_x) := r_0(x) + i  r_1(x) h D_x.
\een
Throughout the paper, we assume the following condition.

\begin{As}\label{H1}
$V_1,V_2,r_0$ and $r_1$ are smooth real-valued functions on $\R$. They can be extended to bounded holomorphic functions on an angular complex domain near infinity of the form
\ben
\Sigma=\Sigma(\theta_0,R_0):=\{x\in\C;\,\left|\im x\right|<(\tan\theta_0)\left|\re x\right|,\ \left|\re x\right|>R_0\}
\een
for some constants $0<\theta_0<\pi/2$ and $R_0\ge0$.  
Moreover, for each $j=1,2$, $V_j$ 
admits limits $V_j^\pm$ as $\re x\to \pm\infty$ in $\Sigma$.
\end{As}

In the sequel we fix a reference energy-level $E_0\in \mathbb R$ such that $E_0 \neq V_j^\pm$, $j=1,2$. For $\delta_1,\delta_2>0$ possibly $h$-dependent, one introduces the complex box 
\be\label{DEFR}
\cR=\cR_{E_0}(\delta_1,\delta_2) :=[E_0-\delta_1,E_0+\delta_1] +i[-\delta_2,\delta_2].
\ee
Under Assumption~\ref{H1}, the operator $P$ is self-adjoint in $L^2(\mathbb R;\mathbb C^2)$ with domain $H^2(\mathbb R;\mathbb C^2)$. One can define the resonances of $P$ in $\cR$, for sufficiently small $\dl_1$ and $\dl_2$, e.g., 
as the values $E\in\C$ such that the equation $Pw=Ew$ has a non-trivial outgoing solution $w$, that is, a non-identically vanishing solution such that for some $\theta>0$ sufficiently small, the function $x\mapsto w(\zeta_\theta(x))$ is in $L^2(\R;\C^2)$ (see e.g. \cite{Hu}), where 
$\zeta_\theta\in C^\infty(\R;\C)$ is such that $\zeta_\theta(x) = x$ for $ \left|x\right|\le R_0$ and $\zeta_\theta(x) = xe^{i\theta}$ for $ |x|\ge2R_0$. 
Equivalently, the resonances can be defined as the eigenvalues of a non-self adjoint  operator $P_\theta$ obtained from $P$ by the method of analytic distortion (see Section~\ref{sec:prfMain}). 
In particular, the imaginary part of each resonance is negative.  We denote $\ope{Res}(P)$ the set of resonances of $P$.

We now make some assumptions on the geometry of  the characteristic set $\Gamma(E_0)$ and on the set of crossing points  
\ben
\mathcal{V}=\mathcal{V}(E_0) :=\Gamma_1(E_0)\cap\Gamma_2(E_0).
\een

\begin{As}\label{H2}
\begin{itemize}
\item[(i)] There exist real numbers $a_0<b_0$ such that 
\ben
\frac{V_1(x)-E_0}{(x-a_0)(x-b_0)}>0\quad \forall x\in \R.
\een
\item[(ii)]   The classically allowed region $\{x\in \mathbb R;\, V_2(x)\le E_0\}$ consists of unbounded intervals, and $V_2'(x)\ne 0$ for all $x\in \{x\in \mathbb R; V_2(x)=E_0\}$.
\end{itemize}
\end{As}

Assumption \ref{H2} (i) means that $V_1$ admits a simple well at $E_0$   and hence $\Gamma_1(E_0)$ is a simple closed curve symmetric with respect to the $x$-axis, while (ii) implies that $E_0$ is a non-trapping energy for $p_2$.

For $E$ near $E_0$, we define the action integral 
\ben\label{AAC}
\cA(E):=  \int_{\Gamma_1(E)} \xi dx.
\een
This is a smooth function near $E_0$. Under Assumption \ref{H2} (i), it is well known that the spectrum of the scalar Schr\"odinger operator $P_1$ in a small neighborhood of $E_0$ consists of eigenvalues approximated by the roots of the Bohr-Sommerfeld quantization rule (see e.g., \cite{ILR, Ya}) 
\begin{equation}\label{BSR}
\cos\left( \frac{\mathcal{A}(E)}{2h}\right) = 0.
\end{equation}
For $\delta_1,h>0$ small enough, we define an $h$-dependent discrete set $\mathfrak{B}_h\subset [E_0-\delta_1, E_0+\delta_1]$ by 
\begin{equation}\label{DEFFF}
\mathfrak{B}_h=\mathfrak{B}_{h}(\delta_1) := \left\{E\in [E_0-\delta_1,E_0+\delta_1];\, E\  \text{satisfies}\, \eqref{BSR} \right\}.
\end{equation}
The distance of two successive points in $\mathfrak{B}_h$ is of order $\mathcal{O}(h)$, more precisely, if $\delta_1= \mathcal{O}(h)$, then for any $E\in \mathfrak{B}_h$ one has 
\ben
\ope{dist}(\mathfrak{B}_h\setminus\{E\},E)=\frac {2\pi h}{\cA'(E_0)}+\ord(h^2).
\een

\begin{As}\label{H3}
The set of crossing points $\mathcal{V}$ is non-empty, and $\mathcal{V}\cap \{\xi=0\}=\emptyset$. Moreover, the contact order of $\Gamma_1(E_0)$ and $\Gamma_2(E_0)$ at each crossing point $\rho\in \mathcal{V}$ defined by
\begin{equation}\label{CCOO}
m_\rho:= \min \left\{m\in \mathbb N; \, H_{p_1}^{m}p_2(\rho)\neq0\right\}=\min \left\{m\in \mathbb N; \, H_{p_2}^{m}p_1(\rho)\neq0\right\}
\end{equation}
is finite.
\end{As}

Notice that since $\Gamma_1(E_0)$ is compact, the number of crossing points is finite under the above assumption. The condition $\mathcal{V}\cap \{\xi=0\}=\emptyset$ means the non-existence of a crossing-turning point. 
It is equivalent to $V_2(a_0)\neq E_0$ and $V_2(b_0)\neq E_0$. Since $\rho=(\rho_x, \rho_\xi)\in \mathcal{V}$ implies $\breve\rho:=(\rho_x, -\rho_\xi)\in \mathcal{V}$, the crossing points appear in pairs. 
The contact order $m_\rho$ at a crossing point $\rho=(\rho_x, \rho_\xi)\in \mathcal{V}$ coincides with the contact order of the potentials $V_1$ and $V_2$ at $\rho_x$, more precisely one has 
\begin{equation}\label{MRO}
m_\rho=
\min\{k\in\N; V_1^{(k)}(\rho_x)\ne V_2^{(k)}(\rho_x)\} .
\end{equation}
In fact, for any $m\in \mathbb N$ and $(x,\xi)\in \mathbb R^2$, one has 
 $$
H_{p_1}^m p_2 (x,\xi) = 2^m (V_2^{(m)}(x)-V_1^{(m)}(x)) \xi^m + \sum_{k=0}^{m-1} \alpha_{k,m}(x) \xi^k,
$$
with $\alpha_{k,m}$ of the form $\alpha_{k,m}(x) = \sum_{j=0}^{m-1} (V_2^{(j)}(x) - V_1^{(j)}(x)) \beta_{j,k,m}(x)$, where $\beta_{j,k,m}$ are smooth functions on $\mathbb R$.

 

\subsection{Main result}
In the following, we use the same letter $U$ for the principal symbol of the operator $U$, that is, $U(x,\xi):= r_0(x)+ir_1(x) \xi$, $(x,\xi)\in T^* \mathbb R$. Set
\ben\label{M0}
m_0:=\max \left\{m_\rho; \rho\in \mathcal{V}  \right\}.
\een
Our main result is the following.
\begin{theorem}\label{MAINTH}
Let Assumptions \ref{H1}, \ref{H2} and \ref{H3} hold, and let $\mathfrak{B}_h$ and $\mathcal R$ be given by \eqref{DEFFF} and \eqref{DEFR} with $\delta_1=\delta_2=Lh$ for an arbitrarily fixed $L>0$. Then there exist a bijective map 
\ben
z_h : \mathfrak{B}_h \to {\rm Res} (P)\cap {\mathcal R}
\een 
and a non-negative smooth function $\mathcal{D}(E)=\mathcal{D}(E;h)$ of $E$ near $E_0$ uniformly bounded with respect to $h>0$ small enough (given explicitly by \eqref{eq:Conclusion}) such that for any $E\in \mathfrak{B}_h$ one has 
\begin{equation}\label{Realpart}
\vert z_h(E) - E \vert =  \mathcal{O}(h^{\frac{m_0+3}{m_0+1}}),
\end{equation}
\begin{equation}\label{Impart}
{\rm Im}\, z_h(E) = -\mathcal{D}(E)h^{\frac{m_0+3}{m_0+1}}+\ord(h^{\frac{m_0+3}{m_0+1}+\kappa}),
\end{equation}
uniformly as $h\to0^+$, where $\kappa=\min\left\{1/3,\,1/(m_0+1)\right\}$.
\end{theorem}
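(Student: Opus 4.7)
The plan is to combine the microlocal transfer formula of Theorem~\ref{thm:ConnectionAllowed} with a global monodromy argument along $\Gamma_1(E)$, and then extract the imaginary parts by computing the amplitude of the resonant state on the outgoing branches of $\Gamma_2(E)$ at infinity.

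\textbf{Step 1: Monodromy matrix and pseudo-resonances.} Starting from a base point on $\Gamma_1(E)$ (not a crossing), I propagate a microlocal basis of solutions along $\Gamma_1(E)$. Between two successive crossing points, propagation is governed by the scalar WKB theory for $P_1$, contributing an oscillating factor $\exp(i\int\xi\,dx/h)$ times the standard half-density transport. At each crossing $\rho\in\mathcal V$, I apply the transfer matrix of Theorem~\ref{thm:ConnectionAllowed}, which has the form $I_2+h^{1/(m_\rho+1)}T_{\ope{sub}}(\rho)+\cO(h^{2/(m_\rho+1)})$ with $T_{\ope{sub}}(\rho)$ anti-diagonal. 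Going once around $\Gamma_1(E)$ produces a monodromy matrix $\mathcal M(E;h)$ whose diagonal is of the form $-e^{i\mathcal A(E)/h}\bigl(1+h^{2/(m_0+1)}\mathcal D_0(E)+\cO(h^{2/(m_0+1)+\kappa})\bigr)$, the new leading correction coming from products of two anti-diagonal entries (one at a crossing where the microlocal solution leaves $\Gamma_1$ and one where it returns), while branches that escape through the non-trapping $\Gamma_2$ contribute only to the off-diagonal, non-returning part. I then define the set $\widetilde{\mathfrak B}_h\subset\mathcal R$ of \emph{pseudo-resonances} as the complex solutions of $\det(\mathcal M(E;h)-I)=0$ inside $\mathcal R$. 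Because $\mathcal M(E;h)$ is a holomorphic $\cO(h^{2/(m_0+1)})$-perturbation of the Bohr–Sommerfeld monodromy $-e^{i\mathcal A(E)/h}$, an implicit function argument (or Rouché) produces a natural bijection $\widetilde z_h:\mathfrak B_h\to\widetilde{\mathfrak B}_h$ with $\widetilde z_h(E)-E=\cO(h^{(m_0+3)/(m_0+1)})$ and $\im \widetilde z_h(E)<0$.

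\textbf{Step 2: From pseudo-resonances to resonances.} I would then show that $\mathrm{Res}(P)\cap\mathcal R$ and $\widetilde{\mathfrak B}_h$ are in bijection with the required error. The two ingredients are: (i) a microlocal a priori estimate showing that for $E\in\mathcal R$ at distance $\ge h^N$ from $\widetilde{\mathfrak B}_h$, any outgoing solution of $(P-E)w=0$ must be zero, so $E\notin\mathrm{Res}(P)$; this is the standard argument of \cite{BFRZbook}, implemented here by tracking microlocal solutions first along the non-trapping trajectory $\Gamma_2$ (which gives an $\cO(h^N)$ bound at the incoming crossings) and then around $\Gamma_1$ using the monodromy matrix, so that failure of the pseudo-resonance condition forces $w\equiv 0$. (ii) A Kato–Rellich / Grushin argument comparing the multiplicity of eigenvalues of the distorted operator $P_\theta$ with the order of vanishing of $\det(\mathcal M(E;h)-I)$ on suitable contours around each pseudo-resonance, yielding that each pseudo-resonance carries exactly one resonance, with the same asymptotic location up to $\cO(h^{(m_0+3)/(m_0+1)+\kappa})$. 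This gives \eqref{Realpart} and proves \eqref{Impart} with $-\mathcal D(E)$ equal to the imaginary part of the spectral parameter produced by the $h^{2/(m_0+1)}$-correction in the monodromy.

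\textbf{Step 3: Explicit formula via probability amplitudes at infinity.} To identify $\mathcal D(E)$ with the explicit expression coming from generalized trajectories, I would use a Green/Wronskian identity: for a resonant state $w=(w_1,w_2)^\top$ normalized so that $\|w\|_{\Gamma_1}\sim 1$, integration by parts gives $\im z_h(E)=-h\,\Phi_\infty[w]$, where $\Phi_\infty[w]$ is the outgoing flux of $w$ on the non-trapping branches at infinity, expressed as a sum of squared amplitudes on the outgoing rays of $\Gamma_2$. Each such amplitude is obtained by continuing the microlocal datum on $\Gamma_1$ along every admissible sequence of crossings and arcs; by Theorem~\ref{thm:ConnectionAllowed} this continuation is, to leading order, the sum over all paths (in the graph-theoretic sense defined in Section~\ref{sec:prfMain}) of the corresponding probability amplitudes (Definition~\ref{def:PA}), each of size $h^{1/(m_0+1)}$. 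Squaring and summing produces exactly the $h^{2/(m_0+1)}$-factor, and the extra $h$ in front of $\Phi_\infty$ yields the power $h^{(m_0+3)/(m_0+1)}$ and the explicit coefficient $\mathcal D(E)$ in \eqref{eq:Conclusion}.

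\textbf{Main obstacle.} The delicate step is Step~2(ii): upgrading the Grushin/monodromy reduction to a \emph{bijection with multiplicities}, because the subprincipal correction in $\mathcal M(E;h)$ has fractional order $h^{1/(m_0+1)}$ rather than analytic dependence on $h$, so one must control remainders uniformly on contours of radius $h^{(m_0+3)/(m_0+1)}$ around each Bohr–Sommerfeld value. A secondary difficulty is to match the algebraic expression arising from the monodromy in Step~1 with the path-sum formula at infinity in Step~3; this requires a careful combinatorial identification between closed paths through $\Gamma_1$ (contributing to the monodromy correction) and pairs of open paths escaping to $\pm\infty$ (contributing to the flux), which is expected to follow from unitarity of the principal transfer matrices at each crossing.
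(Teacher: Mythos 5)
Your overall architecture matches the paper's: define pseudo-resonances via a monodromy/transfer matrix, establish a bijection with genuine resonances (non-existence in between plus Kato--Rellich), and compute the imaginary part by a Green/Wronskian flux identity at infinity together with a path-sum formula for the outgoing amplitudes. However, there is a genuine gap in the way you distribute the work between Steps~1--2 and Step~3, and it matters for \eqref{Impart}.

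The problem is in Step~1 where you claim the diagonal of the monodromy has the form $-e^{i\mathcal A(E)/h}(1+h^{2/(m_0+1)}\mathcal D_0(E)+\cdots)$ ``the new leading correction coming from products of two anti-diagonal entries,'' and in Step~2(ii) where you assert that this already ``proves \eqref{Impart}.'' Theorem~\ref{thm:ConnectionAllowed} only gives $T_\rho=I_2+h^{1/(m+1)}T_{\mathrm{sub}}+\cO(h^{2/(m+1)})$; the $\cO(h^{2/(m+1)})$ remainder contains unknown \emph{diagonal} contributions at each crossing, and these live at exactly the same order $h^{2/(m_0+1)}$ as the product-of-two-off-diagonal terms you single out. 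So the subprincipal term of the monodromy --- and hence its imaginary part --- is not determined by the material in Section~\ref{SEC3}, and Steps~1--2 only establish \eqref{Realpart} (via the analogue of Proposition~\ref{prop:asympt-qc} and Corollary~\ref{cor:pseudoresonance-nearE:bis}). The paper states this limitation explicitly at the start of Section~\ref{sec:precise} and therefore does \emph{not} read the width off the monodromy; instead, the entire proof of \eqref{Impart} is your Step~3: the Green identity for the actual resonant state on a finite interval, normalization \eqref{micronormal} on $\Gamma_1$, the path-sum expression for the outgoing amplitudes (Lemma~\ref{lem:PAC}), and the resulting formula \eqref{eq:imzh}--\eqref{eq:Conclusion}. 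Consequently the ``secondary difficulty'' you raise in your last paragraph --- matching a monodromy-derived coefficient to the flux-derived coefficient, say via unitarity of the principal transfer matrices --- is not needed and would in fact require information on $T_\rho$ beyond what Theorem~\ref{thm:ConnectionAllowed} supplies. You should reorganize so that Step~3 is the proof of \eqref{Impart}, not a post hoc identification.

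Two smaller remarks. First, the paper's monodromy matrix $M(E,h)$ is indexed by \emph{all} edges of the characteristic graph, not just a $2\times2$ monodromy around $\Gamma_1$; this is what makes the determinant expansion \eqref{eq:Det-PA} into cycles work cleanly, and it is also how one sees that cycles touching $\Gamma_2$ contribute only at $\cO(h^{2/(m_0+1)})$. Second, your worry about fractional powers of $h$ preventing a contour/Grushin argument is circumvented in the paper by a different mechanism: one fixes $h$, introduces the coupling parameter $s$ in $P(s,h)=P^{\mathrm{diag}}+shQ$, and applies Kato--Rellich in $s\in[0,1]$, combined with the resonance-free region of Proposition~\ref{prop:non-ex-res}. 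No uniform control on $h$-dependent contours is required.
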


The leading coefficient $\mathcal{D}(E)$ in the asymptotics \eqref{Impart} may vanish at some energies $E\in \mathfrak{B}_h$ for each small $h$.
This means that, in such a case, there exist resonances closer to the real axis than those  with imaginary part of order $h^{(m_0+3)/(m_0+1)}$.
This phenomenon of double lines of resonances occurs due to the existence of a
 \textit{directed cycle} of the characteristic set $\Gamma(E_0)$ 
 created by crossings. We illustrate this in the next paragraph on a simple model. We conclude this section with the definition of
 directed cycle.
 
\begin{definition}
\label{GCT}
We call a curve $\gamma: [0,T]\to \Gamma$ a \textit{generalized classical trajectory} if it is parametrized by the time variable of Hamiltonian flow on each $\Gamma_j$ $(j=1,2)$.
In particular, we call it {\it directed cycle} if it is a simple closed curve.
\end{definition}
In other words, a generalized classical trajectory is a usual classical trajectory of $p_1$ on $\Gamma_1\setminus \Gamma_2$ and of $p_2$ on $\Gamma_2\setminus \Gamma_1$, but it is allowed to change from one to the other at a crossing point. 
One may find an analogous terminology in \cite{W}. 
In \cite{Hi2}, one of the authors showed that a directed cycle made by two non-trapping classical trajectories creates  resonances with width of order $h\log(1/h)$.

\subsection{Example}\label{subSec:App}
We restate the above result with the explicit value of ${\mathcal D}(E)$ for a simplest model including the one studied in \cite{FMW3} which corresponds to the case $m_0=1$.  

Suppose that the set $\{V_1=V_2\leq E_0\}$ consists of only one point. Without loss of generality we assume that 
\ben
E_0>0, \quad  \quad \{x\in\R;\,V_1(x)=V_2(x)\le E_0\}=\{0\}  \quad \text{and} \quad V_1(0) = V_2(0) = 0.
\een
Suppose moreover that $V_1$ satisfies the simple well condition given by Assumption \ref{H2} (i) with $a_0<0<b_0$, and there exists $c_0\in\R$ such that (see Fig. \ref{Fig:Ex1})
\ben
\frac{V_2(x)-E_0}{x-c_0}>0, \quad \forall x\in \mathbb R.
\een
Under these conditions, the set of crossing points $\mathcal{V}$ in the phase space consists of two points $\rho_\pm = (0,\pm\sqrt{E_0})$, and there exists a  unique directed cycle $\gamma\subset \Gamma(E_0)$ other than $\Gamma_1(E_0)$. The associated action integral defined by 
\ben
S_\gamma(E):=\int_\gamma\xi dx,
\een
can be explicitly written as 
\be\label{eq:SforEx}
S_\gamma(E)=2\int_{a(E)}^0\sqrt{E-V_1(x)}\,dx+2\int_0^{c(E)}\sqrt{E-V_2(x)}\,dx,
\ee
where $a(E)$ denotes the unique solution to $V_1(x)=E$ near $a_0$ and $c(E)$ denotes the unique solution to $V_2(x) = E$ near $c_0$ (in particular, $a(E_0)=a_0$ and $c(E_0)=c_0$). Finally assume that 
\ben
m_0 := \min\{k\in\N; V_1^{(k)}(0)\ne V_2^{(k)}(0)\} <\infty,
\een
and set $v_0:= |V_2^{(m_0)}(0)-V_1^{(m_0)}(0)|$.  

\begin{figure}[t]
\centering
\includegraphics[bb=0 0 1376 859, width=10cm]{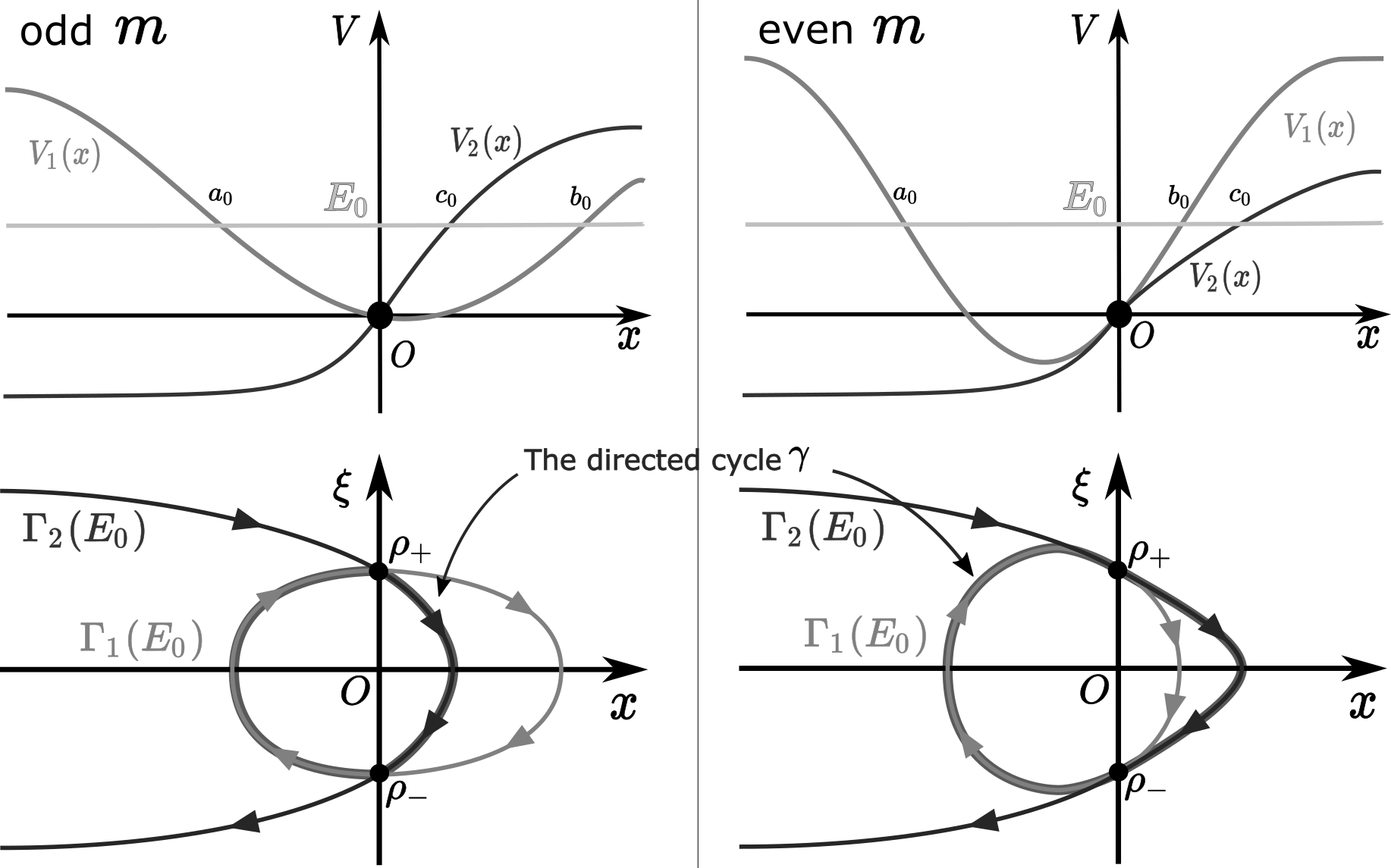}
\caption{Typical model}
\label{Fig:Ex1}
\end{figure}

Then, with the notations of Theorem \ref{MAINTH}, the leading coefficient of the width of the resonances of $P$ is given by
\ben
\mathcal{D}(E)
=\frac{2E^{-\frac{m_0}{m_0+1}}}{\cA'(E)}\left[\im\left(
\eta_{m_0}  \overline{U(\rho_+)}
e^{iS_\gamma(E)/2h}\right)\right]^2,
\een
where $\eta_{m_0}\neq 0$ is given by
\begin{align}\label{eq:eta-m}
\eta_{m_0}=
\left\{
\begin{aligned}
&\exp \left(i\frac{\pi}{2(m_0+1)}\right)
\bm{\Gamma}\left(\frac{m_0+2}{m_0+1}\right)\left(\frac{2(m_0+1)!}{v_0}\right)^{\frac{1}{m_0+1}}  && \text{if $m_0$ is odd},\\
&\cos\left(\frac{\pi}{2(m_0+1)}\right)
\bm{\Gamma}\left(\frac{m_0+2}{m_0+1}\right)\left(\frac{2(m_0+1)!}{v_0}\right)^{\frac{1}{m_0+1}} && \text{if $m_0$ is even}.
\end{aligned}
\right. 
\end{align}
Here $\bm{\Gamma}$ is the usual gamma function defined by 
\begin{equation}\label{Gammaf}
\bm{\Gamma}(z) := \int_0^{+\infty} t^{z-1} e^{-t} dt, \quad {\rm Re}\,z>0.
\end{equation}
Notice that even when $U(\rho_+)\neq 0$, the coefficient $\mathcal{D}(E)$ vanishes on a discrete subset determined by the action $S_\gamma(E)$ associated to the directed cycle $\gamma$. This set is given by the energies $E$ such that 
\ben
\left\{\begin{aligned}
&h^{-1}S_\gamma(E)-\pi+2\tan^{-1}\left( \frac{r_0(0)}{r_1(0) \sqrt{E_0}}\right) + \frac\pi{(m_0+1)} && \text{if $m_0$ is odd},\\
&h^{-1}S_\gamma(E)-\pi+ 2\tan^{-1}\left( \frac{r_0(0)}{r_1(0) \sqrt{E_0}}\right) && \text{if $m_0$ is even},
\end{aligned}\right.
\een
belongs to $2\pi\Z$. 
We here set $\tan^{-1}\left(\frac{r_0(0)}{r_1(0)\sqrt{E_0}}\right)=\frac \pi2$ when $r_1(0)=0$. 

\section{Microlocal connection at a crossing point} \label{SEC3}

\subsection{Definition and asymptotics for the microlocal transfer matrix
}
We recall briefly some basic notions of semiclassical and microlocal analysis, referring to the textbooks \cite{DiSj, Ma, Zw} for more details.
For $N\in\N$ and $m\in \R$, we introduce the class $S_N^m$ of matrix-valued symbols
\ben
S_N^m:=\left\{a\in C^\infty(\R^2;\C^{N\times N})\,; \, 
\Vert \p_x^k \p_\xi^l a(x,\xi) \Vert_{N\times N} \le C_{k,l}(1+\xi^2)^{m/2},\ 
\forall k,l\in\N\right\}.
\een
For a symbol $a\in S_N^m$, the corresponding $h$-pseudodifferential operator, denoted $a^w(x,hD_x)$, is defined as an unbounded operator in $L^2(\R;\C^N)$ 
via the $h$-Weyl quantization
\begin{align}\label{WeylQ}
 a^w(x,hD_x)u(x)
:=\frac{1}{2\pi h}\int_{\R^2}e^{i(x-y)\xi/h}a\left(\frac{x+y}{2},\xi\right)u(y)dyd\xi.
\end{align}
This is a bounded operator in $L^2(\mathbb R; \mathbb C^N)$ if $m\le 0$ according to the Calder\'on-Vaillancourt Theorem, see e.g. \cite[Theorem 7.11]{DiSj}. 

Let $(x_0,\xi_0)\in\R^2$ and $f=f(x;h)\in L^2(\R; \mathbb C^N)$ with $\|f\|_{L^2}\le 1$. 
We say that $f$ is microlocally $0$ near $(x_0,\xi_0)$ and we write
\ben
f(x;h)\equiv0\qtext{near}(x_0,\xi_0),
\een
if there exists a symbol $\chi=\chi(x,\xi)\in S_N^0$ with $\det\, \chi(x_0,\xi_0)\neq0$ such that 
\ben
\| \chi^w (x,hD_x)f\|_{L^2(\R)}=\ord(h^\infty).
\een 
We also say that $f$ is microlocally $0$ near a set $\Omega\subset\R^2$ if it is microlocally $0$ near each point of $\Omega$. 
We define $\ope{WF}_h(f)$, the semiclassical wave front set of $f$, as the set of all points of $\R^2$ where $f$ is not microlocally $0$. 
The following results for scalar pseudo-differential operators are well-known (see e.g. \cite[Theorem 12.5]{Zw}).  Let $a\in S^0_1$. If $f$ satisfies $a^w(x,hD)f\equiv0$ microlocally in $U$, we have $\ope{WF}_h(f)\cap U\subset \{a=0\}$. 
Moreover, if $\p a\neq0$ on $\{a=0\}$, the semiclassical wave front set $\ope{WF}_h(f)$ is invariant under the Hamiltonian flow of $a$. 


Throughout this section, we fix $E_0\in\R$ as in the previous section and consider $E$ belonging to our reference box $\cR$ defined by \eqref{DEFR} with $\delta_1=\delta_2=Lh$ for an arbitrarily fixed $L>0$. We set $\Gamma=\Gamma(E_0):=\Gamma_1(E_0)\cup\Gamma_2(E_0)$. 

The following proposition is a simple consequence of a general fact
that the equation $a^w(x,hD_x)u=0$ for a matrix-valued symbol $a(x,\xi)\in S_N^m$ reduces microlocally near $(x_0,\xi_0)\in \mathbb R^2$ to a scalar equation
if $0$ is a simple eigenvalue of $a(x_0,\xi_0)$.

\begin{proposition}\label{prop:RedScal}
The space of microlocal solutions to the system $(P-E)w=0$ near each connected subset of $\Gamma\setminus\mc{V}$ is one dimensional.
\end{proposition}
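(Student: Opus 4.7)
The plan is to reduce the matrix system to a scalar real principal-type equation on the given connected subset, and then invoke the classical one-dimensionality of microlocal solutions to such scalar equations along a bicharacteristic.

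First I would exploit connectedness: any connected component $\gamma$ of $\Gamma\setminus\mc{V}$ is entirely contained in either $\Gamma_1(E_0)\setminus\Gamma_2(E_0)$ or $\Gamma_2(E_0)\setminus\Gamma_1(E_0)$, since $\mc{V}=\Gamma_1(E_0)\cap\Gamma_2(E_0)$ has been removed. Without loss of generality assume $\gamma\subset\Gamma_1(E_0)\setminus\Gamma_2(E_0)$, so that $p_2-E_0\ne0$ on an open neighborhood $\Omega$ of $\gamma$. This gives ellipticity of $P_2-E$ on $\Omega$ for $E\in\cR$, and by the standard elliptic parametrix construction there exists an $h$-pseudodifferential operator $Q_2$ with symbol in $S_1^0$ such that $(P_2-E)Q_2\equiv Q_2(P_2-E)\equiv \Id$ microlocally on $\Omega$.

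Next I would use the second component of the system $(P-E)w=0$, namely $hU^*w_1+(P_2-E)w_2\equiv0$, to solve $w_2\equiv -hQ_2U^*w_1$ modulo $\ord(h^\infty)$ microlocally on $\Omega$. Substituting into the first component gives the scalar microlocal equation
\be
\widetilde{P}_1 w_1\equiv 0\mlo \Omega,\qquad \widetilde{P}_1:=P_1-E-h^2UQ_2U^*,
\ee
whose principal symbol is $p_1-E$, the $h^2$-correction being lower order. Under Assumption~\ref{H2}(i), $dp_1=(V_1',2\xi)$ does not vanish on $\Gamma_1(E_0)$ (turning points are simple, so $V_1'\ne 0$ there), hence $\widetilde{P}_1$ is of real principal type and $\gamma$ is a connected piece of bicharacteristic. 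By the standard propagation/WKB theory for scalar real principal-type $h$-pseudodifferential operators (including the Airy connection at simple turning points used in \cite{FMW1}), the space of microlocal solutions of $\widetilde{P}_1 w_1\equiv0$ on $\gamma$ is one-dimensional.

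Finally I would conclude by noting that the linear map $w_1\mapsto (w_1,-hQ_2U^*w_1)$ is injective and surjective onto the microlocal solution space of $(P-E)w\equiv0$ on $\gamma$, so the latter is also one-dimensional. The only potential subtlety is to make sure the scalar 1-dimensionality is legitimate across turning points that may lie on $\gamma$; this is classical for the Schr\"odinger operator $P_1-E$ under the simple well hypothesis and is in fact one of the building blocks used repeatedly in \cite{FMW1,FMW2}, so no new analysis is required.
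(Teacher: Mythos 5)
Your argument is correct and is essentially the paper's intended one: the authors attribute Proposition~\ref{prop:RedScal} to the general fact that a matrix $h$-pseudodifferential equation reduces microlocally to a scalar one wherever $0$ is a simple eigenvalue of the symbol, and your Schur-complement elimination $w_2\equiv -hQ_2U^*w_1$ via the elliptic parametrix of $P_2-E$ is precisely that reduction carried out explicitly, followed by the standard one-dimensionality of microlocal solutions for the resulting real principal-type scalar operator $\widetilde P_1$ along the bicharacteristic arc (turning points being simple by Assumption~\ref{H2}).
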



Microlocally near each point 
of $\Gamma\setminus\mc{V}$, except at the turning points $\Gamma\cap\{\xi=0\}$, we have the following WKB microlocal solution to the system $(P-E)w=0$ constructed in \cite[Proposition~5.4]{FMW3}. Here and in what follows, $\hat{j}$ stands for the complement of $j$ in the set $\{1,2\}$.
\begin{proposition}{\cite[Proposition~5.4]{FMW3}} \label{prop:mlWKB}
Let $j\in\{1,2\}$. 
For any $\rho_j=(x_j,\xi_j)\in\Gamma_j(E_0)\setminus\mc{V}$, $\xi_j\neq 0$, there exists a microlocal solution $f_{\rho_j}$ to the equation
\be
(P-E)w\equiv0\qtext{ near}\rho_j,
\ee
of the form
\begin{equation}
f_{\rho_j}(x,h;x^*)= e^{(\ope{sgn}\xi_j)i\phi_j(x;x^*)/h}\begin{pmatrix}
\sigma_{j,1}(x,h)\\
\sigma_{j,2}(x,h)
\end{pmatrix},
\end{equation}
where the phase function $\phi_j$ normalized at some point $x^*$ in the same connected component of the classically allowed region of $P_j-E$ as $x_j$ is given by
\be\label{eq:N-Phase}
\phi_j(x;x^*):=
\int_{x^*}^x\sqrt{E_0-V_j(y)}\,dy,
\ee
and the symbols $\sigma_{j,k}(x,h)$ are smooth functions of $x$ with asymptotic expansion in powers of $h$ of the form $\sigma_{j,k}(x,h)\sim\sum_{l\ge0}h^l\sigma_{j,k,l}(x)$ such that
\ben
\sigma_{j,j,0}(x)=(E_0-V_j(x))^{-1/4}, \quad\sigma_{j,\hat j,0}(x)=0\quad \text{and} \quad \sigma_{j,j,l}(x_j)=0\quad\text{for all } l\ge1.
\een
\end{proposition}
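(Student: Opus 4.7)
The plan is to carry out a standard semiclassical WKB construction and observe that the two non-degeneracy hypotheses of the proposition are exactly the open conditions needed to make the transport hierarchy solvable at every order. I treat $j=1$; the case $j=2$ is identical. Setting $\phi:=(\ope{sgn}\xi_1)\phi_1(\cdot;x^*)$, the definition of $\phi_1$ yields the eikonal equation $(\phi')^2+V_1-E_0\equiv0$ on a neighborhood of $x_1$ and $\phi'(x_1)=\xi_1$. Two open conditions then persist locally: first, $\xi_1\neq0$ implies $\phi'$ is bounded away from $0$ on some neighborhood $I$ of $x_1$ (we are away from turning points); second, $\rho_1\notin\mc{V}$ gives $V_2(x_1)\neq V_1(x_1)$, since otherwise $p_2(\rho_1)=\xi_1^2+V_2(x_1)$ would equal $\xi_1^2+V_1(x_1)=E_0$ and put $\rho_1$ into $\mc{V}$. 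Shrinking $I$ if needed, $V_2-V_1$ stays bounded away from $0$ on $I$.

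Next I would plug the ansatz $w=e^{i\phi/h}\sigma$ with $\sigma\sim\sum_{l\ge0}h^l(\sigma_{1,l},\sigma_{2,l})$ into $(P-E)w=0$. Using the conjugation identity $e^{-i\phi/h}(hD_x)^2 e^{i\phi/h}=(\phi')^2-ih\phi''+2\phi'hD_x+(hD_x)^2$ and the fact that $|E-E_0|=\ord(h)$ in our reference box, I collect powers of $h$. At order $h^0$ the eikonal equation makes the first diagonal component trivially satisfied, while the second yields $(V_2-V_1)\sigma_{2,0}=0$ and hence $\sigma_{2,0}\equiv0$ on $I$. At order $h^l$ for $l\ge1$, the first component becomes a first-order transport equation
\begin{equation*}
-i\bigl(2\phi'\partial_x+\phi''\bigr)\sigma_{1,l}=F_l,
\end{equation*}
with right-hand side $F_l$ depending smoothly on $\{\sigma_{k,l'}\}_{l'<l}$, on the coupling $U,U^*$, and on the shift $E-E_0$; the second component is algebraic,
\begin{equation*}
(V_2-V_1)\sigma_{2,l}=G_l,
\end{equation*}
with an analogous $G_l$.

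The algebraic equation is solved by division since $V_2-V_1$ does not vanish on $I$. The transport equation, rewritten as $\partial_x\bigl((\phi')^{1/2}\sigma_{1,l}\bigr)=(2i)^{-1}(\phi')^{-1/2}F_l$, integrates uniquely on $I$ once a boundary value is prescribed; for $l=0$, $F_0\equiv 0$ and the choice $\sigma_{1,0}=(E_0-V_1)^{-1/4}$ solves the homogeneous equation and fixes the normalization, while for $l\ge 1$ the requirement $\sigma_{1,l}(x_1)=0$ determines the constant of integration. This produces smooth formal symbols $\sigma_{k,l}$ on $I$ at all orders. I would then Borel-sum (cf.\ \cite{DiSj}) to promote the formal series to genuine $\sigma_k(\cdot,h)\in C^\infty(I)$ with the stated asymptotic expansions, so that $f_{\rho_1}:=e^{i\phi/h}(\sigma_1,\sigma_2)$ satisfies $(P-E)f_{\rho_1}=\ord(h^\infty)$ in $C^\infty(I)$, hence microlocally near $\rho_1$. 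The only real obstacle is verifying that the iteration stays smooth across all orders on a fixed neighborhood; this reduces precisely to the two non-vanishing conditions on $\phi'$ and on $V_2-V_1$, both granted uniformly on $I$ by the hypotheses.
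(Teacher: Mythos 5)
The paper does not prove this proposition itself: it simply cites \cite[Proposition~5.4]{FMW3} and uses it as a black box. Your self-contained WKB construction is the standard route and is essentially correct; in particular you correctly identify the two non-degeneracy conditions coming from the hypotheses---$\xi_1\neq 0$ keeps $\phi'$ bounded away from $0$ so the transport equations integrate, and $\rho_1\notin\mathcal V$ forces $V_1(x_1)\neq V_2(x_1)$ (since $\rho_1\in\Gamma_1\cap\Gamma_2$ would follow from equality) so the off-diagonal equations are solved by division---and this is precisely why the full hierarchy closes and why $\sigma_{j,\hat j,0}\equiv 0$. Two small points worth making explicit. First, since $E\in\mathcal R_{E_0}(Lh,Lh)$ one only has $E-E_0=\mathcal O(h)$ rather than $E=E_0$, so the right-hand sides $F_l, G_l$ of your hierarchy, and therefore the coefficients $\sigma_{j,k,l}$, carry a bounded dependence on the parameter $(E-E_0)/h$; the statement of the proposition suppresses this dependence, and so does your write-up, but it should be borne in mind that the asymptotic series is uniform rather than literally $E$-independent. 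Second, a harmless sign slip: integrating $-i(2\phi'\partial_x+\phi'')\sigma_{1,l}=F_l$ gives $\partial_x\bigl((\phi')^{1/2}\sigma_{1,l}\bigr)=\tfrac{i}{2}(\phi')^{-1/2}F_l$, not $(2i)^{-1}(\phi')^{-1/2}F_l$; this does not affect the conclusion. Finally, to pass from ``$(P-E)f_{\rho_1}=\mathcal O(h^\infty)$ on $I$'' to the microlocal statement one should cut off in $x$ and note that the commutator with the cutoff is $\mathcal O(h^\infty)$ microlocally near $\rho_1$; you gesture at this but it is worth spelling out.
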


Now let $\rho=(\rho_x,\rho_\xi)\in\mc{V}$ be a crossing point. Recall from Assumption \ref{H3} that $\rho_\xi\neq 0$. 
We study the behavior of microlocal solutions to $(P-E)w\equiv0$ near an $h$-independent small neighborhood of $\rho$. In this neighborhood, each $\Gamma_j(E_0)$, $j=1,2$, has an incoming classical trajectory $\edg_j^\flat$ and outgoing one $\edg_j^\sharp$ consisting of points $\exp tH_{p_j}(\rho)$ with negative and positive small $t$.
We take a point $\rho_j^\flat$ on $\edg_j^\flat$ and a point $\rho_j^\sharp$ on $\edg_j^\sharp$.

Let $f_j^\dir:=f_{\rho_j^\dir}(x,h;\rho_x)$ be the WKB microlocal solution to the system near $\rho_j^\dir$ given by the previous Proposition, normalized at $x^*=\rho_x$. If $w\in L^2(\R;\C^2)$ satisfies
\begin{equation}
(P-E)w\equiv 0\qtext{ near}\rho,
\end{equation}
then according to Proposition~\ref{prop:RedScal},  there exist constants $\alpha_j^\dir=\alpha_j^\dir(w,E,h)$ $(j=1,2$, $\dir=\out,\inc)$ such that
\begin{equation}\label{eq:alpha-w}
w\equiv \alpha_j^\dir f_j^\dir\qtext{ near} \rho_j^\dir.
\end{equation}
Let $m=m_\rho$ be the contact order at $\rho$ of $\Gamma_1(E_0)$ and $\Gamma_2(E_0)$ defined by \eqref{CCOO}, which is finite according to Assumption \ref{H3}.

The following two theorems are the main results of this section. The first one ensures the well-definedness of the transfer matrix at the crossing point $\rho$, and the second one gives the semiclassical asymptotics of this matrix.

\begin{theorem}\label{thm:T-exists}
The space of microlocal solutions to the equation $(P-E)w\equiv0$ near $\rho$ is two dimensional. More precisely, there exists a matrix $T_\rho=T_\rho(E,h;\rho_1^\inc,\rho_2^\inc,\rho_1^\out,\rho_2^\out)$  such that for any microlocal solution $w\in L^2(\R;\C^2)$ near $\rho$ with $\|w\|_{L^2}\le1$, one has
\be\label{transfer}
\biggl(\begin{matrix}
\alpha_1^\out\\
\alpha_2^\out
\end{matrix}
\biggr)
=
T_\rho(E,h)
\biggl(
\begin{matrix}
\alpha_1^\flat\\
\alpha_2^\flat
\end{matrix}\biggr).
\ee
\end{theorem}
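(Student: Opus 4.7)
The plan is to reduce the microlocal problem near the crossing point $\rho$ to a finite-dimensional linear-algebra problem for exact local solutions of the ODE system $(P-E)w=0$ on a small real neighborhood $I$ of $x=\rho_x$. Since $\rho_\xi\neq 0$ by Assumption~\ref{H3}, one has $V_j(\rho_x)<E_0$ for $j=1,2$, so $\rho_x$ lies in the classically allowed region of both scalar operators, and $(P-E)w=0$ is on $I$ a regular linear $2\times 2$ second-order ODE system whose classical solution space has complex dimension $4$. Using the WKB construction of \cite{FMW1}, I would build, for each $j\in\{1,2\}$ and each sign $\sigma\in\{+,-\}$, an exact solution $\widetilde w_j^\sigma$ on $I$ with leading WKB form $e^{\sigma i\phi_j(x;\rho_x)/h}\vec e_j+\ord(h^\kappa)$, where $\vec e_j$ is the $j$-th coordinate vector of $\C^2$ and $\phi_j$ is the phase of Proposition~\ref{prop:mlWKB}. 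The four resulting functions span the local solution space, and the semiclassical wavefront set of $\widetilde w_j^\sigma$ within $T^*I$ is the $\sigma\xi$-branch of $\Gamma_j$. Writing $\epsilon:=\ope{sgn}\,\rho_\xi$, the two solutions $\widetilde w_1^\epsilon,\widetilde w_2^\epsilon$ are then microlocal solutions near $\rho$.

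I would next show that $\widetilde w_1^\epsilon,\widetilde w_2^\epsilon$ span the microlocal solution space near $\rho$ modulo $\ord(h^\infty)$. For the lower bound on the dimension, the matrix $[\alpha_k^\inc(\widetilde w_j^\epsilon)]_{j,k=1,2}$ is diagonal modulo $\ord(h^\infty)$ with non-vanishing diagonal entries: indeed, for $j\neq k$ the wavefront set of $\widetilde w_j^\epsilon$ does not meet $\rho_k^\inc\in\edg_k^\inc\subset\Gamma_k$, while for $j=k$, $\widetilde w_j^\epsilon$ is a non-trivial microlocal solution on $\edg_j^\inc$ and therefore a non-zero constant multiple of $f_j^\inc$ by Proposition~\ref{prop:RedScal}. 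Hence $\widetilde w_1^\epsilon,\widetilde w_2^\epsilon$ are linearly independent as microlocal solutions near $\rho$, and the dimension is at least $2$. For the upper bound, let $w$ be any microlocal solution near $\rho$; its semiclassical wavefront set is contained in $\Gamma\cap\{\epsilon\xi>0\}$ near $\rho$. I would choose $c_1,c_2\in\C$ so that $v:=w-c_1\widetilde w_1^\epsilon-c_2\widetilde w_2^\epsilon$ satisfies $\alpha_1^\inc(v)=\alpha_2^\inc(v)=0$, possible by the non-degeneracy above, then realize $v$ modulo $\ord(h^\infty)$ as an exact solution $\widehat v$ on $I$ obtained by extending the microlocal data at $\rho$ to the full interval via successive WKB corrections. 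Since $\widehat v$ lies in the $2$-dimensional subspace of exact solutions with $\epsilon$-wavefront and has vanishing incoming coefficients, the diagonality established above forces $\widehat v\equiv 0$; hence $v\equiv0$ microlocally near $\rho$, proving the dimension is at most $2$ and the injectivity of the incoming map $w\mapsto(\alpha_1^\inc,\alpha_2^\inc)$.

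Combining both bounds, the microlocal solution space near $\rho$ is exactly $2$-dimensional and the incoming map is a linear isomorphism onto $\C^2$. Composing its inverse with the outgoing map $w\mapsto(\alpha_1^\out,\alpha_2^\out)$ gives a unique $2\times 2$ matrix $T_\rho(E,h)$ realizing \eqref{transfer}; its entries depend on the reference points $\rho_j^\dir$ only through the chosen normalization of the WKB solutions $f_j^\dir$.

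The hard part is the propagation-across-the-crossing step, where one must show that a microlocal solution whose wavefront is confined to the $\epsilon$-side of $\Gamma$ and vanishes microlocally on both incoming branches must vanish microlocally near $\rho$. Scalar propagation of singularities cannot be invoked directly at $\rho$, because the principal symbol of $P-E$ is matrix-valued and its characteristic variety $\Gamma_1\cup\Gamma_2$ is singular there. The detour through exact local solutions on $I$ converts this microlocal propagation into the non-degeneracy of the finite matrix $[\alpha_k^\inc(\widetilde w_j^\epsilon)]$, but executing it requires the precise asymptotic expansion of the basis $\{\widetilde w_j^\sigma\}$ uniformly up to the crossing height $\rho_x$ following \cite{FMW1}, together with the identification of the associated microsupports with the branches $\edg_j^\dir$ and the matching to the normalized WKB solutions $f_j^\dir$; refining this asymptotic expansion to the next order in $h$ is what will yield the $I_2+h^{1/(m+1)}T_{\ope{sub}}$ formula of Theorem~\ref{thm:ConnectionAllowed}.
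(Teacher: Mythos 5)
Your overall architecture (reduce the microlocal problem near $\rho$ to a finite-dimensional problem for exact local solutions on an interval $I$ around $\rho_x$, exhibit a four-dimensional basis of WKB type, isolate the two-dimensional subspace that matters near $\rho$, and define $T_\rho$ by composing incoming and outgoing coefficient maps) matches the paper's strategy via Propositions~\ref{lem:ApproByEx} and~\ref{prop:Basis-pm} and Lemma~\ref{lem:exact-pm}. However, there are two genuine gaps and one incorrect technical assertion in your argument.

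The incorrect assertion: you claim that $\ope{WF}_h(\widetilde w_j^\sigma)$ \emph{is} the $\sigma\xi$-branch of $\Gamma_j$ and that $[\alpha_k^\flat(\widetilde w_j^\epsilon)]_{j,k}$ is diagonal modulo $\ord(h^\infty)$. Neither holds for the solutions produced by the successive-approximation scheme of \cite{FMW1}. The off-diagonal coupling $hU$ forces the $\hat j$-component of $\widetilde w_j^\sigma$ to oscillate along $\Gamma_{\hat j}$ at order $\ord(h)$, so the wavefront set is a priori contained in $\Gamma_1\cup\Gamma_2$ (not merely $\Gamma_j$), and the incoming-coefficient matrix is only diagonal modulo $\ord(h)$. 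That weaker diagonality still yields linear independence for small $h$, but the $\ord(h^\infty)$ claim is false and you lean on it below.

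The first genuine gap is the existence of your ``$2$-dimensional subspace of exact solutions with $\epsilon$-wavefront''. The basis $\widetilde w_j^\epsilon$ coming from Proposition~\ref{prop:MLbehavior} is microlocally $\ord(h)$ near $\breve\rho$ and on $\edg_{\hat j}^\inc$, not $\ord(h^\infty)$; asserting that such solutions span a $2$-dimensional space whose elements have semiclassical wavefront set contained in $\{\epsilon\xi>0\}$ requires first \emph{improving} those $\ord(h)$ remainders to $\ord(h^\infty)$. This is exactly what the paper's Lemma~\ref{lem:exact-pm} achieves, and it is not free: one needs the iterative bootstrap constructing $g_{j,k}$ by repeatedly applying $\til K_{j,L}$ and subtracting carefully chosen multiples $c_{j,k}\breve u_j$, each subtraction licensed by propagation of singularities away from the crossing, with the remainder improving by a factor $h^{2/(m+1)}$ per step. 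Your proposal names ``successive WKB corrections'' but does not realize that the subtractions must eliminate components traveling on the $-\epsilon$ branches (the $\breve u_j$ part), nor that the gain per step is only $h^{2/(m+1)}$ (so infinitely many steps are needed to reach $\ord(h^\infty)$). Without this, you do not have the subspace on which your final Cramer-type argument rests.

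The second gap is the step ``realize $v$ modulo $\ord(h^\infty)$ as an exact solution $\widehat v$''. This is Proposition~\ref{lem:ApproByEx}, and in the paper it is not obtained by extending WKB data; it is obtained by building a quasimode $w_{\text{quasi}}=\chi\pphi(hD_x)w+\chi_-w^\inc+\chi_+w^\out$ with cutoffs adapted to the $\{\epsilon\xi>0\}$ side of $\Gamma$, showing $(P-E)w_{\text{quasi}}=\ord(h^\infty)$, and then projecting onto the exact solution space via Cramer's rule. Crucially, that projection only works once you already possess the basis $(w_1^\pm,w_2^\pm)$ from Lemma~\ref{lem:exact-pm}, so the logical order is Lemma~\ref{lem:exact-pm} first, then Proposition~\ref{lem:ApproByEx}; your proposal inverts this order and leaves both steps unjustified. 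In short, the outline is right in spirit, but the iterative refinement to $\ord(h^\infty)$ near $\breve\rho$ and the quasimode-plus-Cramer realization are the actual content of the theorem and are missing.
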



\begin{theorem}\label{thm:ConnectionAllowed}
Let $L>0$ be an $h$-independent constant. For $E$ in $\cR_{E_0}(Lh,Lh)$, one has 
\begin{equation}\label{MAAS}
T_\rho(E,h)={\rm Id}_2+h^{\frac 1{m+1}}T_{\rm sub} +\ord(h^{\frac 2{m+1}}),
\end{equation}
uniformly with respect to $h>0$ small enough, where ${\rm Id}_2$ is the  $2\times 2$ identity matrix and 
\ben
 T_{\rm sub}=-i\begin{pmatrix}
0& \overline{\omega_\rho}\\
\omega_\rho&0
\end{pmatrix}, 
\een
where $\omega_\rho \in \mathbb C$ is a constant independent of $\rho_j^\dir$ $(j=1,2, \dir=\flat, \sharp)$ explicitly given by \eqref{omega} (see also Remark~\ref{rem:gencase}). 
\end{theorem}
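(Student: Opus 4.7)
The plan is to reduce the computation of the transfer matrix to the stationary phase analysis of an oscillatory integral over the real line, whose degenerate critical point at $\rho_x$ accounts for the exponent $1/(m+1)$.

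First I would fix a basis of the two-dimensional space of microlocal solutions near $\rho$, using Theorem~\ref{thm:T-exists}. The natural choice is to take two distinguished solutions $w^{(1)}, w^{(2)}$ determined by the incoming data $(\alpha_1^\flat, \alpha_2^\flat) = (1,0)$ and $(0,1)$ respectively. Each $w^{(j)}$ is microlocally supported (near $\rho$) on $\Gamma_j \cup \Gamma_{\hat\jmath}$, with leading behavior given by the WKB solution $f_j^\flat$ on the incoming branch. The entries of the transfer matrix are then read off as $\alpha_k^\sharp(w^{(j)})$. Since the off-diagonal coupling is $hU$, the zeroth order in $h$ gives $T_\rho = \mathrm{Id}_2 + O(h^\varepsilon)$ with $\varepsilon$ to be determined; the diagonal entries of $T_\rho - \mathrm{Id}_2$ must vanish to order $h^{2/(m+1)}$ essentially because two applications of the coupling operator are needed to return to the same component.

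Next, to compute the off-diagonal entries, I would construct exact local solutions to $(P-E)w=0$ on an interval $I_{\rho_x} \ni \rho_x$ using the iterative scheme from \cite{FMW1}: starting from a WKB ansatz $\psi_j$ solving $(P_j-E)\psi_j = O(h^\infty)$ associated with $\Gamma_j$, Duhamel's principle against the interaction $hU$ yields an exact solution whose second component is obtained from the first via
\[
w_{\hat\jmath}(x) = h \int G_{\hat\jmath}(x,y) \, U(y,hD_y) \psi_j(y) \, dy + O(h^2),
\]
where $G_{\hat\jmath}$ is a suitable resolvent/Green kernel for $P_{\hat\jmath} - E$ on $I_{\rho_x}$. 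Pairing this against the outgoing WKB solution $f_{\hat\jmath}^\sharp$ along $\Gamma_{\hat\jmath}$ reduces the extraction of $\alpha_{\hat\jmath}^\sharp$ to an oscillatory integral of the form
\begin{equation*}
\alpha_{\hat\jmath}^\sharp \;=\; h \int e^{i(\epsilon_j \phi_j(x) - \epsilon_{\hat\jmath} \phi_{\hat\jmath}(x))/h} \, b_{\hat\jmath j}(x,h) \, dx \;+\; O(h^{2/(m+1)}),
\end{equation*}
where $\epsilon_j = \mathrm{sgn}\,\xi_j$, the phases are from \eqref{eq:N-Phase}, and $b_{\hat\jmath j}$ is a classical symbol involving the principal symbol $U$ and the WKB amplitudes.

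The third step is stationary phase on this integral. The phase $\Phi(x) := \epsilon_j \phi_j(x) - \epsilon_{\hat\jmath} \phi_{\hat\jmath}(x)$ satisfies $\Phi'(\rho_x) = \epsilon_j\xi_j - \epsilon_{\hat\jmath}\xi_{\hat\jmath}$; since both trajectories pass through $\rho$, their $\xi$-coordinates agree in absolute value and with compatible signs on matched in/out branches, giving $\Phi'(\rho_x)=0$. Differentiating, one sees inductively using \eqref{MRO} that $\Phi^{(k)}(\rho_x)=0$ for $k \leq m$ and $\Phi^{(m+1)}(\rho_x) \neq 0$ with leading term proportional to $v_0/\sqrt{E_0-V_j(\rho_x)}^{\,m}$. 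Hence $\rho_x$ is a critical point of order $m+1$. The standard degenerate stationary phase formula (see e.g.\ \cite{DiSj}) yields a contribution of size $h \cdot h^{1/(m+1) - 1} = h^{1/(m+1)}$, times an explicit constant involving $\bm{\Gamma}((m+2)/(m+1))$, the $(m+1)$-th root of $v_0$, and an oriented phase factor distinguishing odd and even $m$; these combine into the expression \eqref{eq:eta-m} and produce the coefficient $\omega_\rho$. Contributions from any other critical points of $\Phi$ in the support of $b_{\hat\jmath j}$ are excluded by localizing $b_{\hat\jmath j}$ to a small neighborhood of $\rho_x$ via a cutoff whose commutator error is $O(h^\infty)$, justified by the fact that $\mathrm{WF}_h$ of the local solutions meets $I_{\rho_x} \times \R_\xi$ only over $\Gamma_1 \cup \Gamma_2$.

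Finally, one must verify that $\omega_\rho$ is independent of the choice of reference points $\rho_j^\dir$. This follows because changing $\rho_j^\dir$ along the same branch of $\Gamma_j$ multiplies the corresponding $\alpha_j^\dir$ by a unimodular WKB transport factor $e^{i\epsilon_j(\phi_j(\rho_j^\dir)-\phi_j(\rho_x))/h}$, and by the normalization $x^* = \rho_x$ chosen in Proposition~\ref{prop:mlWKB}, these factors are absorbed into the definition of the $f_j^\dir$ themselves. The expected main obstacle is not the stationary phase computation per se, but rather controlling the remainder to the precise order $O(h^{2/(m+1)})$: the second iterate of Duhamel's formula contributes a triple oscillatory integral with two phase differences, and one must show that its degenerate stationary phase analysis yields precisely the $h^{2/(m+1)}$ threshold claimed in \eqref{MAAS}, with no intermediate non-integer powers leaking in from the interaction between the two critical points $\rho_x$ and the auxiliary integration variable.
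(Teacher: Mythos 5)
Your proposal follows essentially the same route as the paper: construct exact local solutions near $\rho_x$ via the Green's-kernel (Duhamel) iteration of \cite{FMW1}, express the off-diagonal transfer coefficient as a single oscillatory integral with phase $\phi_1-\phi_2$, observe that $\rho_x$ is a degenerate critical point of order $m+1$ because $V_1-V_2$ vanishes to order $m$ there, and apply degenerate stationary phase (Lemma~\ref{lem:OI-degm}); the $O(h^{2/(m+1)})$ remainder, which you correctly flag as the main remaining estimate, is exactly Proposition~\ref{prop:ConvSer}. One small slip: your intermediate claim that $\Phi^{(m+1)}(\rho_x)$ is proportional to $v_0/(\sqrt{E_0-V_j(\rho_x)})^{\,m}$ has the wrong power (it should be proportional to $v_0/\sqrt{E_0-V_j(\rho_x)}$, cf.~\eqref{eq:ApproPhase}), but since you defer the explicit constant to \eqref{eq:eta-m} this does not affect the structure of the argument.
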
 

The rest of this section is devoted to the proofs of these results. 

\subsection{Outline of the proof of the theorems}
We first remark that if $\rho=(\rho_x,\rho_\xi)$ is a crossing point with $\rho_\xi\ne 0$, then $\breve\rho=(\rho_x,-\rho_\xi)$ is also a crossing point.


Theorem \ref{thm:T-exists} follows from the two propositions below. The first one brings the study of microlocal solutions near $\rho=(\rho_x,\rho_\xi)\in \mathcal{V}$ to that of local exact solutions to the system $(P-E)w=0$, and the second one shows that the space of exact solutions supported microlocally near the crossing point is two dimensional. 
\begin{proposition}\label{lem:ApproByEx}
Let $I$ be a real interval containing $\rho_x$ in its interior. For any microlocal solution $w\in L^2(I;\C^2)$ to the system $(P-E)w=0$ near $\rho$ with $\|w\|_{L^2(I)}\le1$, there exists an exact solution $\til{w}$ such that
\be
w\equiv\til{w}\qtext{ near}\rho.
\ee
\end{proposition}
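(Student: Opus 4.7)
The plan is to modify $w$ into an exact ODE solution by truncating in position and then inverting the resulting compactly supported defect via an ODE inversion whose microlocal behavior at $\rho$ can be controlled.

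First, choose a cutoff $\psi\in C_c^\infty(\mathrm{int}\,I)$ equal to $1$ on a neighborhood of $\rho_x$ and set $w_1:=\psi w$. Then $w_1\equiv w$ microlocally near $\rho$, and the new defect
\ben
r:=(P-E)w_1=\psi(P-E)w+[P,\psi]\,w
\een
has compact support in $I$ with $\rho\notin\ope{WF}_h(r)$: the first summand is microlocally $0$ near $\rho$ by hypothesis on $w$, while the second is spatially supported where $\psi'\neq 0$, hence away from $\rho_x$.

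Second, I would invert the ODE. On the bounded interval $I$ the second-order $2\times 2$ system $P-E$ is classically regular; its Dirichlet realization is self-adjoint with discrete spectrum, so after slightly shrinking $I$ to ensure $E$ avoids this spectrum, there is a unique $v\in H^2(I;\C^2)\cap H^1_0(I;\C^2)$ solving $(P-E)v=r$. Setting $\widetilde w:=w_1-v$ then yields an exact solution $(P-E)\widetilde w=0$ on $I$, and the proposition reduces to showing that $v\equiv 0$ microlocally near $\rho$, whence $\widetilde w\equiv w$ there.

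The third and decisive step verifies this microlocal vanishing of $v$ at $\rho$, and I expect it to be the main obstacle. The Green kernel of the Dirichlet problem can be written explicitly in terms of a fundamental system of exact WKB-type solutions of $(P-E)u=0$ built by the ODE method of \cite{FMW1}; the microlocal contribution of $v$ on each of the four trajectory pieces $\edg_j^\dir$ ($j=1,2$, $\dir=\flat,\sharp$) through $\rho$ is then an oscillatory integral of $r$ against these fundamental solutions whose phase is stationary only on the bicharacteristics of $p_1,p_2$. Since $\rho\notin\ope{WF}_h(r)$, splitting $r=r_0+\ord(h^\infty)$ with $\ope{WF}_h(r_0)$ outside a small fixed neighborhood of $\rho$ and performing a (non-)stationary phase analysis reduces each such integral to $\ord(h^\infty)$ in a microlocal neighborhood of $\rho$. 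Any residual amplitude on some $\edg_j^\dir$ coming from the homogeneous part of the kernel is absorbed by adding to $v$ an element of the four-dimensional homogeneous kernel of $(P-E)$ on $I$, which redefines $\widetilde w$ while preserving exactness. The delicate point — controlling the propagation of singularities of the Green kernel through $\rho$ along the bicharacteristics of both $p_1$ and $p_2$, which share a common point precisely at the crossing — is the technical heart of the argument, and is exactly what the exact WKB construction of \cite{FMW1} is designed to handle.
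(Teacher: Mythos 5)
Your overall strategy — cut off, compute the defect $r$, invert to produce a correction $v$, and argue $v$ is microlocally negligible at $\rho$ — is a plausible way to begin, but it differs structurally from what the paper does, and the step you yourself flag as ``the technical heart'' is where the proposition actually lives: as written, that step is deferred rather than carried out, and carrying it out would force you to develop essentially the same machinery the paper uses.

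Two more concrete concerns. First, the Dirichlet-realization detour is awkward in the semiclassical regime: the Dirichlet spectrum of $P-E$ on a fixed interval has $\ord(h)$ spacing, so ``shrinking $I$'' does not give a uniform spectral gap, and the resolvent norm can degenerate polynomially in $1/h$ for real $E\in\cR$. This does not necessarily kill the argument (your remark about absorbing residuals into the four-dimensional homogeneous kernel lets you trade boundary conditions for free parameters), but it is not a clean device, and the paper avoids it entirely. Second and more seriously, once you only cut off in position, the defect $r=(P-E)(\psi w)$ has $\ope{WF}_h(r)$ precisely on the incoming branches $\edg_j^\flat$, and any particular solution $v$ of $(P-E)v=r$ \emph{will} propagate those singularities forward through $\rho$ along the bicharacteristics of $p_1$ and $p_2$; you cannot expect $v\equiv 0$ near $\rho$ from non-stationary phase alone. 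You propose to cancel the residual microlocal amplitudes on the outgoing branches by adding homogeneous solutions, but to do that you need exact solutions whose microlocal behavior near $\rho$ you control precisely — for instance, solutions vanishing microlocally near $\rho$ but nontrivial near $\breve\rho$. The existence of such solutions with prescribed behavior is exactly Lemma~\ref{lem:exact-pm} of the paper, whose proof requires the iterative $(g_{j,k},c_{j,k})$ bootstrap built on the $K_j$ integral operators and the degenerate stationary phase estimate (Lemma~\ref{lem:OI-degm}); so there is a latent circularity in appealing to the homogeneous kernel without that construction.

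The paper's actual route sidesteps the Green function entirely. It first establishes Lemma~\ref{lem:exact-pm}, producing exact solutions $w^\flat$, $w^\sharp$ that agree with $w$ microlocally on the incoming and outgoing branches and have $\ope{WF}_h\subset\{\xi>0\}$. It then glues them with a \emph{position-and-frequency} cutoff,
\be
w_{\text{quasi}}=\chi(x)\,\pphi(hD_x)w+\chi_-(x)w^\flat+\chi_+(x)w^\sharp,
\ee
so that $(P-E)w_{\text{quasi}}=\ord(h^\infty)$ \emph{everywhere} on $I$ (not merely near $\rho$). With the defect globally $\ord(h^\infty)$, variation of parameters / Cramer's rule gives immediately that $w_{\text{quasi}}$ equals an exact solution $\til w$ up to $\ord(h^\infty)$, and $\til w\equiv w$ near $\rho$. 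The crucial difference from your proposal: because the paper also truncates in frequency, the inhomogeneous inversion problem never arises and the propagation-of-singularities difficulty is absorbed into Lemma~\ref{lem:exact-pm}. In summary, your plan is not wrong in spirit, but the unresolved step is equivalent to the paper's Lemma~\ref{lem:exact-pm}, and the quasimode/Cramer route is both cleaner and avoids the Dirichlet-resolvent issues.
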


\begin{proposition}\label{prop:Basis-pm}
There exists a basis $(w_1^+,w_2^+,w_1^-,w_2^-)$ of exact solutions on $I$ to $(P-E)w=0$ such that 
\be\label{eq:Basis1234}
\|w_j^\pm\|_{L^2(I)}=1 \qtext{and}
w_j^-\equiv 0\qtext{ near}\rho, \quad j=1,2.
\ee
\end{proposition}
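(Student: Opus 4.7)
The plan is to realize the 4-dimensional kernel of $P-E$ on $I$ as a direct sum of two 2-dimensional subspaces distinguished by the direction of their semiclassical wave front set near $x=\rho_x$. Without loss of generality $\rho_\xi>0$, so the condition $w_j^-\equiv 0$ near $\rho$ translates into $\ope{WF}_h(w_j^-)\subset\{\xi<0\}$ locally around $\rho_x$. Since $\rho\in\mc{V}$ with $\rho_\xi\ne 0$ forces $V_j(\rho_x)<E_0$ for $j=1,2$, by first working on a small subinterval of $I$ around $\rho_x$ on which neither $V_1$ nor $V_2$ meets $E_0$ and then extending via the Cauchy problem, we may assume that $I$ is turning-point-free for each of the two diagonal operators.

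For each $j\in\{1,2\}$ and each sign $\pm$, I would construct an exact solution $\tilde w_j^\pm\in C^\infty(I;\mathbb{C}^2)$ of $(P-E)w=0$ admitting a two-phase WKB expansion
\[
\tilde w_j^\pm(x,h)=e^{\pm i\phi_j(x;\rho_x)/h}\sigma_j^\pm(x,h)+e^{\pm i\phi_{\hat j}(x;\rho_x)/h}\tau_j^\pm(x,h),
\]
with $\sigma_j^\pm,\tau_j^\pm\in C^\infty(I;\mathbb{C}^2)$ admitting full expansions in non-negative powers of $h$, and with leading profile $\sigma_{j,0}^\pm=(E_0-V_j)^{-1/4}e_j$, $\tau_{j,0}^\pm\equiv 0$. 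The formal expansion is produced by the classical scalar WKB transport for each $P_j-E$, the off-diagonal coupling $hU$ feeding one phase into the other at each higher order. The passage from formal to exact is the fixed-point argument of \cite{FMW1}: after conjugating by the WKB phases, the system becomes a pair of coupled Volterra equations with kernels bounded uniformly in $h$ precisely because $I$ carries no turning point.

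Linear independence of the four solutions is immediate from the leading Cauchy data at $\rho_x$, namely $(e_j,\pm i\sqrt{E_0-V_j(\rho_x)}\,e_j)\in\mathbb{C}^4$, which form a basis. By the standard characterization of $\ope{WF}_h$ for oscillatory functions with smooth amplitudes, $\ope{WF}_h(\tilde w_j^\pm)$ near $\rho_x$ is contained in $\{\xi=\pm\sqrt{E_0-V_1(x)}\}\cup\{\xi=\pm\sqrt{E_0-V_2(x)}\}\subset\{\pm\xi>0\}$. Hence $\tilde w_j^-\equiv 0$ near $\rho=(\rho_x,\rho_\xi)$, and setting $w_j^\pm:=\tilde w_j^\pm/\|\tilde w_j^\pm\|_{L^2(I)}$ (non-zero thanks to the leading amplitude) yields the basis claimed.

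The main obstacle is the rigorous upgrade of the formal WKB series into an exact solution that retains the prescribed leading profile. In the scalar case this is a classical result away from turning points; for our coupled system the two WKB chains are intertwined by $hU$, and the controlled Picard iteration of \cite{FMW1} is required to close the argument uniformly in $h\to 0^+$ without disturbing either the phases $\phi_j,\phi_{\hat j}$ or the leading amplitudes $\sigma_{j,0}^\pm$.
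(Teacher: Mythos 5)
Your overall strategy---build exact solutions by a Picard/Volterra iteration on $I$ and then read off the semiclassical wave front set to get the microlocal vanishing---is in the same spirit as the paper, but there are two genuine gaps, both stemming from the fact that the crossing point $\rho_x$ lies \emph{inside} $I$.

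First, the two-phase WKB ansatz
\[
\tilde w_j^\pm(x,h)=e^{\pm i\phi_j/h}\sigma_j^\pm(x,h)+e^{\pm i\phi_{\hat j}/h}\tau_j^\pm(x,h)
\]
with amplitudes smooth on $I$ and admitting full expansions in non-negative integer powers of $h$ does not exist. Substituting $e^{\pm i\phi_j/h}\sigma_j^\pm$ into the $P_{\hat j}$-row of the system, the eikonal coefficient is $(\phi_j')^2-(E-V_{\hat j})=V_{\hat j}-V_j$, so the transport hierarchy for the off-diagonal amplitude at order $h^{\ell}$ reads $(V_{\hat j}-V_j)\,\sigma_{j,\hat j,\ell}=(\text{source from order }\ell-1)$. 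Since $V_1(\rho_x)=V_2(\rho_x)$ (that is what makes $\rho$ a crossing point) and $U(\rho)$ need not vanish, this forces a division by a function vanishing to order $m$ at $\rho_x$, so the amplitudes blow up there. The paper's microlocal WKB solutions (Proposition~\ref{prop:mlWKB}) are constructed precisely on $\Gamma\setminus\mc V$ for this reason; no globally smooth two-phase ansatz across $\rho_x$ is available, and the fractional power $h^{1/(m+1)}$ in Theorem~\ref{thm:ConnectionAllowed} is the signature of this obstruction.

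Second, even if one abandons the WKB ansatz and simply runs the fixed-point iteration with the integral operators $K_j$ of \eqref{eq:Def-K-Gen} (which is what the paper does), the resulting exact solutions do \emph{not} have $\ope{WF}_h$ contained in a single half-plane $\{\pm\xi>0\}$ near $\rho_x$. Proposition~\ref{prop:MLbehavior} shows they are only $\ord(h)$, not $\ord(h^\infty)$, on the ``wrong'' incoming/outgoing edges; indeed the crossing scatters microlocal mass between the half-planes, which is exactly the content of the transfer matrix. So the ``standard characterization of $\ope{WF}_h$ for oscillatory functions with smooth amplitudes'' you invoke does not apply and, more to the point, the conclusion it would give is false for the raw Picard solutions. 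The paper closes this gap with an explicit iterative correction (the construction of $g_{1,k}, g_{2,k}$ and constants $c_{j,k}$ in the proof of Lemma~\ref{lem:exact-pm}): at each step one uses propagation of singularities to identify the residual $\{\xi<0\}$ component of $\til K_{2,L}g_{1,k}$ as a multiple of $\breve u_2$ and subtracts it, gaining a factor $\ord(h^{2/(m+1)})$ per round; this yields solutions that are $\ord(h^n)$ near $\rho$ for any prescribed $n$, and a linear-algebra/Cramer argument then upgrades this to genuine microlocal vanishing. That iterative scheme is the essential step missing from your proposal.
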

We prove these propositions in Section~\ref{section3.4}. 

We briefly explain the scheme of the construction of these solutions due to \cite{FMW1}. In the following, we use the notations $U_1=U$ and $U_2=U^*$. 

Fix a compact interval $I$ containing $\rho_x$. 
For each $j=1,2$, let $(u_j, \breve{u}_j)$  be a pair of two linearly independent solutions to the scalar homogeneous equation $(P_j-E)u=0$ on $I$.
We define an integral operator $K_j=K(\breve{u}_j,u_j,s_j,t_j)$ on $C(I)$ depending on $s_j,t_j\in I$ by
\be\label{eq:Def-K-Gen}
K_j[f](x):=\frac{1}{h^2\W(\breve{u}_j,u_j)}\left(\breve{u}_j(x)\int_{s_j}^x u_j(y)f(y)dy-u_j(x)\int_{t_j}^x \breve{u}_j(y)f(y)dy\right).
\ee
Here $\W(\breve{u}_j,u_j)=\det(\breve{u}_j,u_j)$ stands for the Wronskian of $\breve{u}_j$ and $u_j$.

If at least one of the infinite sums $J_1=J_1(K_1,K_2)$ or $J_2=J_2(K_1,K_2)$ defined by
\be\label{eq:Inf-Sum}
J_1(K_1,K_2):=\sum_{k\ge0}(\til{K}_1\til{K}_2)^k,\quad
J_2(K_1,K_2):=\sum_{k\ge0}(\til{K}_2\til{K}_1)^k
\qtext{with}\til{K}_j:=-hK_jU_j
\ee 
converges in some suitable space,
then for any solution $v_1$ to $(P_1-E)v=0$ or for any solution $v_2$ to $(P_2-E)v=0$, the vector-valued function $w_1$ or $w_2$ defined by
\be\label{Def-Sol-Gen}
w_1(K_1,K_2,v_1):=
\begin{pmatrix}J_1v_1\\\til{K}_2J_1v_1\end{pmatrix},\quad
w_2(K_1,K_2,v_2):=
\begin{pmatrix}\til{K}_1J_2v_2\\J_2v_2\end{pmatrix},
\ee
is an exact solution to the system $(P-E)w=0$. 

Suppose $\rho_\xi>0$. This means that the incoming trajectories are on the left of the crossing point $\rho$ and the outgoing ones on the right.
We take $u_j$, $\breve u_j$, $j=1,2$ as in Proposition \ref{prop:SclAllowed} below, i.e. $u_j$ and  $\breve u_j$ are WKB solutions to $(P_j-E)u=0$ near $\rho$ and $\breve \rho$ respectively. If we take 
$s_j$, $t_j$ both on the left of $\rho_x$, the exact solutions $w_1^\inc=w_1(K_1,K_2,u_1)$, $w_2^\inc=w_2(K_1,K_2,u_2)$ defined by \eqref{Def-Sol-Gen} satisfies
\be\label{eq:BehaveInc}
w_1^\inc\equiv 
\left\{
\begin{aligned}
&f_1^\inc +\mathcal{O}(h)&& \text{ near} \; \rho_1^\inc,\\
&\mathcal{O}(h) && \text{ near}\;  \rho_2^\inc,
\end{aligned}\right.
\qquad
w_2^\inc\equiv 
\left\{
\begin{aligned}
&\mathcal{O}(h)&& \text{ near} \; \rho_1^\inc,\\
&f_2^\inc +\mathcal{O}(h) && \text{ near}\;  \rho_2^\inc.
\end{aligned}\right.
\ee
The main work consists in the computation of the microlocal asymptotic behavior of these solutions on the outgoing trajectories: 
\be\label{eq:BehaveOut}
w_1^\inc\equiv 
\left\{
\begin{aligned}
&t_{1,1}^\out f_1^\out +\mathcal{O}(h)&& \text{ near} \; \rho_1^\out,\\
&t_{2,1}^\out f_2^\out +\mathcal{O}(h) && \text{ near}\;  \rho_2^\out,
\end{aligned}\right.
\qquad
w_2^\inc\equiv 
\left\{
\begin{aligned}
&t_{1,2}^\out f_1^\out +\mathcal{O}(h)&& \text{ near} \; \rho_1^\out,\\
&t_{2,2}^\out f_2^\out +\mathcal{O}(h) && \text{ near}\;  \rho_2^\out.
\end{aligned}\right.
\ee
Theorem~\ref{thm:T-exists}  implies that the coefficients $t_{j,k}^\out$ give the transfer matrix modulo ${\mathcal O}(h)$:
\be
T_\rho=T^\out+\ord(h),\quad T^\out=(t_{j,k}^\out)_{1\leq j,k\leq 2}.
\ee

For the computation of \eqref{eq:BehaveOut} (as well as for the convergence of the infinite series $J_1$ and $J_2$), we apply the following degenerate stationary phase estimate in one dimension (see for instance \cite{Ho1}) for the oscillatory integral
\be
\label{Ihint}
\mc{I}(h):=\int_I \sigma(x)\exp\left(\frac ih \phi(x)\right)dx
\ee
with a smooth real valued phase $\phi(x)$ and a smooth symbol $\sigma(x,h)$ uniformly bounded in $h$ with its derivative.
In the application, $\phi$ will be in the form
\be
\label{+-+-}
\phi=\pm\int^x\left(\sqrt{E-V_1(y)}\pm\sqrt{E-V_2(y)}\right)dy.
\ee
The critical points of the phase function $\phi$ exist only when the sign between the two square roots is $-$, and, in that case, they are  the crossing points of the potentials $V_1$ and $V_2$.

\begin{lemma}\label{lem:OI-degm}
Let $I\subset \mathbb R$ be an interval, and suppose that $\phi(x)$
and $\sigma(x,h)$ are smooth functions in a neighborhood of $\bar I$. 
Suppose moreover that $\phi$ is real-valued, and $\sigma,$ $\p_x\sigma$ are bounded as $h\to0^+$.
\begin{enumerate}
\item
If $\phi'(x)$ never vanishes in $\bar I$, then we have $\mc{I}(h)=\ord (h)$.

\item If $x_0\in \abxring{I}$ is  the unique stationary point of $\phi$ in $\bar{I}$, and if $\phi'(x_0)=\cdots =\phi^{(m)}(x_0)=0$, $\phi^{(m+1)}(x_0)\ne 0$, then we have
$$
\mc{I}(h) = 
\til{\omega}e^{\frac ih\phi(x_0)} h^{\frac 1{m+1}}+\ord(h^{\frac 2{m+1}}),
$$
where
\be
\til{\omega}=\mu_m\sigma(x_0)\left(\frac{(m+1)!}{\left|\phi^{(m+1)}(x_0)\right|}\right)^{\frac 1{m+1}} \bm{\Gamma}\left(\frac{m+2}{m+1}\right),
\ee
\begin{equation*}
\mu_m=\left\{
\begin{aligned}
&e^{\frac{i\pi}{2(m+1)}\ope{sgn}\phi^{(m+1)}(x_0)}	& & \qtext{when $m$ is odd,}\\
&\cos\left(\frac{\pi}{2(m+1)}\right)	& & \qtext{when $m$ is even.}
\end{aligned}
\right.
\end{equation*}
\end{enumerate}
Here $\bm{\Gamma}$ is the usual Gamma function defined by \eqref{Gammaf}.
\end{lemma}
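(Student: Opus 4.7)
\textbf{Part (1)} reduces to one integration by parts: since $\phi'$ is bounded away from zero on $\bar I$ (by the no-stationary-point hypothesis plus smoothness), write $e^{i\phi/h}=(h/i\phi')\,\partial_x(e^{i\phi/h})$ and integrate once. The resulting boundary term and the integral involving $\partial_x(\sigma/\phi')\,e^{i\phi/h}$ are each $\ord(h)$ in view of the assumed uniform boundedness of $\sigma$ and $\partial_x\sigma$ together with the smoothness of $\phi$.

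For \textbf{Part (2)}, the plan is the classical degenerate stationary phase strategy: localize, normalize the phase, rescale, and evaluate a Fresnel-type integral. Fix a cutoff $\chi\in C_c^\infty$ equal to $1$ near $x_0$. Since $x_0$ is the unique stationary point in $\bar I$, $\phi'$ is bounded away from zero on $\mathrm{supp}(1-\chi)\cap\bar I$, so Part~(1) handles that piece with an error $\ord(h)\subset\ord(h^{2/(m+1)})$. Near $x_0$, Hadamard's lemma gives $\phi(x)-\phi(x_0)=(x-x_0)^{m+1}G(x)$ with $G$ smooth and $G(x_0)=\phi^{(m+1)}(x_0)/(m+1)!\neq 0$. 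Set $\epsilon_0:=\ope{sgn}\phi^{(m+1)}(x_0)$, so that $\epsilon_0 G>0$ near $x_0$. I then build a smooth local diffeomorphism $y=\zeta(x)$ with $\zeta(x_0)=0$, $\zeta'(x_0)=(|\phi^{(m+1)}(x_0)|/(m+1)!)^{1/(m+1)}$, and $\phi(x)-\phi(x_0)=\epsilon_0\, y^{m+1}$: for $m+1$ odd take $\zeta(x)=(x-x_0)(\epsilon_0 G(x))^{1/(m+1)}$ (the real $(m+1)$-th root is well-defined and smooth), and for $m+1$ even take $\zeta(x)=(x-x_0)|G(x)|^{1/(m+1)}$. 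After this change of variable, the localized integral equals $e^{i\phi(x_0)/h}\int\til\sigma(y)\,e^{i\epsilon_0 y^{m+1}/h}dy$ with $\til\sigma$ smooth and compactly supported, and $\til\sigma(0)=\sigma(x_0)((m+1)!/|\phi^{(m+1)}(x_0)|)^{1/(m+1)}$.

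Rescaling $y=h^{1/(m+1)}t$ yields $h^{1/(m+1)}\int\til\sigma(h^{1/(m+1)}t)\,e^{i\epsilon_0 t^{m+1}}dt$. The Taylor expansion $\til\sigma(h^{1/(m+1)}t)=\til\sigma(0)+h^{1/(m+1)}t\,r(h^{1/(m+1)}t)$ with $r$ smooth and bounded, combined with integration by parts on the tails $|t|\ge 1$ (where $\partial_t(t^{m+1})=(m+1)t^m$ is non-degenerate), isolates the leading contribution $h^{1/(m+1)}\til\sigma(0)\int_\R e^{i\epsilon_0 t^{m+1}}dt$ plus a remainder of order $h^{2/(m+1)}$. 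The Fresnel integral is evaluated by substituting $u=t^{m+1}$ on each half-line and rotating the contour from the real axis to $\arg u=\pm\pi/2$, reducing $\int_0^\infty e^{\pm it^{m+1}}dt$ to $e^{\pm i\pi/(2(m+1))}\bm{\Gamma}((m+2)/(m+1))$. Combining the two half-lines according to the parity of $m+1$ gives the case split of $\mu_m$: for $m+1$ even, $t^{m+1}\ge 0$ and both halves produce the same phase $e^{i\epsilon_0\pi/(2(m+1))}$; for $m+1$ odd, the halves carry conjugate phases $e^{\pm i\epsilon_0\pi/(2(m+1))}$ which sum to $2\cos(\pi/(2(m+1)))$, independent of $\epsilon_0$. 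The main technical obstacle is unifying the two parity cases $m$ odd and $m$ even in the construction of the smooth normalizing diffeomorphism $\zeta$; once this is done, the remainder control via tail integration by parts and the contour-rotation computation of the Fresnel integral are standard.
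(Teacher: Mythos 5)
Your overall strategy is the standard one (localize, Hadamard-normalize the phase, rescale, reduce to a Fresnel-type integral evaluated by contour rotation), and the paper itself gives no proof of this lemma --- it simply cites H\"ormander --- so yours is a proof from scratch rather than a variant of the authors'. Two points are worth making.

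First, your Fresnel computation, carried through, yields
\[
\int_{\R}e^{i\epsilon_0 t^{m+1}}\,dt \;=\; 2\,\mu_m\,\bm{\Gamma}\!\left(\tfrac{m+2}{m+1}\right),
\]
with the factor $2$ coming either from the two symmetric half-lines ($m+1$ even) or from $e^{i\theta}+e^{-i\theta}=2\cos\theta$ ($m+1$ odd). Combined with $\til\sigma(0)=\sigma(x_0)\bigl((m+1)!/|\phi^{(m+1)}(x_0)|\bigr)^{1/(m+1)}$, your leading coefficient is therefore \emph{twice} the $\til\omega$ displayed in the lemma. You should flag this rather than silently absorb it: the statement of the lemma in the paper is in fact off by a factor $2$, as one checks directly in the case $m=1$ where the standard formula gives $\sigma(x_0)\sqrt{2\pi h/|\phi''(x_0)|}\,e^{i\pi\ope{sgn}\phi''(x_0)/4}$, whereas the stated $\til\omega h^{1/2}$ gives only $\sigma(x_0)\sqrt{\pi h/(2|\phi''(x_0)|)}\,e^{i\pi\ope{sgn}\phi''(x_0)/4}$. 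The paper's subsequent derivation of $\omega_\rho$ in Lemma~\ref{lem:ComputeAlpha} is consistent with the \emph{corrected} constant, so this is a typographical slip in the lemma, not an error propagating through the paper. Your proof therefore establishes the correct result; it simply does not match the displayed constant.

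Second, the $\ord(h^{2/(m+1)})$ remainder after rescaling is asserted rather than proved. The estimate is standard but not entirely trivial: after writing $\til\sigma(h^{1/(m+1)}t)-\til\sigma(0)=h^{1/(m+1)}\,t\,r(h^{1/(m+1)}t)$ with $r$ bounded and $\til\sigma$ compactly supported, a single crude bound does not close (the integrand grows like $|t|$ and the effective domain has length $\sim h^{-1/(m+1)}$). One needs an integration by parts on $|t|\ge 1$ using $\partial_t e^{i\epsilon_0 t^{m+1}}=i\epsilon_0(m+1)t^m e^{i\epsilon_0 t^{m+1}}$, taking into account the compact support of $\til\sigma$ to kill the boundary term at $|t|\sim h^{-1/(m+1)}$, and the cases $m=1$ and $m=2$ (where $\int_1^\infty t^{-m}dt$ or $\int_1^R t^{1-m}dt$ is borderline) deserve a sentence each. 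The cutoff-tail replacement of $\int\chi(h^{1/(m+1)}t)e^{i\epsilon_0 t^{m+1}}dt$ by $\int_\R e^{i\epsilon_0 t^{m+1}}dt$, with error $\ord(h^{m/(m+1)})$, should likewise be recorded. A minor remark on the normalization step: since $\epsilon_0 G>0$ near $x_0$, your two formulas $\zeta(x)=(x-x_0)(\epsilon_0 G(x))^{1/(m+1)}$ and $\zeta(x)=(x-x_0)|G(x)|^{1/(m+1)}$ are in fact identical there, so no parity split is needed in the construction of $\zeta$; the parity only enters in evaluating the Fresnel integral.
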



\subsection{Asymptotics of the transfer matrix at a crossing point (Proof of Theorem~\ref{thm:ConnectionAllowed})}

In this paragraph, we suppose that Theorem \ref{thm:T-exists} holds, that is, the transfer matrix $T_\rho$ is well defined, and we prove Theorem \ref{thm:ConnectionAllowed}. The proof of Theorem \ref{thm:T-exists} will be given in the next paragraph.

We assume without loss of generality that $\rho_x=0$ and $V_1(0)=V_2(0)=0$.  Assumption \ref{H3} then implies $E_0>0$.
Let $I$ be an $h$-independent small interval containing $0$ inside such that $0$ is the only zero of $V_1$ and $V_2$ in $I$. We restrict our study to the region $I\times \R_\xi$ and set 
$\Gamma^I=\Gamma^I(E_0):=\Gamma(E_0)\cap (I\times \R_\xi)$,
$\Gamma_j^I(E_0):=\Gamma_j(E_0)\cap (I\times \R_\xi)$.

The two crossing points $\rho=(0,\sqrt{E_0})$ and $\breve\rho=(0,-\sqrt{E_0})$ divide the characteristic set $\Gamma^I$ into eight curves  (see Fig. \ref{Fig:A}): For $j=1,2$, 
\begin{align*}
&\edg_j^\out:=\Gamma_j^I\cap\{x>0, \xi>0\}, \quad
\breve\edg_j^\out:=\Gamma_j^I\cap\{x<0, \xi<0\},\\ 
&\edg_j^\inc:=\Gamma_j^I\cap\{x<0, \xi>0\},\quad 
\breve\edg_j^\inc:=\Gamma_j^I\cap\{x>0, \xi<0\}.
\end{align*}

\begin{figure}[t]
\centering
\includegraphics[bb=0 0 760 389, width=11.5cm]{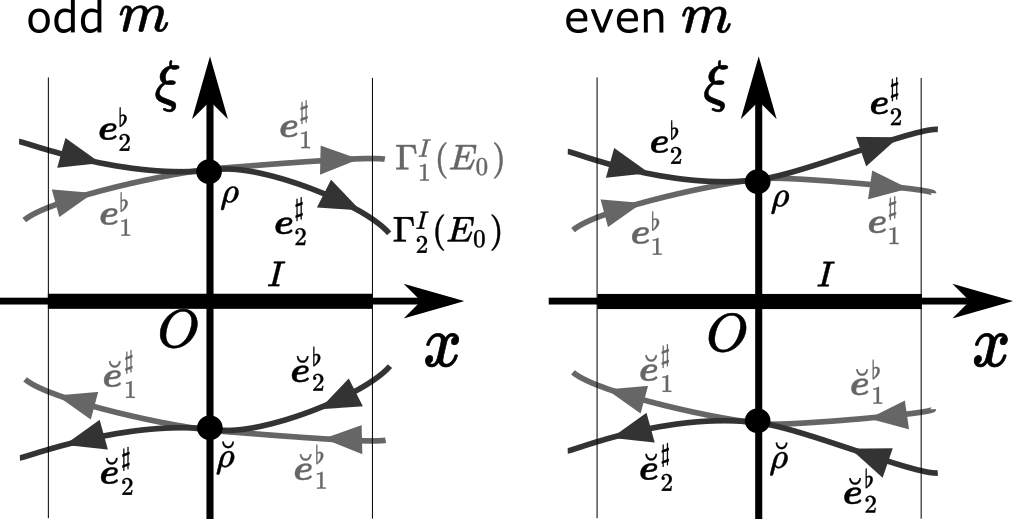}
\caption{The characteristic set near a pair of crossing points}
\label{Fig:A}
\end{figure}

We use bases of solutions to the scalar equation $(P_j-E)u=0$ $(j=1,2)$ to construct solutions to the system $(P-E)w=0$.  The scalar equations can be solved locally by the method of successive approximation (see \cite{Ol}) since $E_0-V_j$ is $h$-independently positive on $I$. 
\begin{proposition}\label{prop:SclAllowed}
For $j=1,2$, there exists a pair of solutions $(\breve{u}_j, u_j)=(\breve{u}_j(x,h,E), u_j(x,h,E))$  to the scalar equation $(P_j-E)u=0$  satisfying the following properties:
\begin{enumerate}
\item $\breve{u}_j$ and $u_j$ admit  the following asymptotic behaviors
\be\label{eq:scalarWKB-allowed}
\breve{u}_j=\frac{1+\ord(h)}{\sqrt[4]{E-V_j(x)}} e^{-i\int_0^x\sqrt{E-V_j(t)}\,dt/h},\quad
u_j=\frac{1+\ord(h)}{\sqrt[4]{E-V_j(x)}} e^{i\int_0^x\sqrt{E-V_j(t)}\,dt/h},
\ee
as $h\to0^+$ uniformly on $I$. 
\item For real $E$ near $E_0$, one has $\breve{u}_j=\overline{u_j}$, and 
\be
\ope{WF}_h(\breve{u}_j)\subset\Gamma_j^I(E_0)\cap\{\xi<0\},\quad
\ope{WF}_h(u_j)\subset\Gamma_j^I(E_0)\cap\{\xi>0\}.
\ee
\item They are normalized to satisfy
\be
\W(\breve{u}_j,u_j)=\breve{u}_j(u_j)'-(\breve{u}_j)'u_j=2ih^{-1}.
\ee
\end{enumerate}
\end{proposition}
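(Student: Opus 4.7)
The plan is to build $u_j$ via the Liouville--Green (WKB) method on the compact classically allowed interval $I$, to build $\breve u_j$ by the parallel construction with the opposite phase, and then to verify the three listed properties in turn. I would start from the ansatz
\[
u_j(x;h,E) = (E - V_j(x))^{-1/4}\, b_j(x;h,E)\, e^{i\phi_j(x)/h}, \qquad \phi_j(x) := \int_0^x \sqrt{E - V_j(t)}\, dt,
\]
all well defined holomorphically in $E \in \cR$ because $E_0 - V_j > 0$ on $I$ and $|E - E_0| \leq Lh$. Substituting into $(P_j - E)u = 0$, the eikonal identity $(\phi_j')^2 = E - V_j$ kills the $h^{-2}$ term, while the amplitude prefactor $(E-V_j)^{-1/4}$ is exactly the Liouville transport solution that annihilates the $h^{-1}$ terms. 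After the change of variable $s = \phi_j(x)$, the equation for $\tilde b_j(s) := b_j(x(s))$ takes the normal form
\[
\tilde b_j''(s) + \frac{2i}{h}\, \tilde b_j'(s) + g_j(s;E)\, \tilde b_j(s) = 0,
\]
with $g_j$ smooth and uniformly bounded on $\phi_j(I)$.

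I would then impose the initial data $\tilde b_j(0) = 1$, $\tilde b_j'(0) = 0$ and use the integrating factor $e^{2is/h}$ on the first-order part to rewrite the equation as the Volterra integral equation
\[
\tilde b_j(s) = 1 - \frac{ih}{2} \int_0^s g_j(t;E)\, \bigl( e^{-2i(s-t)/h} - 1 \bigr)\, \tilde b_j(t)\, dt,
\]
whose kernel is uniformly of order $h$ on $\phi_j(I)$. Picard iteration converges in $C(\phi_j(I))$ and produces the unique solution $\tilde b_j = 1 + O(h)$, uniformly in $E \in \cR$; this gives the asymptotic form \eqref{eq:scalarWKB-allowed} for $u_j$. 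The solution $\breve u_j$ is constructed symmetrically, with phase $-\phi_j$ in place of $+\phi_j$ and initial data chosen to match the Wronskian normalisation below.

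For property (2): when $E$ is real, all coefficients of the scalar equation and the phase $\phi_j$ are real, so $\overline{u_j}$ also solves the equation; matching its initial data to those of $\breve u_j$ and invoking uniqueness for the IVP forces $\breve u_j = \overline{u_j}$. The wave-front-set inclusions come from the WKB form of $u_j$: for any scalar symbol $\chi \in S^0_1$ vanishing in a neighbourhood of the Lagrangian $\{(x, \phi_j'(x)) : x \in I\} = \Gamma_j^I(E_0) \cap \{\xi > 0\}$, a standard non-stationary-phase estimate in the oscillatory integral representing $\chi^w(x, hD_x) u_j$ yields $O(h^\infty)$ in $L^2(I)$, whence $\ope{WF}_h(u_j) \subset \Gamma_j^I(E_0) \cap \{\xi > 0\}$; the analogous argument applies to $\breve u_j$. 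For property (3), since the scalar equation has no first-order term, $\W(\breve u_j, u_j)$ is constant in $x$ by Abel's formula, so it suffices to evaluate it at $x = 0$; a direct computation using the initial data above yields $\W(\breve u_j, u_j) = 2ih^{-1}$ exactly.

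I do not foresee a serious conceptual obstacle: the construction is essentially Olver's method (\cite{Ol}, as used already in \cite{FMW1}), and the verifications reduce to Abel's identity, uniqueness of the IVP, and stationary phase. The most delicate step to carry out cleanly is the wave-front-set bookkeeping, which uses standard microlocal machinery but must be performed with attention to uniformity both in $h$ and in the complex parameter $E \in \cR$. The latter uniformity is automatic here because $|E - E_0| \leq Lh \ll 1$ is much smaller than the uniform gap $E_0 - V_j > 0$ on $I$, so the eikonal, amplitude, and kernel $g_j$ all depend holomorphically on $E$ and remain uniformly bounded.
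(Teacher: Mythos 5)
Your proposal is correct and follows essentially the same route the paper indicates: it is exactly Olver's successive-approximation (Liouville--Green) method that the authors cite just before the proposition, with the phase change of variable $s=\phi_j(x)$ reducing the transport equation for the amplitude to a Volterra equation whose kernel is $\mathcal{O}(h)$ on the compact allowed interval $I$. The only cosmetic difference is that you build in the initial data $\tilde b_j(0)=1$, $\tilde b_j'(0)=0$ from the start, so the Wronskian normalization $\W(\breve u_j,u_j)=2ih^{-1}$ and the relation $\breve u_j=\overline{u_j}$ for real $E$ drop out immediately, whereas the paper's remark suggests obtaining any pair satisfying \eqref{eq:scalarWKB-allowed} first and then adjusting by a linear combination; both routes are equivalent and equally short.
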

\begin{remark}
Only the first property \eqref{eq:scalarWKB-allowed} of the three previous properties is essential for the existence. 
Once one obtains a pair of solutions which satisfies \eqref{eq:scalarWKB-allowed}, the pair $(\breve{u}_j,u_j)$ satisfying the other two properties is obtained by a suitable linear combination. 
\end{remark}

We denote by $C(I)$ the Banach space of continuous functions on $I$ endowed with the norm $\left\|f\right\|_{C(I)}=\sup_{x\in I}\left|f(x)\right|$. 
For $j=1,2$ we take $s_j$, $t_j\in I$ and define operators by 
\be
\label{kj}
K_j:=K(\breve u_j,u_j,s_j,t_j):C(I)\to C(I)
\ee
 \begin{proposition}\label{prop:ConvSer} 
The operator $\tilde K_j=-hK_jU_j$
 is uniformly bounded for $E\in\cR$ and $h$ small enough. 
Moreover, we have the following uniform estimates 
\be\label{eq:ResEstim}
\left\|\til{K}_1\til{K}_2\right\|_{\mc{B}(C(I))}=\ord(h^{1/(m+1)}),\quad
\left\|\til{K}_2\til{K}_1\right\|_{\mc{B}(C(I))}=\ord(h^{1/(m+1)}).
\ee 
\end{proposition}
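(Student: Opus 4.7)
The proof has two main ingredients, matching the two assertions of the proposition. First I would prove that $\|\tilde K_j\|_{\mathcal{B}(C(I))}=O(1)$ uniformly in $E\in\cR$ and $h$ small. Second, I would extract the $O(h^{1/(m+1)})$ gain for the compositions via degenerate stationary phase on the intermediate integration, using Lemma~\ref{lem:OI-degm}.

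For uniform boundedness, write $\tilde K_j=-hK_j(r_0^{(j)}+r_1^{(j)}h\partial_x)$. The multiplicative part $-hK_j(r_0^{(j)}f)$ is immediately $O(1)$ on $C(I)$: the Wronskian normalization $\W(\breve u_j,u_j)=2ih^{-1}$ cancels the outer $h$, and combined with $|I|<\infty$ and $|u_j|,|\breve u_j|=O(1)$ on $I$ (Proposition~\ref{prop:SclAllowed}) this gives a uniform bound. The transport part $-hK_j(r_1^{(j)}h\partial_x f)$ is handled by integrating by parts in each of the two integrals in the definition \eqref{eq:Def-K-Gen} of $K_j$, so as to move the derivative off $f$. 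The boundary contributions are $O(h)$ and harmless, while in the bulk integrals the derivative lands on $u_j r_1^{(j)}$ or $\breve u_j r_1^{(j)}$, producing a leading $ih^{-1}\sqrt{E-V_j}\,u_j r_1^{(j)}$ (resp.\ its conjugate) by the WKB asymptotics; this $h^{-1}$ is exactly compensated by the remaining $h^2$-prefactor of $\tilde K_j$, leaving an $O(1)$ bound.

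For the compositions, the key computation is $\tilde K_1\tilde K_2 f(x)$ expanded as an iterated double integral. After integrating by parts in the innermost variable $z$ to remove the derivatives of $f$ introduced by $U_2$ (and likewise in the intermediate variable $y$ for those introduced by $U_1$), and swapping the order of integration so as to isolate the $y$-integral, each of the four summands generated by the $(K_1,K_2)$-expansion carries an oscillatory $y$-integral whose phase is one of $\pm\phi_1(y)\pm\phi_2(y)$ according to the choice of factors from $\{u_1,\breve u_1\}$ and $\{u_2,\breve u_2\}$, with smooth $O(1)$ amplitude. In the two \emph{same-sign} cases $\pm(\phi_1+\phi_2)$, the derivative $\pm(\sqrt{E-V_1}+\sqrt{E-V_2})$ is bounded away from zero on $I$, so Lemma~\ref{lem:OI-degm}(1) gives an $O(h)$ bound on the $y$-integral, uniformly in its endpoints. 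In the two \emph{opposite-sign} cases $\pm(\phi_1-\phi_2)$, the derivative $\pm(\sqrt{E-V_1}-\sqrt{E-V_2})$ vanishes at $y=0$ to order exactly $m$ (by Taylor expansion together with \eqref{MRO}), and Lemma~\ref{lem:OI-degm}(2) gives an $O(h^{1/(m+1)})$ bound. The remaining $z$-integration against $f$ contributes at most $\|f\|_{C(I)}|I|$, and collecting the prefactor $h^2\cdot(2ih)^{-2}=-1/4$ with the dominant $O(h^{1/(m+1)})$ piece gives $\|\tilde K_1\tilde K_2\|_{\mathcal{B}(C(I))}=O(h^{1/(m+1)})$; the proof for $\tilde K_2\tilde K_1$ is symmetric.

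The main technical subtlety is that the stationary point $y=0$ may lie at an endpoint of the $y$-integration, depending on $x$ and the choice of $s_j,t_j$, a case not covered verbatim by Lemma~\ref{lem:OI-degm}(2). I would handle this by inserting a smooth cutoff $\chi$ supported in a small neighborhood of $0$: on the support of $1-\chi$ the phase is nonstationary and the $O(h)$ bound applies, while on the support of $\chi$ the $y$-integration can be extended to a slightly larger interval having $0$ in its interior, the extension being $O(h^\infty)$ by the nonstationarity of the phase off $0$, which reduces the matter to the interior-critical-point setting of Lemma~\ref{lem:OI-degm}(2). Uniformity in $E\in\cR$ follows directly from the uniform positivity of $E-V_j$ on $I$ for $h$ small enough.
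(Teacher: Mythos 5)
Your overall strategy is the one the paper uses: reduce to the scalar interaction case, exploit the Wronskian normalization to get $O(1)$ boundedness of each $\widetilde K_j$, then write the composition as a $y$-integral with phase $\pm\phi_1\pm\phi_2$ and apply Lemma~\ref{lem:OI-degm}, with the stationary point at the crossing giving the $h^{1/(m+1)}$ gain. Your identification of the same-sign versus opposite-sign phase cases, and the use of \eqref{MRO} to see that the critical point of $\phi_1-\phi_2$ has order exactly $m$, match the paper. Two points of divergence deserve comment.

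First, the paper does not swap the order of integration. It keeps the iterated integral as written in \eqref{eq:CompResolv} and simply absorbs the inner $t$-integral into the amplitude $\sigma(y,h)$, checking that $\sigma$ and $\partial_y\sigma$ are uniformly bounded (which is all Lemma~\ref{lem:OI-degm} needs). Your Fubini swap buys nothing and introduces a complication: after swapping, the $y$-limits of the inner integral depend on $z$ as well as on $x$, so the ``isolated $y$-integral'' you invoke is actually a family of oscillatory integrals with $z$-dependent endpoints. The paper's bookkeeping is cleaner.

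Second, and more importantly, the endpoint fix you propose does not work as stated. You write that the $y$-integral can be extended to an interval with $0$ in its interior, ``the extension being $O(h^\infty)$ by the nonstationarity of the phase off $0$.'' When the outer variable $x$ is bounded away from $0$ this is fine (the cutoff is already zero there, so the extension is identically zero), but the case you are trying to repair is precisely $x$ near $0$, and then the added piece $\int_x^{\delta}$ runs over a region where $\phi'$ is \emph{not} bounded below by an $h$-independent constant: as $x\to0$, either $0$ enters $[x,\delta]$ or $\min_{[x,\delta]}\lvert\phi'\rvert\to0$. So the extension is at best itself of order $h^{1/(m+1)}$, and claiming $O(h^\infty)$ there would give a false uniform bound. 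You are right that Lemma~\ref{lem:OI-degm}(2) as stated (interior critical point) does not cover the boundary-critical-point configuration that arises when $x$ is near $\rho_x$, and the paper itself glosses over this. The correct repair is a van der Corput--type inequality, which gives $\bigl\lvert\int_a^b e^{i\phi/h}\,dy\bigr\rvert\le c_{m+1}(h/\lambda)^{1/(m+1)}$ uniformly over all subintervals $[a,b]$ on which $\lvert\phi^{(m+1)}\rvert\ge\lambda>0$, combined with one integration by parts to handle the $C^1$ amplitude; splitting $I$ by a cutoff near $0$ (where the $(m+1)$-st derivative of $\phi$ is bounded below) and away from $0$ (where $\phi'$ is bounded below) then gives the endpoint-uniform bound you need. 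Replacing your ``extension is $O(h^\infty)$'' step by this argument would make the proof complete.
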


\begin{proof} 
We follow the scheme of \cite[Proposition~3.1]{FMW3}, where the authors estimated the norm of similar operators for $m=1$. Remark that the estimate $h^{1/2}$ here in \eqref{eq:ResEstim} is better than that of $h^{1/3}$ in \cite[Proposition~3.1]{FMW3}. This is because the latter is a global estimate on $(-\infty, 0)$ which contains turning points.

For simplicity, we assume that the interaction operator is a multiplication operator $r_0(x)$. The general case of first differential operator can be treated similarly after an integration by parts as in \cite{FMW3}. 

The operators $\til{K}_j$ are obviously bounded in $C(I)$ since
\begin{align*}
\til{K}_jf(x)&=-hK_jU_j f(x)
=\frac i2\left(\breve u_j(x)\int_{s_j}^xr_0(y)u_j(y)f(y)dy-u_j(x)\int_{t_j}^xr_0(y)\breve u_j(y)f(y)dy\right)
\end{align*}
and $r_0$, $u_j$ and $\breve u_j$ are of $\ord (1)$ on $I$.

As for the estimate for the composed operator $\til{K}_1\til{K}_2$, we study the integral of the form
\be\label{eq:CompResolv}
\int_{s}^xr_0(y)v_1(y)v_2(y)\left (\int_{s'}^yr_0(t)  v_3(t)f(t)dt\right )dy
\ee
for $s,s'\in I$ and $(v_1,v_2,v_3)=(u_1,\breve u_2,u_2), (u_1,u_2,\breve u_2), (\breve u_1,\breve u_2,u_2)$ and $(\breve u_1,u_2,\breve u_2)$. 
If we write
\eqref{eq:scalarWKB-allowed} as
$\breve{u}_j=\breve a_j(x,h) e^{-i\phi_j(x)/h},\quad
{u}_j=a_j(x,h) e^{i\phi_j(x)/h}$, $\phi_j(x)=\int_0^x\sqrt{E-V_j(t)}\,dt$, then
\eqref{eq:CompResolv} can be written in the form
\be
\int_s^x \sigma(y,h)e^{i\phi(y)/h}dy,
\ee
where 
$$
\sigma(y,h)=r_0(y)b_1(y,h)b_2(y,h)\int_{s'}^yr_0(t)v_3(t)f(t)dt\,\, \text{ with }
b_j=a_j\text{ or } \breve a_j,
$$
$$
\phi(y)=\int_0^y\left(\pm\sqrt{E-V_1(t)}\pm\sqrt{E-V_2(t)}\right)dt.
$$
Notice that the symbol $\sigma(y)$ and its derivative $D_y\sigma(y)$ are both uniformly bounded with respect to $h$. 
 Lemma~\ref{lem:OI-degm} then implies that this integral is estimated by $\ord(h^{1/(m+1)})$.   
\end{proof}

The following proposition suggests an appropriate choice of the endpoints $s_j$ and $t_j$ for the definition of the $K_j$'s. \begin{proposition}\label{prop:MLbehavior}
Let $K_1$ and $K_2$ be given by \eqref{kj}. If $s_1,$ $s_2$, $t_1$, $t_2$ are negative, then for $j=1,2$, we have
\be
\label{wjflat}
w_j(K_1,K_2,u_j)\equiv
\left\{
\begin{aligned}
& f_j^\inc+\ord (h) \qtext{ near }\edg_j^\inc,\\
&\ord (h) \qtext{ near } \edg_{\hat j}^\inc\cup\breve \edg_1^\out\cup\breve \edg_2^\out,
\end{aligned}
\right.
\ee
\be
w_j(K_1,K_2,\breve u_j)\equiv
\left\{
\begin{aligned}
&\breve{f}_j^\out+\ord (h)\qtext{ near}\breve \edg_j^\out,\\
&\ord (h)\qtext{ near} \breve \edg_{\hat j}^\out\cup\edg_1^\inc\cup\edg_2^\inc.
\end{aligned}
\right.
\ee
 Similarly, if $s_1,$ $s_2$, $t_1$, $t_2$  are positive, then we have
\be
w_j(K_1,K_2, u_j)\equiv
\left\{
\begin{aligned}
& f_1^\out+\ord (h)\qtext{ near} \edg_j^\out,\\
&\ord (h)\qtext{ near} \edg_{\hat j}^\out\cup\breve \edg_1^\inc\cup\breve \edg_2^\inc,
\end{aligned}
\right.
\ee
\be
w_j(K_1,K_2,\breve u_j)\equiv
\left\{
\begin{aligned}
&\breve{f}_j^\inc+\ord (h)\qtext{ near}\breve \edg_j^\inc,\\
&\ord (h)\qtext{ near} \breve \edg_{\hat j}^\inc\cup\edg_1^\out\cup\edg_2^\out.
\end{aligned}
\right.
\ee
\end{proposition}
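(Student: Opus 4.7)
The plan is to obtain uniform pointwise estimates on $I\cap\{x<0\}$ for the two entries of $w_j(K_1,K_2,u_j)$ and only afterwards translate them into microlocal statements on the four trajectories $\edg_1^\inc,\edg_2^\inc,\breve\edg_1^\out,\breve\edg_2^\out$, all of which lie in $\{x<0\}$. The crucial quantitative ingredient is a sharpening of Proposition~\ref{prop:ConvSer} adapted to this half-line: when $s_1,s_2,t_1,t_2<0$, every integral appearing in $(\tilde K_1\tilde K_2 f)(x)$ with $x<0$ runs over a sub-interval of $\{y<0\}$, on which each of the possible phases $\pm\phi_1\pm\phi_2$ has derivative $\pm\sqrt{E-V_1(y)}\pm\sqrt{E-V_2(y)}$ that never vanishes, since the unique zero of $\sqrt{E-V_1}-\sqrt{E-V_2}$ in $I$ sits at $y=0$. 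Applying Lemma~\ref{lem:OI-degm}(1) as in the proof of Proposition~\ref{prop:ConvSer} then upgrades the $\ord(h^{1/(m+1)})$ bound there to
\[
\|\tilde K_1\tilde K_2\|_{C(I\cap\{x<0\})\to C(I\cap\{x<0\})}=\ord(h),
\]
and analogously for $\tilde K_2\tilde K_1$. Consequently the Neumann series defining $J_1$ and $J_2$ satisfy $J_\ell=I+\ord(h)$ on $C(I\cap\{x<0\})$; in particular $J_1u_1=u_1+\ord(h)$ pointwise on $\{x<0\}$.

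For the second entry I apply $\tilde K_2$ to this identity. Written out,
\[
\tilde K_2u_1(x)=-\tfrac1{2i}\Bigl(\breve u_2(x)\!\int_{s_2}^x\!\! u_2U_2u_1\,dy-u_2(x)\!\int_{t_2}^x\!\!\breve u_2U_2u_1\,dy\Bigr),
\]
and the two oscillatory integrands carry phases $\phi_1+\phi_2$ and $\phi_1-\phi_2$ respectively; both derivatives have constant sign on $\{y<0\}$, so Lemma~\ref{lem:OI-degm}(1) yields $\tilde K_2u_1=\ord(h)$ pointwise on $\{x<0\}$. Combined with the first step and the plain boundedness of $\tilde K_2$ on $C(I\cap\{x<0\})$ one concludes
\[
\tilde K_2J_1u_1=\tilde K_2u_1+\tilde K_2(J_1u_1-u_1)=\ord(h)\qtext{pointwise on}\{x<0\}.
\]

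To pass from pointwise bounds to microlocal ones I use the standard observation that a function uniformly $\ord(h)$ on an open set $\Omega\subset\R_x$ is microlocally $\ord(h)$ at every $(x_0,\xi_0)$ with $x_0\in\Omega$, obtained by localizing against a symbol supported in $\Omega\times\R_\xi$. Since each of the four trajectories $\edg_1^\inc,\edg_2^\inc,\breve\edg_1^\out,\breve\edg_2^\out$ sits in $\{x<0\}$, this immediately produces the $\ord(h)$ control of the second entry of $w_1(K_1,K_2,u_1)$ in \eqref{wjflat}, as well as the $\ord(h)$ part of the first entry on $\edg_2^\inc\cup\breve\edg_1^\out\cup\breve\edg_2^\out$, provided one combines it with Proposition~\ref{prop:SclAllowed}(2): $\ope{WF}_h(u_1)\subset\Gamma_1\cap\{\xi>0\}$ is disjoint from those three curves, so $u_1\equiv0$ microlocally there. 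Near $\edg_1^\inc$ instead, the WKB form~\eqref{eq:scalarWKB-allowed} identifies $u_1$ at leading order with the first entry of $f_1^\inc$, the second entry of $f_1^\inc$ being itself $\ord(h)$ since $\sigma_{1,2,0}\equiv0$. This closes the case $j=1$, and $j=2$ is identical by symmetry.

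The three remaining assertions require only symbolic changes. For $w_j(K_1,K_2,\breve u_j)$ with $s_j,t_j<0$ one repeats the argument with $\breve u_j$ in place of $u_j$; now $\ope{WF}_h(\breve u_j)\subset\Gamma_j\cap\{\xi<0\}$, so the leading microlocal term lives on $\breve\edg_j^\out$ and the other three contributions reduce to $\ord(h)$. The two assertions with $s_j,t_j>0$ are obtained by the entirely parallel analysis on $\{x>0\}$, which is the half-line now containing the four relevant trajectories $\edg_j^\out,\breve\edg_j^\inc$. The main obstacle is really the sharpened operator-norm bound on $C(I\cap\{x<0\})$; once it is in place, the rest of the argument is wavefront bookkeeping together with the straightforward identification of the leading WKB terms in Propositions~\ref{prop:mlWKB} and~\ref{prop:SclAllowed}.
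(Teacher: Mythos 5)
Your argument is essentially the paper's proof: the key observation in both is that once $s_j,t_j<0$ and one evaluates at $x<0$, every oscillatory integral appearing in the construction has phase $\pm\phi_1\pm\phi_2$ with non-vanishing derivative (the only critical point being the crossing at $0$, which is outside the range of integration), so Lemma~\ref{lem:OI-degm}(1) gives $\ord(h)$ rather than the global $\ord(h^{1/(m+1)})$ of Proposition~\ref{prop:ConvSer}. The paper proves this directly for the single quantity $\tilde K_2 u_1$ on $[-\delta,-\delta/2]$ and then concludes; you instead package it as an improved operator-norm bound, which is a cleaner way of seeing why the Neumann series $J_1$, $J_2$ contribute only $\ord(h)$ corrections on the left half.

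One small imprecision worth flagging: as stated, the claim $\|\tilde K_1\tilde K_2\|_{C(I\cap\{x<0\})\to C(I\cap\{x<0\})}=\ord(h)$ is not quite right, because the derivative of the phase $\phi_1-\phi_2$, namely $\sqrt{E-V_1}-\sqrt{E-V_2}$, degenerates as $y\to0^-$, so the constant in the non-stationary-phase estimate blows up if $x$ is allowed to approach $0$. The estimate is uniform only after restricting to a compact subinterval such as $[-\delta,-\delta/2]$ bounded away from the crossing; this is exactly what the paper does, and it is also all that is needed, since the microlocal conclusion concerns only the fixed points $\rho_j^\inc,\,\breve\rho_j^\out$ which lie at a fixed distance from $\rho$. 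With that restriction in place, the rest of your argument (the bootstrapping $J_1 u_1 = u_1 + \ord(h)$, the $\ord(h)$ bound on the off-diagonal entry $\tilde K_2 J_1 u_1$, and the identification of $(u_1,0)^T$ with $f_1^\inc$ near $\edg_1^\inc$ and with $0$ near the other three trajectories via Proposition~\ref{prop:SclAllowed}(2) and Proposition~\ref{prop:mlWKB}) matches the paper's reasoning.
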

\begin{proof}
We only show \eqref{wjflat} for $j=1$ supposing $s_2, t_2<0$. 
For this, it is enough to prove a local estimate
\be
\label{wkku}
w_1(K_1,K_2, u_1)=\begin{pmatrix} u_1\\0\end{pmatrix}+\ord(h).
\ee
on an interval $[-\dl,-\dl/2]$ with a small $\delta>0$.
In fact, we see from Proposition~\ref{prop:mlWKB} and \eqref{eq:scalarWKB-allowed} that  the vector-valued function ${}^t(u_1, 0)$ has the same microlocal asymptotic behavior as the right hand side of \eqref{wjflat}.

To prove \eqref{wkku}, it is enough to show the estimate
\be\label{eq:Ord-h}
\til{K}_2(u_1)(x)=\ord(h) \qtext{uniformly for} x\in [-\dl,-\dl/2].
\ee
If $U_2=r_0(x)$, for simplicity as in the proof of the previous lemma,
$$
\til{K}_2(u_1)(x)=-hK_2U_2 u_1(x)
=\frac i2\left(\breve u_j(x)\int_{s_2}^xr_0(y)u_j(y)u_1(y)dy-u_j(x)\int_{t_2}^xr_0(y)\breve u_j(y)u_1(y)dy\right)
$$
is again an oscillatory integral of type \eqref{Ihint} with a phase function \eqref{+-+-}. The critical point is the crossing point $x=0$, and it is not contained in the integration intervals $[s_2, x], [t_2,x]$ since $s_2, t_2, x$ are all negative. 
Thus  the first assertion of 
Lemma \ref{lem:OI-degm} applies and \eqref{eq:Ord-h} is obtained.
\end{proof}

Define the operators $K_{j,L}$ and $K_{j,R}$ acting on $C(I)$ using \eqref{eq:Def-K-Gen} by
\be\label{eq:DefKjS-A}
K_{j,L}:=K(\breve u_j,u_j,-\dl,-\dl),\quad
K_{j,R}:=K(\breve u_j,u_j,\dl,\dl),
\ee
and solutions $w_j^\out$, $w_j^\inc$ $(j=1,2)$ to the equation $(P-E)w=0$ using \eqref{Def-Sol-Gen} by
\ben
\begin{aligned}
&w_j^\out:=w_j(K_{1,R},K_{2,R},u_j),\quad&&
w_j^\inc:=w_j(K_{1,L},K_{2,L},u_j),\\
&\breve w_j^\out:=w_j(K_{1,L},K_{2,L},\breve u_j),\quad&&
\breve w_j^\inc:=w_j(K_{1,R},K_{2,R},\breve u_j).
\end{aligned}
\een
These solutions are well-defined according to Proposition~\ref{prop:ConvSer}. By Proposition~\ref{prop:MLbehavior},
we know the microlocal asymptotic behavior of $w_j^\inc$ and $\breve w_j^\out$ on the left of $\rho_x(=0)$, and that of $w_j^\out$ and $\breve w_j^\inc$ on the right of $\rho_x$.
In particular, $w_1^\inc$ and $w_2^\inc$ satisfy \eqref{eq:BehaveInc}. We compute the coefficients $t_{j,k}^\out$'s in \eqref{eq:BehaveOut} by representing $w_1^\inc$ and $w_2^\inc$ as linear combinations of the elements of the basis $(\breve w_1^\inc,w_1^\out,\breve w_2^\inc,w_2^\out)$.
\begin{proposition}\label{prop:TransfAlpha}
We have the following linear relationship between the two bases
\be\label{eq:Change-Bases}
(\breve w_1^\out,w_1^\inc,\breve w_2^\out,w_2^\inc)=(\breve w_1^\inc,w_1^\out,\breve w_2^\inc,w_2^\out)A(E,h).
\ee
Here, the $4\times 4$-matrix $A$ is given by $A=(I_4+A_R)^{-1}(I_4+A_L)$ with
\begin{align*}
&A_S=\begin{pmatrix}
\breve \beta_{1,S} \breve u_1	&\breve \beta_{1,S}u_1	&\breve \alpha_{2,S} \breve u_2	&\breve\alpha_{2,S} u_2\\
\beta_{1,S}\breve u_1	&\beta_{1,S}u_1	&\alpha_{2,S}\breve u_2	&\alpha_{2,S}u_2\\
\breve \alpha_{1,S}\breve u_1	&\breve \alpha_{1,S}u_1 	&\breve \beta_{2,S}\breve u_2	&\breve \beta_{2,S}u_2\\
\alpha_{1,S}\breve u_1	&\alpha_{1,S}u_1	&\beta_{2,S}\breve u_2	&\beta_{2,S}u_2
\end{pmatrix},
\quad
S=L,R,
\end{align*}
where $\alpha_{j,S}$, $\breve\alpha_{j,S}$, $\beta_{j,S}$, $\breve\beta_{j,S}$ $(j=1,2,$ $S=L,R)$ are linear functionals defined by
\begin{align*}
&
\alpha_{j,L}f=-{\frac i2}\int_{-\delta}^0	 \breve u_{\hat j}(y)U_{\hat j}J_{j,L}f(y)dy,\quad 
& &\breve\alpha_{j,L}f={\frac i2}\int_{-\delta}^0	 u_{\hat j}(y)U_{\hat j}J_{j,L}f(y)dy,\\
&
\beta_{j,L}f=-{\frac i2}\int_{-\delta}^0 \breve u_j(y)U_j\til{K}_{\hat j,L}J_{j,L}f(y)dy,\quad 
 & &\breve\beta_{j,L}f={\frac i2}\int_{-\delta}^0 u_j(y)U_j\til{K}_{\hat j,L}J_{j,L}f(y)dy,\\
&
\alpha_{j,R}f={\frac i2}\int_0^\delta	 \breve u_{\hat j}(y)U_{\hat j}J_{j,R}f(y)dy,
\quad 
& &\breve\alpha_{j,R}f=-{\frac i2}\int_0^\delta	 \breve u_{\hat j}(y)U_{\hat j}J_{j,R}f(y)dy,\\
&
\beta_{j,R}f={\frac i2}\int_0^\delta \breve u_j(y)U_j\til{K}_{\hat j,R}J_{j,R}f(y)dy,
\quad 
& &\breve\beta_{j,R}f=-{\frac i2}\int_0^\delta u_j(y)U_j\til{K}_{\hat j,R}J_{j,R}f(y)dy,
\end{align*}
where $J_{j,S}=J_j(K_{1,S}, K_{2,S})$.
\end{proposition}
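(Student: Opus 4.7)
The plan is to view both $4$-tuples as the images of a common parametrization $\C^4\to\{\text{solutions of }(P-E)w=0\text{ on }I\}$ and read off the change-of-basis matrix directly.

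For any such solution $w=(v_1,v_2)^\top$ and any endpoint choice $S\in\{L,R\}$, the scalar equations $(P_j-E)v_j=-hU_jv_{\hat j}$, together with the fact that $(\breve u_j,u_j)$ is a basis of $\ker(P_j-E)$ and that $K_{j,S}$ is a right inverse of $P_j-E$ (a direct computation using $W(\breve u_j,u_j)=2ih^{-1}$), yield a unique decomposition
\[
v_j=c_{j,S}^-\breve u_j+c_{j,S}^+u_j+\tilde K_{j,S}v_{\hat j},\qquad j=1,2.
\]
This defines a bijection $\Phi_S:\C^4\ni(c_{1,S}^-,c_{1,S}^+,c_{2,S}^-,c_{2,S}^+)\mapsto w$, well-defined by Neumann inversion of the coupled system (Proposition \ref{prop:ConvSer}). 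Unpacking \eqref{eq:DefKjS-A}--\eqref{Def-Sol-Gen}, one sees $\Phi_L(e_1)=\breve w_1^\out$, $\Phi_L(e_2)=w_1^\inc$, $\Phi_L(e_3)=\breve w_2^\out$, $\Phi_L(e_4)=w_2^\inc$, and $\Phi_R(e_1)=\breve w_1^\inc$, $\Phi_R(e_2)=w_1^\out$, $\Phi_R(e_3)=\breve w_2^\inc$, $\Phi_R(e_4)=w_2^\out$. Hence the matrix in \eqref{eq:Change-Bases} is $A=\Phi_R^{-1}\Phi_L$.

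The key algebraic input, obtained by direct manipulation of \eqref{eq:Def-K-Gen}, is the rank-two identity
\[
(\tilde K_{j,L}-\tilde K_{j,R})g=\frac{i}{2}\Big(\breve u_j\int_{-\delta}^{\delta}u_jU_jg\,dy-u_j\int_{-\delta}^{\delta}\breve u_jU_jg\,dy\Big),
\]
whose right-hand side lies in $\operatorname{span}(\breve u_j,u_j)$. I would then split each integral at $0$ to isolate the $[-\delta,0]$ and $[0,\delta]$ contributions and introduce the intermediate parametrization $\Phi_M$ defined like the $\Phi_S$ above but using the operator of the form \eqref{eq:Def-K-Gen} with both endpoints $s_j=t_j=0$. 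The strategy is then to prove
\[
\Phi_S=\Phi_M\circ(I_4+A_S),\qquad S\in\{L,R\},
\]
which immediately yields $A=(I_4+A_R)^{-1}(I_4+A_L)$.

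The identity $\Phi_S=\Phi_M\circ(I_4+A_S)$ would be established by expanding the Neumann series defining $J_{j,S}$ inside $\Phi_S(c)$, identifying the $M$-coordinates of $\Phi_S(c)$, and reading off $A_Sc$ as the residual linear combination of $\breve u_j,u_j$. Two kinds of terms arise for each entry of $A_S$: a ``direct'' contribution through a single $\tilde K_{\hat j,S}$, producing the $\alpha$-type functionals, and a ``cross'' contribution through $\tilde K_{\hat j,S}$ composed with $J_{j,S}\tilde K_{j,S}$, producing the $\beta$-type functionals. In the block layout of $A_S$ the column index selects which scalar basis element ($\breve u_1,u_1,\breve u_2,u_2$) is being transferred, and the row index records whether the contribution lands in the $\breve u_j$- or $u_j$-coefficient of the $j$-th or $\hat j$-th slot of $w$. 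The main obstacle I anticipate is the combinatorial bookkeeping of the indices $j,\hat j$ and of the $\breve{}$\,/non-$\breve{}$ labels across the iterated substitutions, together with verifying that the signs match the prescription in the statement; a minor secondary point is checking that $\Phi_M$ is itself well-defined, which follows since Proposition \ref{prop:ConvSer} rests on Lemma~\ref{lem:OI-degm} and the latter applies equally when the integration interval terminates at the crossing point $x=0$.
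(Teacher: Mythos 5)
Your proposal is correct and follows essentially the same route as the paper. Your intermediate parametrization $\Phi_M$ (with $s_j=t_j=0$) is, because $K_{j,M}[f](0)=K_{j,M}[f]'(0)=0$, exactly the parametrization of solutions by their Cauchy data at $x=0$ in the basis whose $\C^4$-vectors the paper calls $\breve{\bm u}_1,\bm u_1,\breve{\bm u}_2,\bm u_2$; your identity $\Phi_S=\Phi_M\circ(I_4+A_S)$ is the paper's pair of Cauchy-data equations, and $A=\Phi_R^{-1}\Phi_L=(I_4+A_R)^{-1}(I_4+A_L)$ follows from the invertibility of that Cauchy-data map, which the paper phrases as invertibility of the matrix $(\breve{\bm u}_1\ \bm u_1\ \breve{\bm u}_2\ \bm u_2)$. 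Two small remarks: the ``rank-two identity'' you state as key input is not actually used in the final argument (nor by the paper); and for the well-definedness of $\Phi_M$ you do not need a Neumann-series estimate with endpoints at $0$ (where the stationary point would sit on the boundary, outside the hypotheses of Lemma~\ref{lem:OI-degm} as stated) --- it suffices to observe that $\Phi_M$ is a bijection by uniqueness of the Cauchy problem, since $\tilde K_{j,M}v_{\hat j}$ vanishes to first order at $x=0$.
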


\begin{proof}
Let $\bm{u}_j$ and $\breve{\bm{u}}_j$ be the vectors given by
\ben
\bm{u}_1:=\begin{pmatrix}u_1(0)\\0\\u_1'(0)\\0\end{pmatrix},\quad
\breve{\bm{u}}_1:=\begin{pmatrix}\breve u_1(0)\\0\\\breve u_1'(0)\\0\end{pmatrix},\quad
\bm{u}_2 :=\begin{pmatrix}0\\u_2(0)\\0\\u_2'(0)\end{pmatrix},\quad
\breve{\bm{u}}_2 :=\begin{pmatrix}0\\\breve u_2(0)\\0\\\breve u_2'(0)\end{pmatrix}.
\een
Then by the same arguments as in the proof of Proposition~5.2 of \cite{FMW1}, we have
\begin{align*}
&\begin{pmatrix}
\breve w_1^\out&w_1^\inc&\breve w_2^\out&w_2^\inc\\
(\breve w_1^\out)'&(w_1^\inc)'&(\breve w_2^\out)'&(w_2^\inc)'
\end{pmatrix}\big|_{x=0}
=\begin{pmatrix}\breve{\bm{u}}_1&\bm{u}_1&\breve{\bm{u}}_2&\bm{u}_2\end{pmatrix}(I+A_L),\\
&\begin{pmatrix}
\breve w_1^\inc&w_1^\out&\breve w_2^\inc&w_2^\out\\
(\breve w_1^\inc)'&(w_1^\out)'&(\breve w_2^\inc)'&(w_2^\out)'
\end{pmatrix}\big|_{x=0}
=\begin{pmatrix}\breve{\bm{u}}_1&\bm{u}_1&\breve{\bm{u}}_2&\bm{u}_2\end{pmatrix}(I+A_R).
\end{align*}
This means 
$$
\begin{pmatrix}\breve{\bm{u}}_1&\bm{u}_1&\breve{\bm{u}}_2&\bm{u}_2\end{pmatrix}(I+A_L)=\begin{pmatrix}\breve{\bm{u}}_1&\bm{u}_1&\breve{\bm{u}}_2&\bm{u}_2\end{pmatrix}(I+A_R)A,
$$
and the proposition follows since the matrix $\begin{pmatrix}\breve{\bm{u}}_1&\bm{u}_1&\breve{\bm{u}}_2&\bm{u}_2\end{pmatrix}$ is invertible.
\end{proof}

Writing 
$A=(\bm{a}_1,\bm{a}_2,\bm{a}_3,\bm{a}_4)$, $\bm{a}_j={}^t(a_{j1}, a_{j2}, a_{j3}, a_{j4})$, 
we have obatined
\be
w_1^\inc=(\breve w_1^\inc,w_1^\out,\breve w_2^\inc,w_2^\out)\bm{a}_2,\quad
w_2^\inc=(\breve w_1^\inc,w_1^\out,\breve w_2^\inc,w_2^\out)\bm{a}_4.
\ee
This with Proposition \ref{prop:MLbehavior} shows
\be
T^\out=\begin{pmatrix}a_{22}&a_{24}\\a_{42}&a_{44}\end{pmatrix}+\ord(h).
\ee
The following lemma gives the asymptotic formula for these coefficients, and thus ends the proof of Theorem~\ref{thm:ConnectionAllowed}.

\begin{lemma}\label{lem:ComputeAlpha}
For each $(j,S)\in\{1,2\}\times\{L,R\}$, we have
\be\label{eq:EstimateAlpha}
\alpha_{j,S}u_j,\quad
\alpha_{j,S}\breve u_j,\quad
\breve \alpha_{j,S}u_j,\quad
\breve \alpha_{j,S}\breve u_j=\ord(h^{1/(m+1)}),
\ee
\be\label{eq:EstimateBeta}
\beta_{j,S}u_j,\quad
\beta_{j,S}\breve u_j,\quad
\breve \beta_{j,S}u_j,\quad
\breve \beta_{j,S}\breve u_j =\ord(h^{2/(m+1)}).
\ee
Moreover, with the notations $F_j:=\alpha_{j,L}-\alpha_{j,R}$, $\breve F_j:=\breve\alpha_{j,L}-\breve\alpha_{j,R}$, we have 
\be
A=
\begin{pmatrix}
1&0&\breve F_2\breve u_2&\breve F_2u_2\\
0&1& F_2\breve u_2& F_2u_2\\
\breve F_1\breve u_1&\breve F_1u_1&1&0\\
 F_1\breve u_1& F_1u_1&0&1
\end{pmatrix}
+\ord(h^{2/(m+1)}),
\ee
and
\ben
 F_1u_1=-i\omega h^{1/(m+1)}+\ord(h^{2/(m+2)}),\quad
 F_2u_2=-i\overline{\omega} h^{1/(m+1)}+\ord(h^{2/(m+2)}),
\een
where $\omega = \omega_\rho$ is given by
\be
\label{omega}
\omega_\rho= \mu_m \left(\frac{2(m+1)! }{|v_1-v_2|}\right)^{\frac 1{m+1}}E_0^{-\frac{m}{2(m+1)}}\bm{\Gamma}\left(\frac{m+2}{m+1}\right)
\overline{U(\rho)},
\ee
\be
\label{mu}
\mu_m=\left\{
\begin{aligned}
&e^{\frac{i\pi}{2(m+1)}\ope{sgn}(v_2-v_1)}	& & \qtext{when $m$ is odd,}\\
&\cos\left(\frac{\pi}{2(m+1)}\right)	& & \qtext{when $m$ is even.}
\end{aligned}
\right.
\ee
\end{lemma}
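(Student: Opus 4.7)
The strategy is to reduce every entry in the statement to an oscillatory integral of the form \eqref{Ihint}, expand the operators $J_{j,S}$ to leading order, and apply Lemma~\ref{lem:OI-degm}. By Proposition~\ref{prop:ConvSer} we have $J_{j,S} = I + \ord(h^{1/(m+1)})$ in $\mc{B}(C(I))$, so for any $f\in\{u_j,\breve u_j\}$ the definition of $\alpha_{j,S}f$ reads
\begin{equation*}
\alpha_{j,L}f = -\tfrac{i}{2}\int_{-\dl}^{0} \breve u_{\hat j}(y)\,U_{\hat j} f(y)\,dy + \ord\bigl(h^{1/(m+1)}\cdot \|\til K_{1,L}\til K_{2,L}f\|_\infty\bigr),
\end{equation*}
and analogously for $\alpha_{j,R}$, $\breve\alpha_{j,S}$, using the right half-interval $[0,\dl]$ or replacing $\breve u_{\hat j}$ by $u_{\hat j}$. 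Substituting the WKB formulas \eqref{eq:scalarWKB-allowed} writes each of these as $\int \sigma(y,h)e^{i\phi(y)/h}\,dy$ with a uniformly bounded symbol $\sigma$ (after one integration by parts to handle the $ihD_x$-part of $U_{\hat j}$, exactly as in the proof of Proposition~\ref{prop:ConvSer}) and a phase of the form \eqref{+-+-}.

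\textbf{Step 1: The four estimates in \eqref{eq:EstimateAlpha}.} The four possible phases are $\pm\phi_j\pm\phi_{\hat j}$. Only the combinations $\phi_j - \phi_{\hat j}$ have a critical point in $[-\dl,\dl]$; it is located at $y=0$, and a direct computation using $\sqrt{E-V_j}-\sqrt{E-V_{\hat j}} = (V_{\hat j}-V_j)/(\sqrt{E-V_j}+\sqrt{E-V_{\hat j}})$ together with the contact-order assumption shows that exactly the first $m$ derivatives vanish and
\begin{equation*}
(\phi_j - \phi_{\hat j})^{(m+1)}(0) = \frac{v_{\hat j} - v_j}{2\sqrt{E_0}} + \ord(h).
\end{equation*}
On $[-\dl,0]$ or $[0,\dl]$ this critical point is the endpoint; applying the (half-interval version of) Lemma~\ref{lem:OI-degm}(2) yields $\ord(h^{1/(m+1)})$. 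For the other three sign choices there is no critical point at all, and Lemma~\ref{lem:OI-degm}(1) gives $\ord(h)\subset \ord(h^{1/(m+1)})$. This proves \eqref{eq:EstimateAlpha}.

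\textbf{Step 2: The bounds \eqref{eq:EstimateBeta} and the matrix $A$.} The $\beta_{j,S}$'s have the same shape as the $\alpha_{j,S}$'s but with an extra factor $\til K_{\hat j, S}$. A first application of stationary phase shows that $\til K_{\hat j,S}u_j$ and $\til K_{\hat j,S}\breve u_j$ are themselves WKB-type functions of size $\ord(h^{1/(m+1)})$; applying the same analysis as in Step~1 to the resulting (composed) oscillatory integral yields a second factor $h^{1/(m+1)}$, hence $\beta_{j,S}(u_j) = \beta_{j,S}(\breve u_j) = \ord(h^{2/(m+1)})$ and similarly for $\breve\beta_{j,S}$. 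Combining this with \eqref{eq:EstimateAlpha} into the block form of $A_S$, the diagonal $2\times 2$ blocks of $A_L-A_R$ are $\ord(h^{2/(m+1)})$ while the off-diagonal blocks contain precisely the functionals $F_j$ and $\breve F_j$. Expanding $A = (I+A_R)^{-1}(I+A_L) = I + (A_L - A_R) + \ord(h^{2/(m+1)})$ gives the stated matrix formula.

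\textbf{Step 3: The leading coefficient of $F_j u_j$.} The two half-integrals in $\alpha_{j,L}u_j - \alpha_{j,R}u_j$ combine into
\begin{equation*}
F_1 u_1 = -\frac{i}{2}\int_{-\dl}^{\dl} \breve u_2(y)\,U^*\!\left(y,\phi_1'(y)\right) u_1(y)\,dy + \ord(h^{2/(m+1)}),
\end{equation*}
the correction terms from $J_{j,S}-I$ and from the symbol of $U^*$ contributing at most at order $h^{2/(m+1)}$ because each such term is itself a stationary integral of amplitude already of size $\ord(h^{1/(m+1)})$. On this full interval $y=0$ is an interior critical point of $\phi_1-\phi_2$ of order $m+1$; Lemma~\ref{lem:OI-degm}(2) with $\sigma(0) = \overline{U(\rho)}\,E_0^{-1/2}$ and $|(\phi_1-\phi_2)^{(m+1)}(0)| = |v_1-v_2|/(2\sqrt{E_0})$ then produces the stated $-i\omega_\rho h^{1/(m+1)}$ after elementary simplification, with the phase factor $\mu_m$ arising from $\ope{sgn}(\phi_1-\phi_2)^{(m+1)}(0)=\ope{sgn}(v_2-v_1)$ via \eqref{mu}. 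The identity $F_2u_2 = -i\overline{\omega_\rho}h^{1/(m+1)}+\ldots$ follows by the same computation after exchanging the roles of $V_1,V_2$ and replacing $U^*$ by $U$, which complex-conjugates both the amplitude factor $\overline{U(\rho)}$ and the phase factor $\mu_m$ (for odd $m$ the sign of $v_2-v_1$ flips, for even $m$ there is nothing to do).

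\textbf{Main obstacle.} The routine parts are the stationary-phase bookkeeping and the matrix algebra; the delicate point is checking that each of the many possible correction terms — from $J_{j,L}-I$, from the subprincipal symbol of $U_{\hat j}$, from the difference between $(E-V_j(y))^{-1/4}$ and its value at $y=0$, and from $J_{j,L}\ne J_{j,R}$ — is uniformly of size $\ord(h^{2/(m+1)})$ for $E\in\cR$. Each such correction must either cost an additional stationary-phase integration or vanish at $y=0$ so that its oscillatory contribution is smaller by a factor $h^{1/(m+1)}$ than the principal term, and keeping track of this uniformly in $E$ and across the four cases $(j,S)\in\{1,2\}\times\{L,R\}$ is the main bookkeeping.
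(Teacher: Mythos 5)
Your proof is correct and follows essentially the same route as the paper's: all quantities are reduced to oscillatory integrals $\int\sigma e^{i\phi/h}dy$ with phase $\pm\phi_1\pm\phi_2$, the sub-critical entries are bounded via (one or two applications of) the degenerate stationary-phase Lemma~\ref{lem:OI-degm} exactly as in Proposition~\ref{prop:ConvSer}, the formula for $A$ follows from expanding $(I+A_R)^{-1}(I+A_L)=I+(A_L-A_R)+\ord(h^{2/(m+1)})$ and reading off which block contains $\alpha$-type versus $\beta$-type functionals, and the leading coefficient of $F_ju_j$ is computed from the interior stationary point at $y=0$ with the Taylor expansion \eqref{eq:ApproPhase} of $\phi_1-\phi_2$. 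The one place where you add something the paper leaves implicit is the derivation of $F_2u_2=-i\overline{\omega_\rho}h^{1/(m+1)}+\ldots$ by symmetry (swap $V_1\leftrightarrow V_2$ and $U^*\to U$, conjugating both $\overline{U(\rho)}$ and the odd-$m$ phase factor $\mu_m$ through $\ope{sgn}(v_2-v_1)\to\ope{sgn}(v_1-v_2)$); the paper simply states both formulas after computing only $F_1u_1$. Your ``main obstacle'' remark is apt: the claim that the error from $J_{j,S}-I$ is $\ord(h^{2/(m+1)})$ rather than the naive $\ord(h^{1/(m+1)})$ rests on the fact that $(J_{j,S}-I)u_j$ is again a WKB-type function on which the outer integration produces a second factor $h^{1/(m+1)}$ by stationary phase — the paper asserts this with the single sentence ``proved in the same manner as Proposition~\ref{prop:ConvSer},'' so your level of detail matches the paper's.
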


\begin{remark}\label{rem:gencase}
In the general case, where the crossing point is $\rho=(\rho_x,\rho_\xi)\in\mc{V}$, one has to replace $E_0$ by $E_0-V_j(\rho_x)$ and $v_j$ by $V_j'(\rho_x)$. In addition, when $m$ is odd, $\mu_m$ is given by
\ben
\mu_m=e^{\frac{i\pi}{2(m+1)}\ope{sgn}\rho_\xi(v_2-v_1)}.
\een
\end{remark}

\begin{proof}
The estimates \eqref{eq:EstimateAlpha} and \eqref{eq:EstimateBeta} are proved in the same manner as Proposition~\ref{prop:ConvSer}. 

We compute the principal term of $F_1u_1=(\alpha_{1,L}-\alpha_{1,R})u_1$. Since $J_{1,L}$ and $J_{1,R}$ are both identity at the principal level, we have
\ben
F_1u_1
=-{\frac i2}\int_{-\dl}^\dl\breve u_2(y)U_2u_1(y)dy+\ord(h^{2/(m+1)}).
\een
This is again an oscillatory integral \eqref{Ihint} with the phase function
\ben
\phi(y)=\int_0^y\left(\sqrt{E-V_1(t)}-\sqrt{E-V_2(t)}\right)dt
\een
which behaves like   
\be\label{eq:ApproPhase}
\begin{aligned}
\phi(y)
=\int_0^y\frac{V_2(t)-V_1(t)}{\sqrt{E-V_1(t)}+\sqrt{E-V_2(t)}}dt
=\frac{V_2^{(m)}(0)-V_1^{(m)}(0)}{2\sqrt{E}(m+1)!}\left(y^{m+1}+\ord(y^{m+2})\right)
\end{aligned}
\ee
near the stationary point $0\in I=[-\delta,\delta]$, and the amplitude $\sigma(y,h)$ satisfying
\ben
\sigma(y,h)=-i\frac{(r_0(y)-ir_1(y)\sqrt{E-V_1(y)})}{2(E-V_1(y))^{1/4}(E-V_2(y))^{1/4}}+\ord(h).
\een
Then the asymptotic formula for $F_1u_1$ is computed using Lemma~\ref{lem:OI-degm}.
\end{proof}

\subsection{The space of microlocal solutions near a crossing point (Proof of Theorem~\ref{thm:T-exists})}
\label{section3.4}

Recall from Proposition \ref{prop:MLbehavior} that 
$$
w_j^\flat\equiv
\left\{
\begin{aligned}
& f_j^\inc+\ord (h) \qtext{ near }\edg_j^\inc,\\
&\ord (h) \qtext{ near } \edg_{\hat j}^\inc\cup\breve \edg_1^\out\cup\breve \edg_2^\out,
\end{aligned}
\right.
\quad
\breve w_j^\sharp\equiv
\left\{
\begin{aligned}
&\breve{f}_j^\out+\ord (h)\qtext{ near}\breve \edg_j^\out,\\
&\ord (h)\qtext{ near} \breve \edg_{\hat j}^\out\cup\edg_1^\inc\cup\edg_2^\inc.
\end{aligned}
\right.
$$
We can arrange $w_j^\flat$ so that it is microlocally zero near $\breve \rho$ instead of $\ord(h)$, and similarly arrange $\breve w_j^\sharp$ so that it is microlocally zero near $\rho$.

\begin{lemma}\label{lem:exact-pm}
There exist a basis $(w_1^+,w_2^+,w_1^-,w_2^-)$ of exact solutions to the equation $(P-E)w=0$  such that, for $j=1,2$,
\be\label{eq:exact-pm-in}
w_j^+\equiv \left\{
\begin{aligned}
&w_j^\inc&&\text{near }\rho,\\
&\,\,0&&\text{near }\breve\rho,
\end{aligned}\right.
\quad
w_j^-\equiv \left\{
\begin{aligned}
&\,\,0&&\text{near }\rho,\\
&\breve{w}_j^\out&&\text{near }\breve\rho.
\end{aligned}\right.
\ee
\end{lemma}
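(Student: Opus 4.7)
The plan is to parameterize exact solutions on $I$ by their coefficients in the basis $\mc{B}_L=(\breve w_1^\out,w_1^\inc,\breve w_2^\out,w_2^\inc)$ constructed in the previous subsection, and to select $w_j^\pm$ by imposing the desired microlocal conditions at $\rho$ and $\breve\rho$. Writing an arbitrary exact solution as $w=\sum_{i=1}^4 c_i\psi_i^L$ (with $\psi_i^L$ the $i$-th element of $\mc{B}_L$), Proposition~\ref{prop:MLbehavior} gives that the microlocal coefficients of $w$ on the four left-side trajectories $\edg_k^\inc$ and $\breve\edg_k^\out$ are, to leading order, $c_{2k}$ and $c_{2k-1}$ respectively, modulo $\ord(h)$-contamination from the other components. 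The microlocal coefficients on the remaining four trajectories $\edg_k^\out$ and $\breve\edg_k^\inc$ are recovered by passing to the alternative basis $\mc{B}_R=(\breve w_1^\inc,w_1^\out,\breve w_2^\inc,w_2^\out)$ via the transition $\bm d=A\bm c$ of Proposition~\ref{prop:TransfAlpha}, whose asymptotic structure is spelled out in Lemma~\ref{lem:ComputeAlpha}.

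To build $w_j^+$ one imposes the constraints ``microlocally zero on $\breve\edg_k^\out$'' (giving $c_1\equiv 0$ and $c_3\equiv 0$ to leading order) together with ``microlocally equal to $w_j^\inc$ on $\edg_k^\inc$'' (giving $c_2\equiv\delta_{j,1}$ and $c_4\equiv\delta_{j,2}$). Initializing with $\bm c^{(j,0)}=(0,\delta_{j,1},0,\delta_{j,2})$---which represents precisely $w_j^\inc$---one proceeds iteratively: at each step the $\ord(h)$ microlocal errors on $\breve\edg_k^\out$ are absorbed by $\ord(h)$-small corrections to $c_1,c_3$, and the matching condition on $\edg_k^\inc$ is restored by readjusting $c_2,c_4$. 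The scheme is a contraction for $h$ small, and Borel resummation of the resulting formal series yields genuine scalars $\bm c^{(j)}$ realising microlocal cancellation to all orders. The symmetric construction---fixing $c_1,c_3$ to match $\breve w_j^\out$ on $\breve\edg_k^\out$ and forcing $c_2\equiv c_4\equiv 0$ for microlocal vanishing on $\edg_k^\inc$---produces $w_j^-$. Linear independence of $(w_1^+,w_2^+,w_1^-,w_2^-)$ follows immediately from their distinct microlocal contents near $\rho$ versus near $\breve\rho$, so this quadruple is a basis of the four-dimensional space of exact solutions on $I$.

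The main obstacle is to verify that microlocal vanishing on the outgoing trajectories $\breve\edg_k^\out$ is compatible with, rather than genuinely independent of, microlocal vanishing on the incoming ones $\breve\edg_k^\inc$. A naive reading of Lemma~\ref{lem:ComputeAlpha}, treating all off-diagonal entries of $A$ as uniformly of size $h^{1/(m+1)}$, would render the system of four cancellation conditions non-degenerate and force only the trivial solution, contradicting the expected two-dimensional family. What saves the argument is an asymmetry hidden in the construction: the entries $F_k u_k,\ F_k\breve u_k$ carry a genuine $h^{1/(m+1)}$ contribution because their defining oscillatory integrals have a stationary point at the crossing (phases of the form $\phi_1-\phi_2$), whereas $\breve F_k u_k,\ \breve F_k\breve u_k$ are actually of order $h$ because the corresponding phases $\phi_1+\phi_2$ have non-vanishing derivative. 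This asymmetry effectively cuts the rank of the linear system of microlocal conditions from four to two, delivers the sought-for two-dimensional subspaces, and ensures convergence of the iterative scheme---thereby yielding both Lemma~\ref{lem:exact-pm} and, through Proposition~\ref{prop:Basis-pm}, Theorem~\ref{thm:T-exists}.
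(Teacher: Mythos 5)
Your overall strategy---parameterizing exact solutions by coefficients in the basis $(\breve w_1^\out,w_1^\inc,\breve w_2^\out,w_2^\inc)$ and iteratively imposing microlocal conditions near $\rho$ and $\breve\rho$---is genuinely different from the paper's, which instead constructs, for each $n$, a basis $(w_{j,n}^\pm)$ of exact solutions whose semiclassical wave front set is controlled to be $\ord(h^n)$ near one of the two crossing points, and then lets $n\to\infty$. The core technical device in the paper is the inductive construction of approximate solutions $g_{j,k}$ with $\ope{WF}_h(g_{j,k})\subset\Gamma\cap\{\xi>0\}$, achieved by applying scalar propagation of singularities along each $\Gamma_j$ (away from the crossing) and subtracting at each step the $\breve u_j$-component whose wave front set would leak into $\{\xi<0\}$. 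This wave-front control is what makes the four vanishing conditions near $\breve\rho$ compatible, and it is absent from your sketch.

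There is a concrete gap in the compatibility argument you offer. First, the asymmetry you invoke is misstated: among the entries of $A$, the ones that carry a genuine $h^{1/(m+1)}$ contribution are $F_j u_j$ and $\breve F_j\breve u_j$ (phases $\pm(\phi_j-\phi_{\hat j})$, stationary at the crossing), whereas $F_j\breve u_j$ and $\breve F_j u_j$ are $\ord(h)$ (phases $\mp(\phi_j+\phi_{\hat j})$, never stationary). You assign the $\ord(h)$ behavior to the pair $\breve F_k u_k,\ \breve F_k\breve u_k$ and the $h^{1/(m+1)}$ behavior to $F_k u_k,\ F_k\breve u_k$, which is not the correct split. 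Second, and more fundamentally, even with the correct $\ord(h)$ structure, fixing $c_1=c_3=0$ to kill the coefficients on $\breve\edg_j^\out$ leaves a nonvanishing residual $d_3\approx \breve F_1 u_1\cdot c_2=\ord(h)$ on $\breve\edg_2^\inc$ that cannot be removed by readjusting $c_2,c_4$ (those are pinned by the matching conditions near $\edg_j^\inc$) nor by reintroducing $c_1,c_3$ (which would regenerate errors on $\breve\edg_j^\out$). Your iteration thus stalls at $\ord(h)$ rather than reaching $\ord(h^\infty)$, and the appeal to Borel resummation does not by itself produce the required cancellation. The paper closes exactly this gap by producing, order by order in $h$, corrections that live microlocally on $\Gamma\cap\{\xi>0\}$ only---a refinement that a finite-dimensional linear-algebra argument on $(c_1,\ldots,c_4)$ alone does not supply.
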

\begin{figure}[t]
\centering
\includegraphics[bb=0 0 349 227, width=8cm]{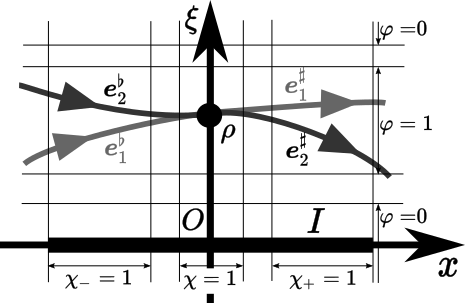}
\caption{Cutoff functions for the construction of a quasi-mode}
\label{Fig:B}
\end{figure}

This lemma immediately implies Proposition~\ref{prop:Basis-pm}. We first prove Proposition \ref{lem:ApproByEx} using this lemma.
\begin{proof}[Proof of Proposition~\ref{lem:ApproByEx}] 
Let $w\in L^2(I;\C^2)$ be a microlocal solution near $\rho$ with $\|w\|_{L^2(I)}\le1$. 
Lemma~\ref{lem:exact-pm} with the microlocal behavior derived by Proposition~\ref{prop:MLbehavior} of $w_j^\inc$ and $\breve w_j^\out$ guarantees that there exist exact solutions $w^\inc$ and $w^\out$ such that 
\ben
w\equiv w^\dir\qtext{microlocally near} \edg_1^\dir\cup \edg_2^\dir,\quad
\ope{WF}_h(w^\dir)\subset\{\xi>0\}
\een
for each $\dir=\out,\inc$. 
We define the function $w_{\text{quasi}}$ by
\be\label{eq:tilU-A}
w_{\text{quasi}}=\chi(x)\pphi(hD_x)w+\chi_-(x)w^\inc+\chi_+(x)w^\out,
\ee
where the functions $\pphi,\chi,\chi_\pm,\in C^\infty(\R;[0,1])$ satisfy $\chi+\chi_-+\chi_+=1$, $\pphi(\xi)=1$ near $\xi=\sqrt{E_0}$, $\chi(x)=1$ near $x=0$, $\chi_\pm(x)=0$ for $\pm x<0$, \ben
\begin{aligned}
&(\ope{supp}\chi\times\ope{supp}\pphi )\cap\ope{WF}_h(w)\subset(\Gamma^I\cap\{\xi>0\})\subset\{\ope{supp}\pphi(\xi)=1\},
\end{aligned}
\een
and that the support of $\pphi$, $\chi$ and of $\chi_{I'}$ are compact (see Figure~\ref{Fig:B}). 
More precisely, the operator $\pphi(hD_x)$ is defined by the following sense: 
Fix an interval $I'\subset\joinrel\subset I$, take $\chi_{I'}\in C_0^\infty(I;[0,1])$ such that $\chi_{I'}=1$ on $I'$, and
\ben\pphi(hD_x)f(x):=\frac1{2\pi h}\iint_{\R^2}e^{i(x-y)\xi/h}\pphi(\xi)\chi_{I'}(y)f(y)dyd\xi.\een
In what follows, we denote $I'$ by $I$.

Let us show 
\be\label{eq:ApproByQuasi}
w\equiv w_{\text{quasi}}\qtext{ near}\rho,\quad
\|(P(h)-E)w_{\text{quasi}}\|_{L^2(I)}=\ord(h^\infty)
\ee 
for $w_{\text{quasi}}$ given by \eqref{eq:tilU-A}. By writing 
\ben
(P(h)-E)w_{\text{quasi}}=[P,\chi]\pphi(hD_x)w+\chi[P,\pphi(hD_x)]w+\chi\pphi(hD_x)(P-E)w+[P,\chi_-]w^\inc+[P,\chi_+]w^\out,
\een
we see that the terms $\chi[P,\pphi(hD_x)]w$ and $\chi\pphi(hD_x)(P-E)w$ are $\ord(h^\infty)$ since the essential support of the Weyl symbol of the pseudo-differential operator $\chi[P,\pphi(hD_x)]$ and of $\chi\pphi(hD_x)$ is compact and does not intersect with the semiclassical wavefront set of $w$ and of $(P-E)w$, respectively. 
By the equality $[P,\chi]=-([P,\chi_-]+[P,\chi_+])$, the other terms are sum of the following two terms:
\ben
\begin{aligned}
&[P,\chi_-](w^\inc-\pphi(hD_x)w)=[P,\chi_-](1-\pphi(hD_x))w^\inc+\ord(h^\infty),\\
&[P,\chi_+](w^\out-\pphi(hD_x)w)=[P,\chi_+](1-\pphi(hD_x))w^\out+\ord(h^\infty).
\end{aligned}
\een
Since the functions $w^\inc$ and $w^\out$ are exact solution to $(P-E)w^\dir=0$ and microlocally zero away from $\Gamma^I\cap\{\xi>0\}$, the above two terms are also $\ord(h^\infty)$.



By applying Cramer's rule, we have
\be
w_{\text{quasi}}(x)=\frac 1\omega\sum_{k=1}^4 \omega_k(x) w_k,\quad
\omega:=\det
\begin{pmatrix}
w_1&w_2&w_3&w_4\\
w_1'&w_2'&w_3'&w_4'
\end{pmatrix},
\ee
where the function $\omega_k$ is given by the Wronskian of the tuple $(w_1,w_2,w_3,w_4)$ replaced $w_k$ by $w_{\text{quasi}}$ for each $k$, for example,
\ben
\omega_1(x):=\det
\begin{pmatrix}
w_{\text{quasi}}(x)&w_2(x)&w_3(x)&w_4(x)\\
w_{\text{quasi}}'(x)&w_2'(x)&w_3'(x)&w_4'(x)
\end{pmatrix},
\een
Then $\omega_k'=\ord(h^\infty)$ uniformly on $I$, and 
\be
w_{\text{quasi}}(x)=\til{w}+\ord(h^\infty)\qtext{uniformly on}I
\qtext{with}\til{w}=\frac1\omega\sum_{k=1}^4 \omega_k(0)w_k(x)+\ord(h^\infty).
\ee
This with \eqref{eq:ApproByQuasi} proves Proposition \ref{lem:ApproByEx}.
\end{proof}


\begin{proof}[Proof of Lemma~\ref{lem:exact-pm}]
Let $w$ be a bounded exact solution to $(P-E)w=0$ on $I$. 
The asymptotic behavior \eqref{eq:BehaveInc} of $w_1^\inc$ and $w_2^\inc$ implies that, for each $j=1,2$, there exists a linear combination of them which coincides with $f_j^\inc$ microlocally near $\edg_j^\inc$ and with $0$ near $\edg_{\hat j}^\inc$. 
Consequently, there exist constants $c_1$ and $c_2$ such that
\be\label{eq:w0sim0}
w_0:=w-( c_1w_1^\inc+c_2w_2^\inc)\equiv0\qtext{ near} \edg_1^\inc\cup \edg_2^\inc.
\ee 
We claim that for each $n\in\N$, there exists a basis $(w_{1,n}^-,\,w_{1,n}^+,\,w_{2,n}^-,\,w_{2,n}^+)$ of exact solutions such that 
\ben
w_{j,n}^-\equiv\left\{
\begin{aligned}
&(1+\ord(h^{2/(m+1)}))\breve f_j^\dir&&\qtext{ near}\breve \edg_j^\dir,\\
&\ord(h^{1/(m+1)})\breve f_{\hat j}^\dir&&\qtext{ near}\breve \edg_{\hat j}^\dir,\\
&\ord(h^n)&&\qtext{ near}\rho,
\end{aligned}\right.
\een
and
\ben
w_{j,n}^+\equiv\left\{
\begin{aligned}
&(1+\ord(h^{2/(m+1)}))f_j^\dir&&\qtext{ near} \edg_j^\dir,\\
&\ord(h^{1/(m+1)})f_{\hat j}^\dir&&\qtext{ near} \edg_{\hat j}^\dir,\\
&\ord(h^n)&&\qtext{ near}\breve\rho,
\end{aligned}\right.
\een
for each $(j,\hat j)=(1,2), (2,1)$ and $\dir=\inc,\out$. 
We write $w_0$ as a linear combination of this basis:
\ben
w_0=c_{1,n}w_{1,n}^++c_{2,n}w_{2,n}^++c_{3,n}w_{1,n}^-+c_{4,n}w_{2,n}^-.
\een 
The boundedness of $\|w_0\|_{L^2(I)}$ implies that the coefficients $c_{k,n}$ $(k=1,2,3,4)$ are bounded as $h\to0^+$, and in particular, we have
\be\label{eq:unique-exactA}
\begin{aligned}
w_0-(c_{1,n}w_{1,n}^++c_{2,n}w_{2,n}^+)
=c_{3,n}w_{1,n}^-+c_{4,n}w_{2,n}^-
\equiv\ord(h^n)\text{ near }\rho.
\end{aligned}
\ee
The above formula \eqref{eq:unique-exactA} with \eqref{eq:w0sim0} implies that $c_{1,n}$ and $c_{2,n}$ are $\ord(h^n)$ as $h\to0^+$, and 
\be
w_0
\equiv\ord(h^n)
\qtext{ near} \rho. 
\ee
Since $n\in\N$ is arbitrarily chosen, we conclude here that 
\be
w\equiv c_1w_1^\inc+c_2w_2^\inc\qtext{ near} \rho.
\ee

In the above argument, we also proved for any exact solution $w$ the existence of an exact solution $w_0$ such that
\be
w_0\equiv\left\{
\begin{aligned}
&0&&\text{near }\rho,\\
&w+\ord(h)&&\text{near }\breve \rho.
\end{aligned}\right.
\ee
We see that the required solutions $w_1^-$ and $w_2^-$ are given by linear combinations of $w_0$ corresponding to $\breve w_1^\out$ and to $\breve w_2^\out$. 
The solutions $w_1^+,w_2^+$ are given in the symmetric way. 

We then prove the existence of such a basis $(w_{1,n}^-,w_{1,n}^+,w_{2,n}^-,w_{2,n}^+)$. We here only discuss on $w_{1,n}^+$. Others are also shown in a symmetric way. 
It suffices to construct an approximate solution $g_k$ such that
\be
(P(h)-E)g_k=\begin{pmatrix}hU_1r_k\\0\end{pmatrix}
\ee
with a function $r_k$ which is of order $h^{(2k-1)/(m+1)}$ uniformly on $I$. 
In particular, $r_k=\ord(h^n)$ for a large enough $k\ge((m+1)n+1)/2$. 
Then the required solution $w_{1,n}^+$ is given by
\be
w_{1,n}^+=g_k+\begin{pmatrix}J_{1,L}r_k\\\til{K}_{2,L}J_{1,L}r_k\end{pmatrix}.
\ee

Such an approximate solution $g_k$ is constructed inductively. 
The facts
\ben
(P_2-E)\til{K}_{2,L}u_1=-hU_2u_1,\quad
\ope{WF}_h(-hU_2u_1)\subset\ope{WF}_h(u_1)\subset\Gamma_1\cap\{\xi>0\}
\een
imply that $\ope{WF}_h(\til{K}_{2,L}u_1)$ is contained in $(\Gamma_1\cap\{\xi>0\})\cup\Gamma_2$.  
The usual propagation of singularities theorem with the fact $\ope{WF}_h(-hU_2u_1)\cap(\Gamma_2\cap\{\xi<0\})=\emptyset$ shows that there exists a constant $c_{2,1}\in\C$ such that 
\ben
\til{K}_{2,L}u_1\equiv c_{2,1} \breve u_2\qtext{microlocally near} \Gamma_2\cap\{\xi<0\}.
\een 
According to the estimate \eqref{eq:Ord-h}, this constant admits $c_{2,1}=\ord(h)$ as $h\to0^+$.  
By putting $g_{1,1}:=u_1$ and $g_{2,1}:=\til{K}_{2,L}u_1-c_{2,1}\breve u_2$, we have
\be
\ope{WF}_h(g_{1,1})
\subset
\Gamma\cap\{\xi>0\},\quad
\ope{WF}_h(g_{2,1})\subset\Gamma\cap\{\xi>0\}.
\ee 
We then apply the above argument for $g_{1,1}=u_1$ to $g_{2,1}$. 
Then there exists $c_{1,2}=\ord(h^2)$ such that one has
\be
\ope{WF}_h(g_{1,2})\subset\Gamma\cap\{\xi>0\}\qtext{for}
g_{1,2}:=g_{1,1}+\til{K}_{1,L}g_{2,1}-c_{1,2} \breve u_1.
\ee 
By repeating this process, we obtain such functions $g_{1,k}$, $g_{2,k}$ and  constants $c_{j,k}=\ord(h^{2k+j-3})$ $(j=1,2)$ for each $k\in\N$ that we have 
\be
\begin{aligned}
&\ope{WF}_h(g_{1,k})\subset\Gamma\cap\{\xi>0\}
\qtext{for}
g_{1,k}:=g_{1,k-1}+\til{K}_{1,L}g_{2,k-1}-c_{1,k} \breve u_1,\\
&\ope{WF}_h(g_{2,k})\subset\Gamma\cap\{\xi>0\}
\qtext{for}
g_{2,k}:=g_{2,k-1}+\til{K}_{2,L}g_{1,k}-c_{2,k} \breve u_2.
\end{aligned}
\ee
Then these functions satisfy
\ben
(P(h)-E)g_k
=\begin{pmatrix}h U_1r_k\\0\end{pmatrix}\qtext{with}
g_k=\begin{pmatrix}g_{1,k}\\g_{2,k}\end{pmatrix},\quad
r_k=\til{K}_{2,L}g_{1,k}-c_{2,k}\breve u_2.
\een
The estimates on the resolvent operators given by Proposition~\ref{prop:ConvSer} implies $r_k=\ord(h^{(2k-1)/(m+1)})$. 
\end{proof}

\section{Resonance asymptotics}\label{sec:prfMain}
This section is devoted to the proof of Theorem~\ref{MAINTH}. 

\subsection{Probability amplitude}
We call a complex number $E\in\cR$ a \textit{pseudo-resonance} of $P$ if there exists an outgoing microlocal solution to the system $(P-E)w=0$ near an enough large bounded subset of $\Gamma$. We make this definition precise by introducing the notions of \textit{probability amplitude} and \textit{monodromy matrix}. We will see that the pseudo-resonances approximate the resonances in the semiclassical limit.

On the characteristic set $\Gamma$, we define what we call \textit{probability amplitude}. The probability amplitude is a complex number associated with each \textit{generalized classical trajectory} (see Definition \ref{GCT}). 
We recall that $S_{\gamma}=\int_{\gamma}\xi dx$ is the action along the trajectory $\gamma$.

Recall first from the Maslov theory that the microlocal WKB solutions change their asymptotic behavior when continued beyond a turning point  (see \cite{FMW3} in our system case). 
More precisely, 
let $\gamma_{\rm{cl}}$ be a classical trajectory for $p_j$ starting from $\rho_-=(x_-,\xi_-)\in\Gamma_j\setminus(\mc{V}\cup\{\xi=0\})$ and ending at $\rho_+=(x_+,\xi_+)\in\Gamma_j\setminus(\mc{V}\cup\{\xi=0\})$ without passing through any crossing point: $\gamma_{\rm{cl}} \cap \mathcal{V}= \emptyset$. Then there exists 
$\nu_{\gamma_{\rm{cl}}}=\nu_{\gamma_{\rm{cl}}}(h)$ which coincides with the number of turning points in $\gamma_{\rm{cl}}$ modulo $\ord(h)$ 
such that the two microlocal solutions $f_{\rho_-}(x,h;x_-)$ and 
$f_{\rho_+}(x,h;x_+)$ to the system near $\rho_-$ and $\rho_+$, respectively (see Proposition~\ref{prop:mlWKB}), are related with the formula:
\be\label{eq:Maslov-nu1}
f_{\rho_-}(x,h;x_-)=\exp\left(\frac ihS_{\gamma_{\rm{cl}}}
-\frac{i\pi }2 \nu_{\gamma_{\rm{cl}}}\right)f_{\rho_+}(x,h;x_+).
\ee

Let $\gamma: [0,T]\to \Gamma$ be a generalized classical trajectory whose extremities are supposed not to be a crossing point nor a turning point. Thanks to Assumption \ref{H3} which guarantees that a turning point cannot be a crossing point,  one can take a finite partition of the time interval 
\be\label{eq:TimePartition}
[0,T]= \bigsqcup_{l \in L} I_{l} \sqcup \bigsqcup_{n \in N} I_{n}
\ee
such that, for each $l\in L$, $\gamma_l:=\gamma(I_l)$  passes  through a single turning point and no crossing point, and for $n\in N$, $\gamma_n:=\gamma(I_n)$ passes through a single crossing point $\rho_n\in\mc{V}$ and no turning point.
Let $\nu_{\gamma_l} = 1+ \mathcal{O}(h)$, $l\in L$, be defined by \eqref{eq:Maslov-nu1}. For each $n\in N$, let $T_{\rho_n}$ be the transfer matrix at $\rho_n$ defined with the microlocal WKB solutions at the extremities of $\gamma_n$, given by Theorem \ref{thm:T-exists}. We denote $\tau_{\gamma_n}$ the entry of $T_{\rho_n}$ corresponding to the trajectory $\gamma_n$, that is, if the incoming part of $\gamma_n$ belongs to $\Gamma_j$ and the outgoing part belongs to $\Gamma_k$, then $\tau_{\gamma_n}=(T_{\rho_n})_{k,j}$ is the $(k,j)$-entry of $T_{\rho_n}$.
The following definition of the probability amplitude does not depend of the choice of the partition of the time interval \eqref{eq:TimePartition}. 
\begin{definition}\label{def:PA}
For a generalized classical trajectory $\gamma$ whose extremities are neither a turning point nor a crossing point, we define the \textit{probability amplitude} $\mathscr{P}(\gamma)$ by
\be
\mathscr{P}(\gamma):=\exp\left(\frac ih S_\gamma-\frac{i\pi}2 \sum_{l\in L}\nu_{\gamma_l}\right)
\prod_{n\in N}\tau_{\gamma_n}.
\ee
\end{definition}
\begin{remark}
The probability amplitude maps the addition of generalized classical trajectories to the multiplication. More precisely, if $\gamma_1:[0,T_1]\to\Gamma$ and $\gamma_2:[0,T_2]\to\Gamma$ are two consecutive generalized classical trajectories, that is, the terminal point $\gamma_1(T_1)$ coincides with the initial point $\gamma_2(0)$, then one has
\be\label{eq:multiPA}
\mathscr{P}(\gamma_1\cup\gamma_2)=\mathscr{P}(\gamma_1)\mathscr{P}(\gamma_2).
\ee
Here $\gamma_1\cup\gamma_2:[0,T_1+T_2]\to\Gamma$ stands for the usual composition:
\ben
(\gamma_1\cup\gamma_2)(t)=\left\{
\begin{aligned}
&\gamma_1(t)&&0\le t<T_1,\\
&\gamma_2(t-T_1)&&T_1\le t\le T_1+T_2.
\end{aligned}\right.
\een
\end{remark}
Using the terminology of the graph theory, we call \textit{vertex} a crossing point and \textit{edge} a classical trajectory joining two adjacent vertices (i.e. an edge does not contain any vertex except at the extremities),  and write $\mc E$ the set of all edges (we already use the notation $\mc V$ for the set of vertices). For an edge $\edg$, we denote $\edg^-$ its starting point and $\edg^+$ its endpoint. 

\subsection{Monodromy matrix}
We take a base point $\rho_\edg$ on each edge $\edg \in \mc E$, and define a matrix $M(E,h)$ the size of which is the total number of edges:
\be
M(E,h)=(m_{\edg,\edg'}(E,h))_{\edg,\edg'\in\mc E}.
\ee
The $(\edg,\edg')$-entry is 0 if $\edg^-\neq (\edg')^+$ and otherwise it is given by the probability amplitude $\mathscr{P}({\gamma})$ of the generalized classical trajectory $\gamma$ from $\rho_{\edg'}$ to $\rho_\edg$ passing through exactly one crossing point $\edg^-= (\edg')^+$.

\begin{definition}
We call zeros in $\cR$ of $\det(I-M(E,h))$ pseudo-resonances of $P$, and denote the set of pseudo-resonances by $\ope{Res}_0(P)$. 
\end{definition}
\begin{remark}
The pseudo-resonances are independent of the choice of the base points $\rho_\edg$.
\end{remark}
\begin{remark}
If $E\in\cR$ and if $w$ is an outgoing microlocal solution to the equation $(P-E)w=0$ near some neighborhood of $\mc{E}$, then $E$ is a pseudo-resonance, and there exists an eigenvector $\alpha=(\alpha_\edg)_{\edg\in\mc{E}}$ of the eigenvalue 1 of $M(E,h)$ such that
\be\label{eq:EVofM}
w\equiv \alpha_\edg f_{\rho_\edg}\qtext{ near}\edg
\ee
for any $\edg\in\mc{E}$.  Here $f_{\rho_\edg}$ is the WKB microlocal solution to the system $(P-E)w = 0$ near $\rho_\edg$ given by Proposition \ref{prop:mlWKB}.
\end{remark}
We will see in the next subsection that the pseudo-resonances are a good  approximation of the resonances in $\cR$.
Here we compute the asymptotic behavior of the determinant
$\det(I-M(E,h))$ and observe that the condition  of the pseudo-resonances coincide with the Bohr-Sommerfeld quantization condition at principal level.
\begin{proposition}\label{prop:asympt-qc}
As $h\to 0^+$, we have
\be
\begin{aligned}
&
\det(I-M(E,h))=1+e^{i\mc{A}(E)/h}+\ord(h^{\frac 2{m_0+1}}),\\
&
h\p_E\det(I-M(E,h))=i\cA'(E)e^{i\cA(E)/h}+\ord(h^{\frac 2{m_0+1}}),
\end{aligned}
\ee
uniformly for $E\in\cR$.
\end{proposition}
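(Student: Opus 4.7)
My approach is to expand $\det(I-M(E,h))$ via the Leibniz formula and interpret each term as a product of probability amplitudes along a cycle cover of the graph encoded by $M$. For any subset $S\subset\mc{E}$ and any permutation $\sigma$ of $S$, the contribution $\ope{sgn}(\sigma)\prod_{\edg\in S}M_{\edg,\sigma(\edg)}$ is nonzero only when the cycles of $\sigma$ respect the adjacency $\edg^{-}=\sigma(\edg)^{+}$. By \eqref{eq:multiPA}, the product along such a $\sigma$-cycle coincides with $\mathscr{P}(\gamma)$ for the corresponding closed generalized classical trajectory $\gamma$, and a short sign computation yields
\begin{equation*}
\det(I-M(E,h)) = \sum_{\Sigma}(-1)^{|\Sigma|}\prod_{\gamma\in\Sigma}\mathscr{P}(\gamma),
\end{equation*}
where $\Sigma$ runs over collections (possibly empty) of vertex-disjoint oriented simple cycles in the graph and $|\Sigma|$ denotes the number of cycles.

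\textbf{Selection of leading cycles.} Theorem~\ref{thm:ConnectionAllowed} guarantees that each diagonal entry of $T_\rho$ equals $1+O(h^{2/(m_\rho+1)})$ while each off-diagonal entry is $O(h^{1/(m_\rho+1)})$. Consequently, in any $\mathscr{P}(\gamma)$ every sheet-preserving passage at a crossing contributes $1+O(h^{2/(m_0+1)})$, whereas every sheet-switching passage costs $O(h^{1/(m_0+1)})$. Because $\gamma$ must return to its initial sheet, the number of switches along it is even. Assumption~\ref{H2}(ii) further excludes any closed trajectory lying entirely in $\Gamma_2$, since the Hamilton flow on each component of $\Gamma_2$ is unidirectional between its two infinite ends; hence the only switch-free closed trajectory is the loop around $\Gamma_1$ traversed once. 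Every other cycle cover then carries at least two sheet-switches and contributes $O(h^{2/(m_0+1)})$.

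\textbf{Contribution of $\Gamma_1$.} Going once around $\Gamma_1$ passes through exactly two turning points, so by \eqref{eq:Maslov-nu1} the Maslov factor is $e^{-i\pi}=-1$. Combined with the diagonal transfer entries at each crossing on $\Gamma_1$, this gives
\begin{equation*}
\mathscr{P}(\Gamma_1) = e^{i\mathcal{A}(E)/h}\cdot(-1)\cdot\prod_{\rho\in\mathcal{V}}(T_\rho)_{1,1} = -e^{i\mathcal{A}(E)/h}\bigl(1+O(h^{2/(m_0+1)})\bigr).
\end{equation*}
This single-cycle cover contributes $-\mathscr{P}(\Gamma_1)=e^{i\mathcal{A}(E)/h}(1+O(h^{2/(m_0+1)}))$ to $\det(I-M)$, while the empty cover contributes $1$. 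Combining with the $O(h^{2/(m_0+1)})$ bound for all other cycle covers proves the first asymptotic.

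\textbf{Derivative and main obstacle.} For the second identity I would note that each cycle weight factors as $e^{iS_\Sigma(E)/h}A_\Sigma(E,h)$, with $S_\Sigma$ a smooth real action and $A_\Sigma$ smooth in $E$ of the same $h$-order as the whole weight. Then
\begin{equation*}
h\partial_E\bigl(e^{iS_\Sigma/h}A_\Sigma\bigr)=iS_\Sigma'(E)e^{iS_\Sigma/h}A_\Sigma+h e^{iS_\Sigma/h}\partial_E A_\Sigma
\end{equation*}
retains the $h$-order of the original weight, so the subleading covers stay $O(h^{2/(m_0+1)})$ after $h\partial_E$, and differentiating the explicit $\Gamma_1$ contribution gives $i\mathcal{A}'(E)e^{i\mathcal{A}(E)/h}$. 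The main technical obstacle is to justify that the expansion \eqref{MAAS} of $T_\rho(E,h)$ is smooth in $E$ uniformly on $\cR$ with derivatives of the stated order, which requires revisiting Lemma~\ref{lem:OI-degm} and the construction of Section~\ref{section3.4} carrying $E$ as a smooth parameter and differentiating under the integral sign. No genuinely new ideas beyond that seem necessary.
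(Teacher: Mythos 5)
Your proposal is correct and follows essentially the same route as the paper: expand $\det(I-M)$ as a signed sum over disjoint cycle covers, identify the once-around $\Gamma_1$ loop as the only cycle without sheet-switches (since $\Gamma_2$ is non-trapping and hence carries no closed trajectories), use Theorem~\ref{thm:ConnectionAllowed} to bound every other cover by $\ord(h^{2/(m_0+1)})$, and obtain the derivative estimate by observing that $h\partial_E$ preserves the $h$-order of each cycle weight. The paper states the same cycle-cover formula, the same $\mathscr{P}(\Gamma_1)$ computation, and the same remark about $h\partial_E$ of the transfer-matrix entries, so there is nothing to add.
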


As a consequence of the above asymptotics one immediately gets the following result:
\begin{corollary}\label{cor:pseudoresonance-nearE:bis}
The pseudo-resonances are simple zeros of $\det(I-M(E,h))$ and there exists a bijection $\til{z}_h:\mathfrak{B}_h\to\ope{Res}_0(P)$ such that as $h\to 0^+$ one has, \be\label{eq:pseudo-resonance-near-E}
\til{z}_h(E)=E+\ord(h^{\frac{m_0+3}{m_0+1}}),
\ee
uniformly for any $E\in\mathfrak{B}_h$. 
\end{corollary}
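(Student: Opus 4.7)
The plan is to view $F(z) := \det(I-M(z,h))$ as a perturbation of $G(z) := 1+e^{i\mathcal{A}(z)/h}$, whose real zeros in $\mathcal{R}$ are precisely the elements of $\mathfrak{B}_h$, and to transfer the bijection from $G$ to $F$ via Rouché's theorem. Proposition~\ref{prop:asympt-qc} writes $F(z)=G(z)+R(z,h)$ and $hF'(z)=i\mathcal{A}'(z)e^{i\mathcal{A}(z)/h}+R_1(z,h)$ with $R,R_1=\mathcal{O}(h^{2/(m_0+1)})$ uniformly on $\mathcal{R}$. In particular, at any $E\in\mathfrak{B}_h$ one has $e^{i\mathcal{A}(E)/h}=-1$, hence $G(E)=0$, $G'(E)=-i\mathcal{A}'(E)/h\neq 0$, so these are simple zeros of $G$ with derivative of size $\sim h^{-1}$. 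Since $\mathcal{A}$ is real analytic with $\mathcal{A}'(E_0)\neq 0$ and $\mathcal{R}=[E_0-Lh,E_0+Lh]+i[-Lh,Lh]$, an inverse function argument rules out complex zeros of $G$ in $\mathcal{R}$, so the zero set of $G$ in $\mathcal{R}$ is exactly $\mathfrak{B}_h$.

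Next I set $r_h := Kh^{(m_0+3)/(m_0+1)}$ for a large $h$-independent $K>0$. Because $(m_0+3)/(m_0+1)>1$ while consecutive elements of $\mathfrak{B}_h$ are spaced by $\sim 2\pi h/\mathcal{A}'(E_0)$, the disks $D(E,r_h)$, $E\in\mathfrak{B}_h$, are pairwise disjoint and lie in $\mathcal{R}$ for $h$ small. On $\partial D(E,r_h)$ a Taylor expansion gives
\[
|G(z)|\geq |G'(E)|\,r_h - \tfrac{1}{2}\|G''\|_\infty r_h^2\geq \tfrac{1}{2}\mathcal{A}'(E_0)\,K\, h^{2/(m_0+1)},
\]
while $|R(z,h)|\leq C_0 h^{2/(m_0+1)}$. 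Choosing $K$ large so that $\tfrac{1}{2}\mathcal{A}'(E_0)K>C_0$, Rouché's theorem gives exactly one zero of $F$ in each $D(E,r_h)$, which defines $\tilde z_h(E)$ and yields directly the estimate $|\tilde z_h(E)-E|<r_h=\mathcal{O}(h^{(m_0+3)/(m_0+1)})$ claimed in \eqref{eq:pseudo-resonance-near-E}. Simplicity of this zero follows from $|F'(\tilde z_h(E))|\geq |G'(\tilde z_h(E))|-|R_1/h|\sim h^{-1}$, since the perturbation of $G'$ is of order $h^{2/(m_0+1)-1}$, hence dominated by $\mathcal{A}'(E_0)/h$ for $m_0\geq 1$.

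It remains to check that $F$ has no zero in $\mathcal{R}\setminus\bigsqcup_{E\in\mathfrak{B}_h}D(E,r_h)$, which will make $\tilde z_h$ a bijection onto $\mathrm{Res}_0(P)\cap\mathcal{R}$. For real $x\in[E_0-Lh,E_0+Lh]$ at distance $\geq r_h$ from $\mathfrak{B}_h$, the analytic function $\mathcal{A}(x)/h$ is at distance $\geq c r_h/h = cKh^{2/(m_0+1)}$ from $\pi+2\pi\mathbb{Z}$, giving $|G(x)|\gtrsim Kh^{2/(m_0+1)}$. For $z=x+iy$ with $y\neq 0$ and $|y|\leq Lh$, $|e^{i\mathcal{A}(z)/h}|=e^{-y\mathcal{A}'(x)/h+O(h)}$ moves off the unit circle and again forces $|G(z)|\gtrsim \min(1,|y|/h)+ |G(x)|$; combining, $|G|\gtrsim Kh^{2/(m_0+1)}$ on the whole complement. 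Taking $K$ larger if necessary, $|G|>|R|$ there, so $F$ and $G$ have the same zero count in the complement, namely zero. The bijection and the simplicity assertion of the corollary follow. The only real delicacy is this last uniform lower bound on $|G|$ with the complex imaginary part in $[-Lh,Lh]$; everything else is a direct application of Rouché's theorem to the asymptotics of Proposition~\ref{prop:asympt-qc}.
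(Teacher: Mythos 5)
Your Rouch\'e argument is correct and is precisely the argument the authors have in mind: the paper states the corollary as an ``immediate consequence'' of Proposition~\ref{prop:asympt-qc} without writing out a proof, and the standard way to extract a bijection and the $\ord(h^{(m_0+3)/(m_0+1)})$ displacement from the two asymptotics given there is exactly the Rouch\'e/local-inversion computation you carried out. The one step you flag as delicate --- the uniform lower bound on $|G|$ away from $\mathfrak{B}_h$ for complex $z$ with $|\im z|\le Lh$ --- is most cleanly dispatched by writing $G(z)=1+e^{i\cA(z)/h}=2e^{i\cA(z)/2h}\cos\left(\cA(z)/2h\right)$: the prefactor has modulus $e^{-\im\cA(z)/2h}\asymp 1$ on $\cR$, and since $\cA/2h$ is a biholomorphism onto a region contained in a strip of bounded height with derivative $\asymp h^{-1}$, one gets $|\cos(\cA(z)/2h)|\gtrsim\min\bigl(1,\ope{dist}(z,\mathfrak B_h)/h\bigr)$ directly, without splitting into a real-axis estimate plus an imaginary-part correction. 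This also ties the lower bound transparently to the defining condition \eqref{BSR}. Otherwise your proof matches the intended one, including the use of the $h\p_E\det(I-M)$ asymptotic to guarantee simplicity.
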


As in the standard graph theory, we call a consecutive sequence of edges a \textit{path}, and a path whose initial and terminal point coincide a \textit{cycle} (we identify two cycles by ignoring their initial and terminal point). A cycle is said to be \textit{primitive} if it does not contain any sub-cycles. Let $\mc{C}$ be the set of primitive cycles.
\begin{proof}[Proof of Proposition~\ref{prop:asympt-qc}]
By definition, each non-zero entry of $M(E,h)$ stands for the probability amplitude. According to \eqref{eq:multiPA}, the multiplication of probability amplitudes corresponds to the composition of consecutive generalized classical trajectories. 
Consequently, the determinant $\det(I-M)$ can be written in terms of the probability amplitudes associated with the cycles as follows: 
\be\label{eq:Det-PA}
\det(I-M)=1+\sum_{K}\prod_{{\gamma}\in K}(-\mathscr{P}({\gamma})),
\ee
where the summation is taken over the nonempty subsets $K$ of $\mc{C}$ such that ${\gamma}\cap{\gamma}'=\emptyset$ holds for any ${\gamma},$ ${\gamma}'\in K$ with ${\gamma}\neq{\gamma}'$. 
According to Theorem~\ref{thm:ConnectionAllowed}, for each crossing point $\rho\in\mc{V}$, the $(j,k)$-entry of $T_\rho$ is $1+\ord(h^{2/(m_\rho+1)})$ if $j=k$ and is $\ord(h^{1/(m_\rho+1)})$ if $j\neq k$. Then by the definition of the probability amplitude, 
\be\label{eq:ChangeCycle}
\mathscr{P}({\gamma})=\ord(h^{2/(m_0+1)})
\ee
holds for any primitive directed cycle ${\gamma}$ containing some edges of $\mc{E}_2$. We also have
\be\label{eq:PrinCycle}
\mathscr{P}(\Gamma_1)=(-1+\ord(h^{2/(m_0+1)}))e^{i\cA(E)/h}.
\ee
The representation \eqref{eq:Det-PA} of $\det(I-M)$ with Formulae~\eqref{eq:ChangeCycle} and \eqref{eq:PrinCycle} imply the asymptotic formula for $\det(I-M)$. For the derivative, we apply the fact that multiplying by $h$ the derivative with respect to $E$ of each entry of the transfer matrix $T_\rho$ keeps the same order as itself.
\end{proof}

\begin{example}
Let us consider the model in Subsection~\ref{subSec:App}. We label the three edges as in Figure~\ref{Fig:Connection}, and for each $j\in \{1,2,3\}$ we take the base point $\rho_{\edg_j}=\edg_j^-$. Then the monodromy matrix $M$ is given by $M=T\Phi$, where $T$ and $\Phi$ are the $3\times 3$ matrices given by
\be
\Phi=\exp\left(\frac ih \ope{diag}\left(\cA_{1,L}-\frac{h\pi}{2},\cA_{1,R}-\frac{h\pi}{2},\cA_{2,R}-\frac{h\pi}{2}\right)\right),
\ee
with
\ben
\cA_{1,L}:= 2\int_{a(E)}^0\sqrt{E-V_1(x)}\,dx,\quad
\cA_{1,R}:= 2\int_0^{b(E)}\sqrt{E-V_1(x)}\,dx,
\een
\ben
\cA_{2,R}:=2\int_0^{c(E)}\sqrt{E-V_2(x)}\,dx,
\een
and 
\be
T=
\begin{pmatrix}
0&(T_{\rho_-})_{1,1}&(T_{\rho_-})_{1,2}\\
(T_{\rho_+})_{1,1}&0&0\\
(T_{\rho_+})_{2,1}&0&0
\end{pmatrix}.
\ee
We have $\cA_{1,L}+\cA_{1,R}=\cA$, and $\cA_{1,L}+\cA_{2,R}=S_\gamma$ where $S_\gamma$ is the action along the directed cycle  $\gamma$ defined for this example by \eqref{eq:SforEx}, and obtain
\begin{align*}
\det(I-M)=1+e^{i\cA/h}(T_{\rho_+})_{1,1}(T_{\rho_-})_{1,1}
+e^{i
S_\gamma/h}(T_{\rho_+})_{2,1}(T_{\rho_-})_{1,2}.
\end{align*}
\end{example}
\begin{figure}
\centering
\includegraphics[bb=0 0 1372 443, width=15.5cm]{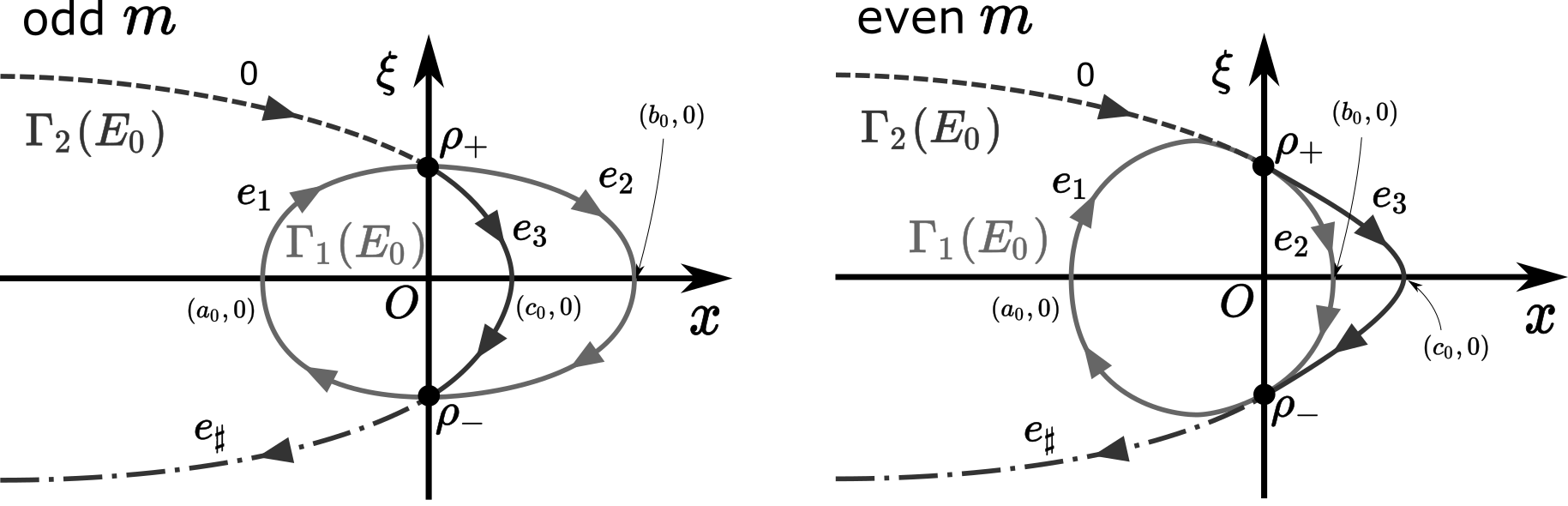}
\caption{Labeling the edges in typical examples}
\label{Fig:Connection}
\end{figure}

\subsection{Bijection between resonances and pseudo-resonances}

In  this subsection, we prove a one to one correspondence between the set of resonances and the  set of pseudo-resonances in the domain $\mathcal R$ and estimate the distance between them.

\begin{proposition}\label{prop:non-ex-res}
For any $h$-independent $C>0$ and for $h$ small enough, there is no resonance outside a $Ch^2$-neighborhood of pseudo-resonances.
\end{proposition}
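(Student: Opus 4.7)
The approach I would take is by contrapositive: I show that any resonance $E\in\cR$ satisfies $\det(I-M(E,h))=\ord(h^\infty)$, which, combined with the lower bound on this determinant derived from Proposition~\ref{prop:asympt-qc}, forces $E$ to lie within an $\ord(h^\infty)$-neighborhood (and hence a fortiori within the $Ch^2$-neighborhood) of some pseudo-resonance.

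Let $E\in\cR$ be a resonance of $P$, and let $w_\theta\in L^2(\R;\C^2)$ be an associated eigenfunction of $P_\theta$ with eigenvalue $E$. Since $E_0$ is non-trapping for $p_2$, the unique trapped set at energy $E_0$ is the closed curve $\Gamma_1(E_0)$, and the assumption $E_0\ne V_j^\pm$ confines the characteristic set of $P_\theta-E$ to a bounded subset of $T^*\R$; standard propagation-of-singularities and elliptic arguments then yield $\ope{WF}_h(w_\theta)\subset\Gamma$. For each edge $\edg\in\mc{E}$, Proposition~\ref{prop:RedScal} provides a coefficient $\alpha_\edg\in\C$ with $w_\theta\equiv\alpha_\edg f_{\rho_\edg}$ near $\rho_\edg$; a nontriviality argument (if every $\alpha_\edg$ were $\ord(h^\infty)$ then $w_\theta$ would be microlocally negligible on its entire wavefront set, hence $\ord(h^\infty)$ in $L^2$) allows me to rescale so that the vector $\alpha=(\alpha_\edg)_{\edg\in\mc{E}}$ satisfies $\|\alpha\|_{\ell^\infty}=1$.

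Next I propagate the microlocal WKB form of $w_\theta$ along each edge, invoking the Maslov-corrected identity \eqref{eq:Maslov-nu1} at each turning point and the transfer matrix of Theorem~\ref{thm:T-exists} at each crossing point. Since $w_\theta$ is an exact solution of $(P_\theta-E)w=0$, assembling the resulting relations at every adjacency produces
\be\label{eq:reduction-system}
(I-M(E,h))\,\alpha=\ord(h^\infty)\,\|\alpha\|_{\ell^\infty}=\ord(h^\infty).
\ee
On the other hand, Proposition~\ref{prop:asympt-qc} gives $|\p_E\det(I-M(E,h))|\gtrsim h^{-1}$ on $\cR$, so a Taylor expansion at a nearest pseudo-resonance $\til E$ yields $|\det(I-M(E,h))|\gtrsim |E-\til E|/h$. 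For $|E-\til E|\ge Ch^2$ with $C$ large enough, the uniform boundedness of the adjugate of $I-M(E,h)$ on $\cR$ gives $\|(I-M(E,h))^{-1}\|\le C'/h$, and then \eqref{eq:reduction-system} forces $1=\|\alpha\|_{\ell^\infty}\le (C'/h)\cdot\ord(h^\infty)=\ord(h^\infty)$, a contradiction for $h$ small.

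The main obstacle is the rigorous derivation of \eqref{eq:reduction-system}: each individual ingredient---WKB propagation along a classical trajectory, Maslov correction at a turning point, and the transfer relation at a crossing---contributes only an $\ord(h^\infty)$ microlocal remainder, but one must control the accumulation of these errors uniformly over the finite graph of edges and cycles of $\Gamma$, including the outgoing branches of $\Gamma_2$ escaping to infinity, where the complex distortion and the non-trapping hypothesis must combine to ensure that no spurious incoming data arises and that the outgoing WKB asymptotics remain valid.
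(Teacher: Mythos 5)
Your argument is essentially the same contradiction argument as the paper's: take a resonance $E$ far from all pseudo-resonances, extract the microlocal coefficient vector $\alpha$ of its resonant state, derive $(I-M(E,h))\alpha=\ord(h^\infty)$, invert $I-M$ using the polynomial lower bound from Proposition~\ref{prop:asympt-qc}, and conclude that the resonant state must be $\ord(h^\infty)$, contradicting its normalization. The only cosmetic difference is that you normalize $\|\alpha\|_{\ell^\infty}=1$ rather than $\|w\|_{L^2}=1$, which adds a small redundancy since the non-triviality of $\alpha$ already relies on the same contradiction; the paper's choice is slightly cleaner.
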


\begin{remark}
In fact, for any real number $n$, there is no resonance outside a $Ch^n$-neighborhood of pseudo-resonances for $h$ small enough.
\end{remark}
\begin{proof}
We proceed by a contradiction argument. 
Suppose that the proposition is not true. Then there would exist a subset $\mathfrak{h}\subset(0,1]$ and maps $E$ and $w$ defined on $\mathfrak{h}$ with values in $\{z\in\cR;\,\ope{dist}(z,\ope{Res}_0(P(h)))>Ch^2\}$ and $L^2(\R;\C^2)$ respectively, such that $0\in\bar{\mathfrak{h}}$, $\|w(h)\|_{L^2}=1$ and 
\be\label{eq:EFofP}
(P_\theta(h)-E(h))w(h)=0\qtext{for}h\in\mathfrak{h}.
\ee
Then the equality \eqref{eq:EVofM} for $w$ holds with a vector $\alpha=(\alpha_\edg)_{\edg\in\mc{E}}$ such that $(I-M(E,h))\alpha=\ord(h^\infty)$. According to Proposition~\ref{prop:asympt-qc}, the inverse $(I-M(E,h))^{-1}$ exists and is polynomially bounded with respect to $h$ for any $E\in\cR$ with $\ope{dist}(E,\ope{Res}_0(P(h)))>Ch^2$, and consequently $w\equiv0$ near $\Gamma(E_0)$. 
This with Formula \eqref{eq:EFofP} implies that $\|w\|_{L^2}=\ord(h^\infty)$ which contradicts with $\|w\|_{L^2}=1$. 
\end{proof}

Now, using Proposition~\ref{prop:non-ex-res} and the Kato-Rellich theorem we prove the unique existence of a resonance in $\ord(h^2)$-neighborhood of each pseudo-resonance: 
\begin{proposition}\label{prop:existsnce-one-to-one:bis}
For each $E\in\mathfrak{B}_h$, there uniquely exists a resonance $z_h(E)$ verifying
\be\label{eq:resonance-near-PR}
\left|z_h(E)-\til{z}_h(E)\right|\le h^2,
\ee
for small $h>0$.
\end{proposition}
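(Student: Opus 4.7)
The plan is to combine Proposition~\ref{prop:non-ex-res} (which confines every resonance in $\cR$ to a union of small disks about pseudo-resonances) with a quasimode construction at each pseudo-resonance, and to close the loop by a counting argument that amounts to a Kato-Rellich / Rouch\'e stability statement for the simple zeros of a characteristic determinant.

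\textbf{Step 1: quasimode at each pseudo-resonance.} Fix $E\in\mathfrak{B}_h$ and write $\til z:=\til z_h(E)$. By Corollary~\ref{cor:pseudoresonance-nearE:bis}, $\til z$ is a simple zero of $\det(I-M(z,h))$, hence $\ker(I-M(\til z,h))$ is one dimensional; pick a unit vector $\alpha=(\alpha_\edg)_{\edg\in\mc E}$ in it. Using the characterization \eqref{eq:EVofM}, I would build a microlocal outgoing solution $w_{\text{mic}}$ to $(P-\til z)w\equiv 0$ on a large enough bounded subset of $\Gamma$ by gluing the WKB germs $\alpha_\edg f_{\rho_\edg}$ across crossings with the transfer matrices of Section~\ref{SEC3}. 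Cut-offs in phase space, in the spirit of the construction \eqref{eq:tilU-A}, together with the ellipticity of $P_\theta-\til z$ outside a neighborhood of $\Gamma$ in the distortion sector, then produce $w_{\til z}\in L^2(\R;\C^2)$ satisfying $\|w_{\til z}\|_{L^2}\ge c>0$ uniformly in $h$ and $(P_\theta-\til z)w_{\til z}=\ord(h^\infty)$ in $L^2$.

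\textbf{Step 2: existence of a resonance in $\{|z-\til z|<h^2\}$.} By Proposition~\ref{prop:non-ex-res}, $(z-P_\theta)^{-1}$ is holomorphic on the circle $\{|z-\til z|=h^2\}$, with norm bounded by $\ord(h^{-N})$ for some $N$ independent of $h$. If no resonance existed inside the open disk, the resolvent would extend holomorphically there, and the subharmonicity of $z\mapsto\|(z-P_\theta)^{-1}\|$ would propagate the boundary bound to $z=\til z$. Applying this bound to the quasimode identity $(P_\theta-\til z)w_{\til z}=\ord(h^\infty)$ forces $\|w_{\til z}\|_{L^2}=\ord(h^\infty)$, contradicting the lower bound from Step~1. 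Hence at least one resonance lies in $\{|z-\til z|<h^2\}$.

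\textbf{Step 3: uniqueness via Kato-Rellich.} To count resonances in the disk, I would analyze the Riesz projector
\[
\Pi_{\til z}=\frac{1}{2\pi i}\oint_{|z-\til z|=h^2}(z-P_\theta)^{-1}\,dz
\]
through a Grushin reduction with auxiliary space $\C^{|\mc E|}$. I would take $R_-\colon\C^{|\mc E|}\to L^2(\R;\C^2)$ mapping each standard basis vector $e_\edg$ to a ``single-edge'' quasimode $\hat w_\edg$ (built from the WKB germ along $\edg$ and cut-offs away from the other edges), and $R_+$ essentially as its microlocal adjoint, extracting the WKB coefficient of an outgoing microlocal solution at the base point of each edge. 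The microlocal computations of Section~\ref{SEC3} yield a well posed Grushin problem in the $h^2$-disk with effective Hamiltonian of the form $E_{-+}(z,h)=c(z,h)(I-M(z,h))+\ord(h^\infty)$, with $c$ holomorphic and bounded away from $0$. On the circle $|z-\til z|=h^2$, Proposition~\ref{prop:asympt-qc} gives $|\det(I-M(z,h))|\asymp h$, which dominates the $\ord(h^\infty)$ perturbation. The Kato-Rellich stability of simple zeros under holomorphic perturbation---equivalently Rouch\'e applied to $\det E_{-+}(\cdot,h)$---produces exactly one zero of $\det E_{-+}(z,h)$, hence by the Schur complement exactly one resonance of $P_\theta$, in the $h^2$-disk, and \eqref{eq:resonance-near-PR} follows.

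\textbf{Main obstacle.} The delicate point is Step~3: setting up the Grushin problem so that the effective Hamiltonian factors cleanly through $I-M(z,h)$ modulo $\ord(h^\infty)$. This requires choosing the single-edge quasimodes $\hat w_\edg$ so that they reproduce, at the level of microlocal coefficients along $\Gamma$, exactly the standard basis of $\C^{|\mc E|}$, and establishing a microlocal-to-$L^2$ duality extending the analysis of Proposition~\ref{prop:MLbehavior} and Lemma~\ref{lem:ApproByEx}. Once this framework is in place, the Kato-Rellich / Rouch\'e conclusion is routine.
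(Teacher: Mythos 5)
Your proposal is essentially correct as a program, but it follows a \emph{different route} from the paper. You argue locally at each pseudo-resonance: build a quasimode, use subharmonicity of $z\mapsto\|(z-P_\theta)^{-1}\|$ to force existence, and then set up a Grushin reduction whose effective Hamiltonian $E_{-+}$ factors through $I-M(z,h)$ so that Rouch\'e counts exactly one resonance in the $h^2$-disk. The paper instead uses a global homotopy argument: it introduces the analytic family $P_\theta(s,h)=P_\theta^{\rm diag}(h)+shQ_\theta$, $|s|<2$, verifies that it is an analytic family of type~(A) in the sense of Kato, observes that Propositions~\ref{prop:asympt-qc} and~\ref{prop:non-ex-res} hold uniformly in $s\in[0,1]$ (the coupling only gets scaled by $s$), and then tracks each eigenvalue branch $z_h(s,E)$ analytically from $s=0$, where $P_\theta(0,h)$ is diagonal and its eigenvalues in $\cR$ are precisely those of $P_1$, already known to lie within $h^2$ of a pseudo-resonance and to be simple. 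Since the pseudo-resonances are $\ord(h)$-separated while the forbidden region has width $\ord(h^2)$, the continuous branch cannot jump between disks, which yields both existence and uniqueness at $s=1$ without any Grushin construction.

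Each approach has a cost. Yours is more self-contained and local, but the two steps you flag as delicate are genuinely so: Step~2 requires a polynomial resolvent bound in the resonance-free annulus (the paper's Proposition~\ref{prop:non-ex-res} is proved by contradiction and does not directly furnish such a bound, so you would have to extract one, e.g.\ by an argument \`a la Tang--Zworski), and Step~3 requires the microlocal-to-$L^2$ Grushin machinery that the paper deliberately sidesteps. The paper's homotopy argument gets existence and uniqueness essentially for free from analytic perturbation theory once the resonance-free region is established uniformly in $s$, at the price of needing a convenient reference operator ($s=0$) whose spectrum is explicitly understood; it also needs the observation that the whole apparatus (pseudo-resonances, free region, separation) applies to $P(s,h)$ for every $s\in[0,1]$, which your local argument does not require. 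In short: your route is more robust in principle but demands more technical input; the paper's route is shorter here because the diagonal reference operator is available.
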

To prove this proposition, we introduce the family  $P_\theta(s,h)$, $s\in\C,\,\left|s\right|<2$ of operators  defined by
\be
\begin{aligned}
&P_\theta(s,h):=P_\theta^{\ope{diag}}(h)+shQ_\theta,\quad
P_\theta^{\ope{diag}}(h):=\mathcal{U}_\theta P^{\ope{diag}}(h) \mathcal{U}_\theta^{-1},\quad
Q_\theta:=\mathcal{U}_\theta Q \mathcal{U}_\theta^{-1},\\
&P^{\ope{diag}}(h):=
\begin{pmatrix}
P_1(h)&0\\0&P_2(h)
\end{pmatrix},\quad
Q:=
\begin{pmatrix}
0&U_1\\U_2&0
\end{pmatrix},\quad
\mathcal{U}_\theta w:=\left|\zeta_\theta'(x)\right|^{1/2}w\circ\zeta_\theta,
\end{aligned}
\ee
where $\zeta_\theta$ is defined in Section~\ref{sec:results}. 
This is first defined for analytic functions in $\Sigma$, and then extended to an operator on $L^2(\R;\C^2)$ by using the analyticity of the coefficients (Assumption~\ref{H1}). 
The operator $P_\theta$ is non-self-adjoint in $L^2(\mathbb R;\mathbb C^2)$, and its spectrum is discrete in $\cR$ due to the assumption $E_0 \neq V_j^\pm$, $j=1,2$ (see \cite[Appendix B]{Hi1}), and is independent of $\theta$. The resonances in $\cR$ of $P$ are equivalently defined as the eigenvalues of $P_\theta$.

For each fixed small $h>0$, this is an analytic family in the sense of Kato, that is, $P_\theta(s)=P_\theta(s,h)$ satisfies the following conditions:
\begin{enumerate}
\item $P_\theta(s)$ is closed and the resolvent set of $P_\theta(s)$ is not empty for each $\left|s\right|<2$.
\item For any $\left|s_0\right|<2$, there exists an element $\lambda_0$ of the resolvent set of $P_\theta(s_0)$ such that $(P_\theta(s)-\lambda_0)^{-1}$ is an analytic operator-valued function of $s$ near $s_0$.
\end{enumerate}
In fact, for fixed small $h$, we can take $C_0>0$ large enough such that 
\be
\left\|\left(P_\theta^{\ope{diag}}-(E_0-iC_0h)\right)^{-1}\right\|_{L^2\to H^2}
\le (h\|Q\|_{H^1\to L^2})^{-1},
\ee
since we know that the spectrum near $E_0$ of $P_\theta(0)$ consists only of the eigenvalues of $P_1$ on the real line, 
the essential spectrum and the resonances of $P_2$ are far at least of order $h\log(1/h)$.
Then the resolvent of $P_\theta(s)$ is given by the Neumann series which depends analytically on $s$:
\ben
(P_\theta(s)-\lambda_0)^{-1}
=\left(I+sh(P_\theta^{\ope{diag}}-\lambda_0)^{-1}Q_\theta\right)^{-1}
(P_\theta^{\ope{diag}}-\lambda_0)^{-1},\quad(\lambda_0=E_0-iC_0h).
\een

Our previous arguments for $P(h)$ and $P_\theta(h)$ remain true for
\ben
P(s,h):=P^{\ope{diag}}+shQ=
\begin{pmatrix}
P_1&shU_1\\shU_2&P_2
\end{pmatrix}\quad(s\in[0,1])
\een 
and $P_\theta(s,h)$ $(s\in[0,1])$ (we may replace $U_1,$ $r_k$ by $sU_1$, $sr_k$ $(k=0,1)$). 
Note that our considering operator and the reference diagonal operator are $P(h)=P(1,h)$ and $P^{\ope{diag}}=P(0)$, respectively. 
Then we conclude by applying the Kato-Rellich theorem 
that for each $E=E(h)\in\mathfrak{B}_h$, there uniquely exists an analytic function $z_h(s,E(h))$ of $s$ such that $z_h(0,E(h))=E(h)+\ord(h^2)$ is an eigenvalue of $P_1$ and $z_h(s,E(h))$ is an eigenvalue of $P_\theta(s,h)$, that is, a resonance of $P(s,h)$.

\begin{proof}[Proof of Proposition~\ref{prop:existsnce-one-to-one:bis}] 
Proposition~\ref{prop:asympt-qc} shows that each pseudo-resonance is away of order $h$ from the others. 
Then the complement in $\cR$ of the resonance free region (shown by Proposition~\ref{prop:non-ex-res} with $C=1$)
\ben
\cR\setminus\{\ope{dist}(z,\ope{Res}_0(P(h)))< h^2\}
\een
is written in the disjoint sum
\ben
\{\ope{dist}(z,\ope{Res}_0(P(h)))< h^2\}
\bigsqcup_{E\in\mathfrak{B}_h}\{\left|z-\til{z}_h(s,E)\right|< h^2\}.
\een
In particular for $s=0$, each eigenvalue $z_h(0,E(h))=E(h)+\ord(h^2)$ of $P_1$ has to satisfy $\left|z_h(0,E(h))-\til{z}_h(0,E)\right|< h^2$. 
We also know that there is no other eigenvalue of the diagonal operator $P_\theta(0,h)$ in $\cR$. Then by the analyticity of $z_h(s,E)$, it is the unique resonance with $\left|z_h(s,E)-\til{z}_h(s,E)\right|< h^2$ for any $s\in[0,1]$. 
\end{proof}

\subsection{Precise width of resonances}\label{sec:precise}
Proposition~\ref{prop:existsnce-one-to-one:bis} with Corollary~\ref{cor:pseudoresonance-nearE:bis} immediately implies that each resonance $z_h(E)$ corresponding to $E\in\mathfrak{B}_h$ admits the asymptotic formula
\be
z_h(E)=E+\ord(h^{\frac{m_0+3}{m_0+1}}).
\ee

The imaginary part of the subprincipal term of $\ord(h^{(m_0+3)/(m_0+1)})$ of this formula would give the width of the resonances.
To obtain this subprincipal  term, we need to compute the subpricipal term of $\ord(h^{2/(m_0+1)})$ of the diagonal entries of the transfer matrices.

However we here employ an alternative method  used in \cite{FMW3}.
It consists in the expression of the width  of resonances in terms of the  resonant state $w$ near infinity. Due to the outgoing property of the resonant state, it is enough to compute the microlocal asymptotics of the resonant state on the outgoing tails of $\Gamma$. We restate this fact in the following lemma.
Let $\rho_1=(x_1,\xi_1)$ [resp. $\rho_2=(x_2,\xi_2)$] be a point on the outgoing tail tending to $-\infty$ [resp. $+\infty$], if it exists, such that $x_1<a_0<b_0<x_2$. 

\begin{lemma}
Let  $f_k$ be the microlocal WKB solution defined in Proposition \ref{prop:mlWKB} near $\rho_k$, whose phase is based on $\rho_k$,  and $\alpha_k$ defined by
\be
\label{wkfk}
w\equiv \alpha_kf_k\text{ near }\rho_k, \quad k=1,2.
\ee
Then we have
\be
\im z=-\frac {h(1+\ord(h))}{\Vert w\Vert^2_{L^2(x_1,x_2)}}(|\alpha_1|^2+|\alpha_2|^2),
\ee
with the convention that $\alpha_k=0$ when $\Gamma_2\cap\{x=x_k\}=\emptyset$.
\end{lemma}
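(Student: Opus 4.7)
The strategy is a Green's identity for the differential operator $P$ on the bounded interval $(x_1,x_2)$, combined with the microlocal WKB description of $w$ at the endpoints. Multiplying $(P-z)w=0$ by $\overline{w}$ and integrating by parts on $(x_1,x_2)$, using that $P$ is formally self-adjoint as a differential operator, one obtains
\[
2i\,(\im z)\,\|w\|^2_{L^2(x_1,x_2)}=B(x_2)-B(x_1),
\]
for a boundary quantity $B$ to be identified. The diagonal part $P_j=(hD_x)^2+V_j$ contributes the usual Wronskian terms $-h^2(\bar w_j w_j'-w_j\bar w_j')$. For the coupling $hU=h r_0+h^2 r_1\partial_x$ and $hU^*=h r_0-h^2 r_1'-h^2 r_1\partial_x$, a short computation (using that $r_0,r_1$ are real) shows that the $r_0$--contributions of $h(Uw_2,w_1)-h(w_1,Uw_2)$ and $h(U^*w_1,w_2)-h(w_2,U^*w_1)$ cancel, while the $r_1$--contributions assemble into the total derivative $h^2\partial_x\bigl(r_1(\bar w_1 w_2-w_1\bar w_2)\bigr)$, leaving
\[
B(x)=-h^2\sum_{j=1,2}\bigl(\bar w_j w_j'-w_j\bar w_j'\bigr)(x)+h^2 r_1(x)\bigl(\bar w_1 w_2-w_1 \bar w_2\bigr)(x).
\]

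Next one inserts the WKB description of $w$ at the endpoints. By the outgoing condition and the choice of $x_k$ outside the bounded region containing $\Gamma_1$, the only point of $\ope{WF}_h(w)$ above $x_k$ is $\rho_k\in\Gamma_2$, so the microlocal identity \eqref{wkfk} lifts to the pointwise asymptotic expansion
\[
w(x)=\alpha_k\begin{pmatrix}\sigma_{2,1}(x,h)\\ \sigma_{2,2}(x,h)\end{pmatrix}e^{i\,\ope{sgn}(\xi_k)\phi_2(x;x_k)/h}+\ord(h^\infty)
\]
in a real neighborhood of $x_k$, where by Proposition~\ref{prop:mlWKB} one has $\sigma_{2,1}=\ord(h)$ and $\sigma_{2,2,0}(x_k)=(E_0-V_2(x_k))^{-1/4}$. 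Differentiating the ansatz and using $\phi_2'(x_k)=\sqrt{E_0-V_2(x_k)}$ together with the normalization $|\sigma_{2,2,0}(x_k)|^2\phi_2'(x_k)=1$ gives
\[
\bigl(\bar w_2 w_2'-w_2\bar w_2'\bigr)(x_k)=\frac{2i}{h}\,\ope{sgn}(\xi_k)\,|\alpha_k|^2\bigl(1+\ord(h)\bigr),
\]
while the $w_1$--Wronskian and the $r_1$--coupling term at $x_k$ are each of size $|\alpha_k|^2\ord(h)$, since $\sigma_{2,1}=\ord(h)$ makes both $w_1$ and $\bar w_1 w_2$ of order $h$. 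The outgoing convention forces $\ope{sgn}(\xi_2)=+1$ and $\ope{sgn}(\xi_1)=-1$, so $B(x_2)-B(x_1)=-2ih(|\alpha_1|^2+|\alpha_2|^2)(1+\ord(h))$, and division by $2i\|w\|^2$ yields the announced formula.

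Finally, if on one side $\Gamma_2\cap\{x=x_k\}=\emptyset$, one may take $x_k$ in a region where both potentials exceed $\re z$; an Agmon-type argument for the distorted resonant state shows that $w$ is then exponentially small near $x_k$, so $B(x_k)=\ord(h^\infty)$, which is consistent with the convention $\alpha_k=0$. The main delicate point is the bookkeeping in the first step: one has to verify that the first-order off-diagonal coupling $U$ contributes only the single $r_1$-boundary term $h^2 r_1(\bar w_1 w_2-w_1\bar w_2)$ to $B$; this term is then automatically negligible at leading order by the WKB fact $\sigma_{2,1}=\ord(h)$, so the computation reduces to the standard scalar flux formula for an outgoing wave of amplitude $\alpha_k$ on $\Gamma_2$.
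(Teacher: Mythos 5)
Your proof is correct and follows essentially the same route as the paper: a Green's identity on $(x_1,x_2)$ yielding the same boundary quantity $-h^2(v_1'\bar v_1+v_2'\bar v_2)+h^2 r_1 v_2\bar v_1$ (up to the $2i$ normalization), followed by the WKB evaluation at $x_k$ using that the wave front set of $w$ above $x_k$ sits only at the outgoing point $\rho_k$ and that the first component is $\ord(h)$ on $\Gamma_2$. The only cosmetic difference is that you spell out the cancellation of the $r_0$-contributions and invoke an Agmon estimate for the empty-intersection case, where the paper simply notes $w=\ord(h^\infty)$ near $x_k$ since $\{x=x_k\}\cap\Gamma=\emptyset$.
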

\begin{proof}
We have, by integration by parts from $\im\langle (P(h)-z)w,w\rangle_{L^2(x_1,x_2)}=0$, 
\be\label{eq:Green}
(\im z)\|w\|^2_{L^2(x_1,x_2)}
=h^2\im\left[-v_1'\bar{v}_1-v_2'\bar{v}_2+r_1v_2\bar{v}_1\right]^{x_2}_{x_1},\quad
w=\begin{pmatrix}v_1\\v_2\end{pmatrix},
\ee
where $[g]_{x_1}^{x_2}=g(x_2)-g(x_1)$. 

For each $k=1,2$, $w=\ord(h^\infty)$ near $x_k$ if $\Gamma_2\cap\{x=x_k\}=\emptyset$. 
Otherwise, the line $\{x=x_k\}$ intersects with $\Gamma_2$ at two points (since $\Gamma_2$ is symmetric with respect to $\{\xi=0\}$), where one is on an outgoing trajectory corresponding to $p_2$ while the other is on an incoming trajectory. 
We denote by $\rho_k=(x_k,\xi_k)$, $k=1,2$, the one on the outgoing trajectory.

Remark that in the subset $\omega_k\times \R_\xi$ of the phase space, where $\omega$ is a small neighborhood of $x_k$, $w$ has its semiclassical wave front set only on the outgoing tail.
This fact together with
the microlocal expression of $w$ on the outgoing tail implies that $w$ admits the local expression in $\omega_k$
\be
w=\alpha_{k}\chi^w f_{k}+\ord(h^\infty)
\ee
where $\chi\in C_0^\infty(T^*\R;\R)$ is identically 1 and supported only near $\rho_k$. 
Since $v_1$ is $\ord (h)$ on $\Gamma_2$, we have 
\ben
-v_1'\bar{v}_1+r_1v_2\bar{v}_1\Big|_{x=x_k}=\ord(h),\quad v_2'\bar{v}_2\Big|_{x=x_k}
=(-1)^k
i h^{-1}\left|\alpha_{k}\right|^2(1+\ord(h)).
\een
Thus we obtain the lemma.
\end{proof}

Thanks to this lemma, it is enough to compute the asymptotics of
$\|w\|_{L^2(x_1,x_2)}$ and $\alpha_k$. 

Let $\edg_0$ be the edge on $\Gamma_1$  such that its endpoint $\edg_0^+$ coincides with the starting point of the outgoing tail containing $\rho_1$ on it. We normalize $w$ in such a way that
\begin{equation}
\label{micronormal}
w\equiv f_{\rho_0}\,\,\,\text{ near }\rho_0,
\end{equation}
where $\rho_0$ is the base  point of $\edg_0$.

The following Proposition asserts that
the microlocal behavior \eqref{micronormal} of $w$ on $\edg_0$ uniquely determines that on all other edges, if we ``cut'' the periodic trajectory $\Gamma_1$ at the point $\rho_0$: 
Let $\ope{Path}(\rho_1,\rho_2)$ be the set of all generalized classical trajectories starting from $\rho_1$ and ending at $\rho_2$ without passing through $\rho_0$. This is an infinite set in general.

\begin{lemma}\label{lem:PAC}
Let $w$ be a resonant state satisfying \eqref{micronormal}.
Then at any point $\rho$ on $\Gamma\setminus \mathcal V$, different from $\rho_0$, we have the following microlocal expression of $w$:
\be\label{eq:PA-M}
w\equiv\alpha_\rho f_{\rho}\qtext{ near}\rho\qtext{with}
\alpha_{\rho}=\sum_{{\gamma}\in\ope{Path}(\rho_0,\rho)}\mathscr{P}({\gamma}).
\ee
The sum on the right hand side of \eqref{eq:PA-M}
absolutely converges  for $h$ small enough.
\end{lemma}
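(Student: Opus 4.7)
The plan is to iteratively propagate the microlocal normalization \eqref{micronormal} around the graph $\Gamma$ cut at $\rho_0$, applying Theorem~\ref{thm:T-exists} at each crossing encountered and reading off $\alpha_\rho$ as the resulting branching expansion. By Proposition~\ref{prop:RedScal}, on each edge $\edg\in\mc{E}$ the space of microlocal solutions to $(P-E)w\equiv 0$ is one-dimensional, so there is a unique $\alpha_\edg\in\C$ such that $w\equiv\alpha_\edg f_{\rho_\edg}$ near $\edg$. The Maslov-type formula \eqref{eq:Maslov-nu1} then shows that translating the base point along an edge multiplies the coefficient by the probability amplitude of the traversed sub-trajectory, so the identity \eqref{eq:PA-M} for $\rho\in\edg$ reduces to the corresponding identity for $\alpha_\edg$.

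To set up the recursion for $(\alpha_\edg)_{\edg\in\mc E}$, I would use that at each crossing $\rho_c\in\mc{V}$, Theorem~\ref{thm:T-exists} together with \eqref{transfer} expresses the two outgoing-edge coefficients (measured with base point $\rho_c$) as $T_{\rho_c}$ applied to the two incoming ones; after re-basing via the Maslov formula, this becomes a linear relation among the $\alpha_\edg$'s at $\rho_c$ whose entries are products of transfer-matrix entries and propagation factors. Using \eqref{micronormal} as initial data $\alpha_{\edg_0}=1$ and treating $\rho_0$ as a sink, so that no coefficient flows back through it, one unfolds this recursion into $\alpha_\edg=\sum_\gamma\mathscr{P}(\gamma)$, where the sum runs over paths from $\rho_0$ to $\rho_\edg$ in the cut graph; by Definition~\ref{def:PA} this is exactly the claimed formula (after absorbing the final edge propagation factor via the base-point change).

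The main difficulty will be the absolute convergence of the series, since the cut graph still admits arbitrarily long paths: one may shuttle back and forth between $\Gamma_1$ and $\Gamma_2$ through distinct crossings. Theorem~\ref{thm:ConnectionAllowed} gives $T_\rho=\Id_2+\ord(h^{1/(m_\rho+1)})$ with off-diagonal entries of size $h^{1/(m_\rho+1)}$, while the propagation factors $\exp(iS_\gamma/h-i\pi\nu/2)$ have modulus $1+\ord(h)$ on $\cR$ (since $\im E=\ord(h)$ there). Hence any path with $k$ off-diagonal transitions contributes an amplitude of order at most $h^{k/(m_0+1)}$. It then suffices to check that the number of such paths grows at most geometrically in $k$: this uses that $\mc{V}$ is finite, that $\Gamma_2$ is non-trapping so each sub-path on a $\Gamma_2$-component has uniformly bounded length, and crucially that cutting $\Gamma_1$ at $\rho_0$ prevents a full $\Gamma_1$-loop, which is precisely the geometric-series mechanism that would otherwise produce the $(I-M)^{-1}$ singularity at pseudo-resonances. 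Together these estimates give absolute convergence of $\sum_\gamma\mathscr{P}(\gamma)$ for $h$ small enough, completing the proof.
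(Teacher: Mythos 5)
Your proposal is correct and takes essentially the same approach as the paper: the paper packages your path-by-path recursion as the truncated monodromy matrix $\til M=(I-E_{\edg_0,\edg_0})M$, derives $\alpha=(I-\til M)^{-1}\dl_{\edg_0}$, and proves convergence of the Neumann series $\sum_{k\ge0}\til M^k$ by observing that for $k$ large enough every $k$-step path must make an off-diagonal transition (so the spectral radius of $\til M$ is $<1$ for small $h$), which is the matrix form of your geometric path-counting estimate. The only thing to tighten in your write-up is the bookkeeping for the $O(k)$ diagonal and propagation factors of size $1+\ord(h^{2/(m_0+1)})$ and $1+\ord(h)$ along a path with $k$ off-diagonal switches: their product is $e^{\ord(k\,h^{2/(m_0+1)})}$, which does not spoil the geometric decay once combined with the $(Ch^{1/(m_0+1)})^k$ factor, but it should be noted rather than absorbed silently into the claimed $h^{k/(m_0+1)}$ bound.
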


\begin{proof}
Put $\til{M}:=(I-E_{\edg_0,\edg_0})M$ where $E_{\edg_0,\edg_0}$ stands for the matrix unit. 
Then for $k\ge1$, each $(\edg,\edg')$-entry of $\til{M}^k$ stands for the sum of the probability amplitude associated with generalized classical trajectories from $\rho_{\edg'}$ to $\rho_{\edg}$ passing through $k$-crossing points without coming back to $\rho_0$:
\be\label{eq:tilMk}
(\til{M}^k)_{\edg,\edg'}=\sum_{\substack{\gamma\in\ope{Path}(\rho_{\edg'},\rho_\edg),\\\#\{t;\,\gamma(t)\in\mc{V}\}=k}}\mathscr{P}(\gamma).
\ee 
For $k$ large enough, such a generalized classical trajectory always intersects with both $\Gamma_1$ and $\Gamma_2$ since $\rho_0\in\Gamma_1$. This implies that the absolute value of each eigenvalue of $\til{M}$ is smaller than one, and $(I-\til{M})^{-1}=\sum_{k\ge0}\til{M}^k$. Then for any $\edg$, 
\be\label{eq:waf}
w\equiv \alpha_\edg f_{\rho_\edg}\qtext{ near}\rho_\edg
\ee
holds with $\alpha=(\alpha_\edg)_{\edg\in\mc{E}}$ given by
\ben
\alpha:=(I-\til{M})^{-1}\dl_{\edg_0},\quad\dl_{\edg_0}=(\dl_{\edg_0}(\edg))_{\edg\in\mc{E}}\qtext{(Kronecker's delta).}
\een
According to \eqref{eq:tilMk}, we have
\ben
\alpha_\edg=\sum_{\gamma\in\ope{Path}(\rho_0,\rho_\edg)}\mathscr{P}(\gamma),
\een
for any $\edg\in\mc{E}$. This with \eqref{eq:waf} shows the required formula \eqref{eq:PA-M} when $\rho$ is a point on an edge since the base point can be chosen arbitrarily. Apply \eqref{eq:multiPA} for otherwise.
\end{proof}

In the same way as in \cite{FMW3}, for a resonant state $w$ satisfying \eqref{micronormal}, we have
\be
\left\|w\right\|_{L^2(x_1,x_2)}^2=2\cA'(E)+\ord(h^{1/3}+h^{1/(m_0+1)}),
\ee
where the error $\ord(h^{1/3})$ comes from the turning point while $\ord(h^{1/(m_0+1)})$ comes from the crossing points (in \cite{FMW3}, the later is $\ord(h)$ $(m_0=1)$ and contained in the former). 
We finally obtain the formula
\be\label{eq:imzh}
\im z_h(E)=-\frac{h}{2\left|\cA'(E_0)\right|}\sum_{k=1,2}\Bigl|\sum_{\gamma\in\ope{Path}(\rho_0,\rho_k)}\mathscr{P}(\gamma)\Bigr|^2+\ord(h^{\frac{m_0+3}{m_0+1}+\kappa}),
\ee
where $\kappa=\min\{1/3,1/(m_0+1)\}$. 
This is of $\ord(h^{\frac{m_0+3}{m_0+1}})$ since each generalized classical trajectory changes between $\Gamma_1$ and $\Gamma_2$ at least one time. 

The principal terms in the sum of the right  hand side of \eqref{eq:imzh} come from the generalized classical trajectories
which change trajectory once from $\Gamma_1$ to $\Gamma_2$.
Therefore we have the following proposition.
\begin{proposition}\label{prop:widths}
The coefficient $\mc{D}$ of the principal term of the resonance width admits the formula
\be\label{eq:Conclusion}
\mc{D}(E,h)=\frac{h^{-2/(m_0+1)}}{2\left|\cA'(E_0)\right|}\sum_{k=1,2}\left|\sum_{\gamma\in\ope{Path}^1(\rho_0,\rho_k)}\mathscr{P}(\gamma)\right|^2,
\ee
where $\ope{Path}^1(\rho_0,\rho_k)$ is the set of $\gamma\in\ope{Path}(\rho_0,\rho_k)$  (see before Lemma \ref{lem:PAC} for the definition) which changes trajectory from $\Gamma_1$ to $\Gamma_2$ exactly once.
Here the sum over $k$ is for such a $k$ that the corresponding outgoing tail exists.
\end{proposition}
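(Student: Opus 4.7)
The plan is to start from formula \eqref{eq:imzh}, which was derived just above using the normalization \eqref{micronormal}, the $L^2$-norm asymptotics $\|w\|_{L^2(x_1,x_2)}^2 = 2\cA'(E) + \ord(h^{1/3}+h^{1/(m_0+1)})$, and the microlocal representation of Lemma~\ref{lem:PAC} at the outgoing points $\rho_1,\rho_2$. It remains to show that restricting the inner sum from $\ope{Path}(\rho_0,\rho_k)$ to $\ope{Path}^1(\rho_0,\rho_k)$ only changes $\im z_h(E)$ by an absorbable error, which is precisely the content of \eqref{eq:Conclusion}.

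First I would attach to each generalized classical trajectory $\gamma$ the integer $N(\gamma)$ counting the number of crossing points at which $\gamma$ switches between $\Gamma_1$ and $\Gamma_2$. By Theorem~\ref{thm:ConnectionAllowed}, the transfer matrix at $\rho$ takes the form $T_\rho = \ope{Id}_2 + h^{1/(m_\rho+1)}T_{\ope{sub}} + \ord(h^{2/(m_\rho+1)})$ with $T_{\ope{sub}}$ anti-diagonal. Hence a diagonal entry $\tau_{\gamma_n}$ (no switch at $\rho_n$) equals $1 + \ord(h^{2/(m_0+1)})$, while an off-diagonal entry (switch at $\rho_n$) is $\ord(h^{1/(m_0+1)})$. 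Combining with the fact that action phase factors $e^{iS_\gamma/h}$ and Maslov factors $e^{-i\pi\nu/2}$ have modulus one, Definition~\ref{def:PA} gives the uniform bound $\mathscr{P}(\gamma) = \ord(h^{N(\gamma)/(m_0+1)})$.

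Since $\rho_0 \in \Gamma_1$ while $\rho_k$ sits on an outgoing tail of $\Gamma_2$, every $\gamma \in \ope{Path}(\rho_0,\rho_k)$ satisfies $N(\gamma) \geq 1$, with equality exactly when $\gamma \in \ope{Path}^1(\rho_0,\rho_k)$. Observing that $\ope{Path}^1(\rho_0,\rho_k)$ is finite (after its unique switch the path must follow the outgoing tail on $\Gamma_2$, and between $\rho_0$ and the switching crossing the path traverses a finite arc of $\Gamma_1$ without closing up), I would write
\begin{equation*}
\sum_{\gamma \in \ope{Path}(\rho_0,\rho_k)} \mathscr{P}(\gamma) \;=\; \sum_{\gamma \in \ope{Path}^1(\rho_0,\rho_k)} \mathscr{P}(\gamma) \;+\; R_k(E,h),
\end{equation*}
where the remainder $R_k$ gathers the contribution of paths with $N(\gamma) \geq 2$. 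To bound $R_k$ by $\ord(h^{2/(m_0+1)})$ uniformly, I would use the same geometric-series mechanism as in the proof of Lemma~\ref{lem:PAC}: writing the sum over multi-switch paths as entries of $\sum_{k\geq 2}\til{M}^k = \til{M}^2(I-\til{M})^{-1}$, the factor $\til{M}^2$ is of order $h^{2/(m_0+1)}$ (by the argument of the previous paragraph applied to two switches), while $(I-\til{M})^{-1}$ is uniformly bounded since the spectral radius of $\til{M}$ is strictly less than one for small $h$.

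The final step is a straightforward expansion. Setting $S_k := \sum_{\gamma \in \ope{Path}^1(\rho_0,\rho_k)}\mathscr{P}(\gamma)$ with $|S_k| = \ord(h^{1/(m_0+1)})$ and $|R_k| = \ord(h^{2/(m_0+1)})$, one gets
\begin{equation*}
\left|S_k + R_k\right|^2 = |S_k|^2 + 2\re(\bar S_k R_k) + |R_k|^2 = |S_k|^2 + \ord(h^{3/(m_0+1)}).
\end{equation*}
Substituting into \eqref{eq:imzh} and multiplying by $h$ yields
\begin{equation*}
\im z_h(E) = -\frac{h}{2|\cA'(E_0)|}\sum_{k=1,2} |S_k|^2 + \ord(h^{(m_0+3)/(m_0+1)+\kappa}),
\end{equation*}
where $\kappa = \min\{1/3, 1/(m_0+1)\}$ absorbs both the turning-point error in $\|w\|^2_{L^2(x_1,x_2)}$ and the $h^{3/(m_0+1)}$ error from the cross term. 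Identification with the defining expansion $\im z_h(E) = -\mathcal{D}(E) h^{(m_0+3)/(m_0+1)} + \ord(h^{(m_0+3)/(m_0+1)+\kappa})$ from Theorem~\ref{MAINTH} then gives \eqref{eq:Conclusion}. The only delicate point I anticipate is making the geometric-series bound on $R_k$ rigorous while tracking orders in $h$ uniformly in $E \in \cR$; this should follow verbatim from the argument in the proof of Lemma~\ref{lem:PAC}, but requires checking that each factor of $\til{M}$ really produces a gain of $h^{1/(m_0+1)}$ and not just $\ord(1)$.
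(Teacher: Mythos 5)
Your overall plan is the right one and matches what the paper implicitly does: start from \eqref{eq:imzh}, split the inner sum over $\ope{Path}(\rho_0,\rho_k)$ into the $\ope{Path}^1$ contribution $S_k$ and a remainder $R_k$, show $R_k=\ord(h^{2/(m_0+1)})$, then expand $|S_k+R_k|^2$. Your second paragraph (grouping paths by the number of switches $N(\gamma)$ and observing $\mathscr{P}(\gamma)=\ord(h^{N(\gamma)/(m_0+1)})$) is exactly the right mechanism, and the final order bookkeeping is correct.

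However, the concrete bound on $R_k$ via $\sum_{l\ge2}\til M^l=\til M^2(I-\til M)^{-1}$ is wrong. The powers of $\til M$ count the number of crossing points a path passes through, \emph{not} the number of switches. In particular $\til M$ itself is \emph{not} $\ord(h^{1/(m_0+1)})$: its entries between two edges of the same $\Gamma_j$ involve a diagonal entry of a transfer matrix and are $1+\ord(h^{2/(m_0+1)})$, hence $\ord(1)$. Consequently $(\til M^2)_{\edg_k,\edg_0}$ is only $\ord(h^{1/(m_0+1)})$ — it still contains contributions from paths that pass through two crossing points but switch only once, and these belong to $\ope{Path}^1$, not to $R_k$. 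So the identification of $R_k$ with $\sum_{l\ge2}(\til M^l)_{\edg_k,\edg_0}$ is simply incorrect, and the claimed $\ord(h^{2/(m_0+1)})$ for $\til M^2$ fails.

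The fix is to separate switching from non-switching explicitly. Write $\til M=D+O$, where $D$ collects the entries coming from diagonal transfer-matrix coefficients (no switch: $D=\ord(1)$) and $O$ collects the anti-diagonal ones (switch: $O=\ord(h^{1/(m_0+1)})$). Since the $\edg_0$-row of $\til M$ is removed, a non-switching path cannot loop, so $D$ is nilpotent and $(I-D)^{-1}$ is a finite sum, uniformly bounded in $h$. Then
\[
(I-\til M)^{-1}=(I-D)^{-1}\sum_{n\ge0}\bigl(O(I-D)^{-1}\bigr)^n,
\]
and the $(\edg_k,\edg_0)$-entry, with $\edg_0\subset\Gamma_1$ and $\edg_k\subset\Gamma_2$, receives contributions only from $n\ge1$. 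The $n=1$ term is exactly $S_k=\ord(h^{1/(m_0+1)})$, and the tail $n\ge2$ is $\ord(h^{2/(m_0+1)})$ by geometric summation in the small operator $O(I-D)^{-1}=\ord(h^{1/(m_0+1)})$. Equivalently, one can bound $R_k$ directly: each path with $N$ switches is determined by its sequence of switching crossing points (the arcs between them are forced by the flow and the exclusion of $\rho_0$), so there are at most $(\#\mc V)^N$ such paths, each contributing $\ord((C h^{1/(m_0+1)})^N)$; the sum over $N\ge2$ is geometric and gives $\ord(h^{2/(m_0+1)})$. Either of these replaces your faulty $\til M^2$ estimate and makes the rest of the argument go through.
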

\begin{remark}
The constant $\mc{D}(E,h)$ defined by \eqref{eq:Conclusion}
is obviously independent of the choice of $\rho_0$, $\rho_1$ and $\rho_2$ because of the absolute value taken for the sum over the general classical trajectories with the same starting and ending points. This absolute value is of order $h^{1/(m_0+1)}$ because they change characteristic set once, and hence $\mc{D}(E,h)$ is bounded as $h\to 0^+$.
\end{remark}

\section*{Acknowledgement}
A part of this work was done during the stay of
the second and the third authors in VIASM (Vietnam Institute of Advanced Study in Mathematics). They thank VIASM for the financial support and the hospitality. The second author was also supported by the JSPS KAKENHI Grant Number 18K03384. The third author was supported by the Grant-in-Aid for JSPS Fellows Grant Number 22J00430. The first author acknowledges the financial support of the CONICYT FONDECYT Grant Number 3180390.

\end{document}